\documentclass[onecolumn,ldraftcls, 11pt]{IEEEtran}

\usepackage[includefoot,left=15mm,right=15mm,top = 15mm, bottom=15mm]{geometry}
\usepackage{xcolor}
\usepackage{cancel}
\definecolor{plotblue}{HTML}{1F77B4}   
\definecolor{plotorange}{HTML}{FF7F0E} %
\usepackage{pgfplots}
\pgfplotsset{compat=newest}
\usepackage{float}
\usepackage[utf8]{inputenc} 
\usepackage[T1]{fontenc}
\usepackage{url}
\usepackage{ifthen}
\usepackage{cite}
\usepackage{algorithm}
\usepackage{algpseudocode}

\usepackage[cmex10]{amsmath}
\usepackage{wrapfig}
\usepackage{tikz}
\usepackage{circuitikz}
\usepackage{multirow}
\usepackage{makecell}
\usetikzlibrary{shapes, arrows, positioning, patterns, fit, backgrounds}

\usepackage{cite}
\usepackage{setspace}
\usepackage[colorlinks=false]{hyperref}%
\usepackage{mleftright}       
\mleftright                   
\usepackage{subcaption}
\usepackage{graphicx}         
\usepackage{booktabs}         
\usepackage{amsthm}
 \usepackage{graphicx}
\usepackage{amsmath}
\usepackage{graphics,graphicx,psfrag,color,float}
\usepackage{epsfig,bbold,bbm}
\usepackage{bm}
\usepackage{mathtools}
\usepackage{amsmath}
\usepackage{amssymb}
\usepackage{amsfonts}
\usepackage{amsfonts}
\usepackage {times}
\usepackage{bbm}
\usepackage{soul}
\usepackage{cleveref}
\usepackage{enumitem}
\usepackage{xcolor}
\usepackage{txfonts}
\makeatletter
\def\namedlabel#1#2{\begingroup
    #2%
    \def\@currentlabel{#2}%
    \phantomsection\label{#1}\endgroup
}
\makeatother
\DeclareUnicodeCharacter{0325}{}

\def\QED{\mbox{\rule[0pt]{1.5ex}{1.5ex}}}

\newtheorem{theorem}{\bf{Theorem}}

\newtheorem{assumption}{Assumption}

\newcommand{\RNum}[1]{\uppercase\expandafter{\romannumeral #1\relax}}

\newcommand{\beq}{\begin{equation*}}
\newcommand{\enq}{\end{equation*}}
\newcommand{\bel}{\begin{lemma}}
\newcommand{\enl}{\end{lemma}}

\newcommand{\bet}{\begin{theorem}}
\newcommand{\ent}{\end{theorem}}

\newcommand{\tr}{\mathrm{Tr}}

\newcommand{\nn}{\nonumber}

\newcommand{\customfootnotetext}[2]{{
  \renewcommand{\thefootnote}{#1}
  \footnotetext[0]{#2}}}

\newcommand*{\cH}{\mathcal{H}}

\newcommand*{\cB}{\mathcal{B}}
\newcommand*{\cD}{\mathcal{D}}

\newcommand*{\cK}{\mathcal{K}}
\newcommand*{\cL}{\mathcal{L}}

\newcommand*{\cT}{\mathcal{T}}

\newcommand*{\cW}{\mathcal{W}}

\mathchardef\mhyphen="2D

\newcommand{\abs}[1]{\left\vert#1\right\vert}

\makeatletter
\newcommand*{\rom}[1]{\expandafter\@slowromancap\romannumeral #1@}
\makeatother
\mathchardef\mhyphen="2D
\newlist{steps}{enumerate}{1}
\setlist[steps, 1]{leftmargin = 1.1cm, label = Step \arabic*.}
\newtheorem{remark}{Remark}
\newtheorem{definition}{Definition}

\usepackage{makecell}

\newtheorem{lemma}{Lemma}
\newtheorem{corollary}{Corollary}
\newtheorem{proposition}{Proposition}

\usepackage[utf8]{inputenc} 
\usepackage[T1]{fontenc}
\usepackage{url}
\usepackage{ifthen}
\usepackage[cmex10]{amsmath}
\usepackage{amsfonts} 
\usepackage{times}
\usepackage{pdflscape}

\allowdisplaybreaks

\date{}

\usepackage{textcomp}
\DeclareUnicodeCharacter{2212}{\textminus}
\begin{document}
\title{Uniqueness, Cram\'er--Rao Efficiency and Concentration Bounds for Quantum U-Statistics}
\author{
Ayanava Dasgupta\textsuperscript{$*$},  Naqueeb Ahmad Warsi\textsuperscript{$*$} and  Premanshu Chatterjee\textsuperscript{$*$}

}

\customfootnotetext{$*$}{
Indian Statistical Institute,
Kolkata 700108, India.
Email: 
{\sf 
 [ayanavadasgupta\_r, naqueebwarsi, premanshuchatterjee\_r]@isical.ac.in
}
}
\maketitle
\begin{abstract}
   We study unbiased estimation of scalar-valued polynomial functionals of quantum states from independent copies. We establish an equivalence between the first-order marginal of a permutation-invariant finite-copy observable and the functional gradient. We then prove that, among unbiased permutation-invariant estimators, the quantum U-statistic is the unique extension to an arbitrary number of copies. We further derive a universal variance expansion in which the leading $1/n$ term is determined by the variance of the functional gradient, while higher-order contributions are of order $O(1/n^2)$. This leading variance coincides with the multiparameter quantum Cram\'er--Rao limit, establishing asymptotic efficiency of quantum U-statistics. We also characterize the higher-order scaling at points where the variance of the first-order gradient vanishes. As an application, we analyze the Bures $\chi^2$-divergence and show that a spectral lower bound on the reference state is sufficient but not necessary for bounded-variance estimation. Beyond asymptotic variance, we establish variance-sensitive exponential concentration bounds, deriving a closed-form Bernstein-type inequality to capture finite-sample tail behaviour, and establish the Moderate Deviation Principle to characterize the intermediate asymptotic regime. 
\end{abstract}

\section{Introduction}

A central problem in quantum statistical inference and quantum information theory is determining the properties of an unknown quantum state, represented by a density matrix $\rho$ acting on a finite-dimensional Hilbert space $\mathcal{H}$ \cite{helstrom1976quantum, Holevo1982, paris2009quantum, hayashi2006quantum}. While full quantum state tomography provides a complete description of the state, it requires experimental resources that scale exponentially with the system size \cite{OW16, Haah_2017, flammia2011direct, blume2010optimal, gross2010quantum, smolin2012efficient, cramer2010efficient, Guta_2018}. In many practical scenarios, the objective is to estimate a specific scalar-valued functional of an unknown quantum state, and therefore complete state reconstruction may be unnecessary \cite{Ekert_2002, huang2020predicting, aaronson2018shadow}. Such functionals naturally encompass a broad class of physically and information-theoretically significant quantities, including purities, generic polynomial evaluations \cite{Rath_2021, ZT25, Wang_2026, bovino2005direct}, and various quantum divergences \cite{Keyl_2001, BOW19}.

Mathematically, any scalar-valued polynomial functional $f(\rho)$ can be expressed as a linear combination of trace expressions in which the unknown density matrix {$\rho$} is interleaved with known, fixed coefficient matrices $A^{(i)}_j$,
\begin{equation*}
    f(\rho) = c_0 + \sum_{i=1}^{k} \mathrm{Tr}\left[ \prod_{j=1}^{i} \big(A^{(i)}_j \rho\big) \right]\,.
\end{equation*}

This representation provides a convenient operator form for studying polynomial functionals and their estimation from multiple copies of an unknown quantum state. By simply choosing the appropriate coefficient matrices, any arbitrary polynomial functional of a quantum state can be naturally reduced to this form. Our primary objective is to study the global estimation of such functionals given $n$ independent and identically distributed copies ($\rho^{\otimes n}$) of an unknown quantum state.

From a physical measurement perspective, estimating a degree-$k$ polynomial functional $f(\rho)$ naturally begins with identifying a self-adjoint observable acting on $k$ copies of the state. If the expectation value of this observable equals the target functional, we refer to it as a \emph{kernel} for the functional $f(\rho)$. When $n$ $(>k)$ copies are available, however, restricting the measurement to a single $k$-copy kernel does not fully exploit the available data. One can instead extend the kernel to a \emph{global observable} acting on all the $n$ copies, thereby enabling the collective use of the available samples \cite{massar1995optimal}. In this paper, we combine the framework of quantum U-statistics with multiparameter Cramér--Rao theory to study the global estimation of polynomial functionals. We establish the geometric structure of the resulting multi-copy observables and characterize their fundamental asymptotic variance. For non-asymptotic finitely many samples, we present an exponential probability concentration bound that is variance-sensitive.

Our results are organized around four main contributions. First, we identify a direct geometric link between the partial trace of these permutation-invariant, finite-copy kernels (called the marginal kernel) and the gradient of the target functional. Second, we prove that the quantum U-statistic \cite{GB2010} is the unique permutation-invariant unbiased extension of a finite-copy kernel to an arbitrary number of copies. Third, using the marginal-gradient correspondence, we derive a universal expression for the leading-order variance and show that it coincides with the relevant quantum Cramér--Rao limit \cite{braunstein1994statistical, fujiwara2006strong}. In particular, the resulting U-statistics are asymptotically Cramér--Rao efficient without requiring state-dependent measurements, preliminary tomography, or adaptive procedures. Fourth, we apply our framework to the estimation of the Bures $\chi^2$-divergence and show that the assumption ($\lambda_{\min}(\sigma)\geq\delta>0$) on the reference state ($\sigma$) imposed in \cite{OW16} can be replaced by the substantially weaker requirement of bounded variance of the functional gradient. Finally, we also present a variance-sensitive exponential probability concentration bound to characterize the non-asymptotic finite-sample tail behaviour. We simplify the bound further to get a closed-form Bernstein-type probability concentration bound, and study the tail behaviour in the moderate deviation regime.

\subsection{Equivalence of Marginal Kernels and Gradients}

Firstly, we observe that when extending a local kernel to a global observable, it is sufficient to restrict attention to permutation-invariant estimators. Indeed, the quantum extension of the Rao--Blackwell theorem \cite{Holevo1982} implies that averaging an estimator over permutation symmetries cannot increase its variance. Thus, restricting our analysis to globally permutation-invariant multi-copy estimators entails no loss of statistical efficiency.

To analyze the variance of these global estimators, we first examine their behaviour on fewer subsystems. In particular, we consider the \textit{marginalization} of the local \emph{kernel}, obtained by tracing out the remaining subsystems with respect to the state $\rho$. This marginalization captures the action of the multi-copy kernel on a specified subset of subsystems and, in particular, allows us to identify its first-order marginal. Understanding this first-order marginal is central to relating the multi-copy kernel to the local geometry of the target functional.

A main contribution of our work is proving a direct geometric link between the first-order marginal kernel of a physical multi-copy observable (kernel) and the mathematical gradient of the target functional. We show that when a mixed-degree polynomial functional is mapped to a permutation-invariant $k$-copy kernel $O^{\text{sym}}_k$ using an identity-padding process, its scaled first-order marginal kernel $kO^{\text{sym}}_{k,1}$ perfectly matches the gradient $\nabla f(\rho)$ up to a scalar multiple of the identity matrix. Specifically, we prove that $$k O^{\text{sym}}_{k,1} = \nabla f(\rho) + C(\rho)\mathbb{I}.$$ Here, $k$ is the number of copies of $\rho$ the kernel $O^{\text{sym}}_k$ acts on, and the first-order marginal kernel $O^{\text{sym}}_{k,1}$ is computed by tracing out $(k-1)$ subsystems of $O^{\text{sym}}_k$, and $C(\rho)\mathbb{I}$ is an extra scalar shift that appears since we embed lower-degree polynomials into a higher-dimensional space.

Moreover, in quantum mechanics, any valid change to a density matrix must conserve total probability, meaning the physical perturbation $X$ must be traceless ($\mathrm{Tr}[X] = 0$). Due to this, the extra identity shift disappears entirely when we calculate the inner product, resulting in 
$$\mathrm{Tr}[X (k O^{\text{sym}}_{k,1})] = \mathrm{Tr}[X \nabla f(\rho)].$$ 

This result directly connects the physical partial trace operation defining the marginal kernel to the mathematical functional derivative. It proves that building a permutation-invariant multi-copy observable naturally captures the local geometry of the target polynomial functional.

\subsection{The Uniqueness of Quantum U-Statistics}

While a local $k$-copy kernel provides a natural estimation strategy, in practice we typically have access to $n>k$ independent copies of the unknown state. It is therefore natural to extend the $k$-copy kernel to an $n$-copy permutation-invariant observable, thereby making use of all available copies.

The standard construction for this extension is the quantum U-statistic, which averages the $k$-copy kernel uniformly over all $k$-element subsets of the $n$ copies \cite{GB2010}. This raises a fundamental question: is the quantum U-statistic merely one possible unbiased permutation-invariant extension of a local $k$-copy kernel, or is the extension uniquely determined?

We show that the latter is true. Specifically, we prove that the quantum U-statistic is the unique permutation-invariant unbiased $n$-copy extension of a given $k$-copy kernel. The key ingredient is a structural property of the permutation-invariant operator space: it is spanned by identical tensor powers of the form $A^{\otimes n}$. Consequently, a permutation-invariant Hermitian operator whose expectation vanishes for every density operator must itself be the zero operator.

To establish uniqueness, consider the difference between the quantum U-statistic and any other permutation-invariant unbiased $n$-copy estimator of the same functional. This difference is permutation-invariant and has zero expectation for every density operator. The preceding property therefore implies that the difference vanishes identically. Hence, once the local $k$-copy kernel and permutation invariance are fixed, the unbiased $n$-copy extension is uniquely determined and is precisely the quantum U-statistic.

\subsection{Universal Variance Limit and Asymptotic Cram\'er-Rao Efficiency} 
The uniqueness result established in this manuscript has an important implication for the statistical analysis of global estimators for scalar-valued polynomial functionals. Since any unbiased $n$-copy permutation-invariant observable must coincide with the corresponding quantum U-statistic, there is no freedom to choose a different unbiased permutation-invariant extension with a different variance. The statistical performance of all such global estimators is therefore completely determined by the variance structure of quantum U-statistics. Utilizing the variance analysis of quantum U-statistics via the Hoeffding decomposition studied in \cite{GB2010}, we establish that the ultimate asymptotic variance limit for estimating $f(\rho)$ with any unbiased permutation-invariant observable $O^{\text{sym}}_n$ is universally governed by the intrinsic variance of the gradient. Thus, 
$$\mathrm{Var}_{\rho^{\otimes n}}(O^{\text{sym}}_n) = \frac{1}{n} \mathrm{Var}_{\rho}(\nabla f(\rho)) + \mathcal{O}\left(\frac{1}{n^2}\right).$$

To evaluate how well our global permutation-invariant estimators perform, we connect this universal variance limit to the fundamental bounds of quantum statistical inference. In multiparameter quantum estimation theory, the ultimate limit on precision is given by the Quantum Cramér-Rao Bound (QCRB). This limit is determined by the Symmetric Logarithmic Derivative (SLD) operators and the Quantum Fisher Information (QFI) matrix \cite{Helstrom1967, Helstrom1968, Boixo_2007, Suzuki_2020, Hayashi_2025}.

A central result of this manuscript proves that any unbiased $n$-copy permutation-invariant observable built for a polynomial functional is asymptotically Cramér-Rao efficient. To achieve this, we adapt the local multi-parameter estimation framework with nuisance parameters developed by Suzuki et al. \cite{Suzuki_2020}. While their comprehensive framework accommodates various definitions of Quantum Fisher Information based on different types of logarithmic derivatives, we specifically adapt the formulation that uses the Symmetric Logarithmic Derivative (SLD), also utilised in \cite{Hayashi_2025}. Under this framework, to bound the variance of a target functional tightly without penalties from other parameter fluctuations, the parameter space must meet strict orthogonality conditions. Under this orthogonal setup, the variance bound for estimating the functional simplifies exactly to the inverse variance of the primary, idealized single-copy SLD operator.

We prove that for any scalar-valued polynomial functional, this primary SLD operator is directly proportional to the gradient of the functional. Since of this, the single-copy Cramér-Rao bound equals the state variance of the gradient operator, $V_{\mathrm{CR}} = \mathrm{Var}_{\rho}(\nabla f(\rho))$. Since Quantum Fisher Information is additive across independent subsystems \cite{hayashi2006quantum}, the asymptotic variance limit for any unbiased $n$-copy measurement strategy scales as $\frac{1}{n}\mathrm{Var}_{\rho}(\nabla f(\rho))$.

This variance limit also highlights a practical limitation of standard local quantum estimation. Attaining the optimal local precision requires measuring in the eigenbasis of the primary SLD operator. However, this operator depends on the functional gradient $\nabla f(\rho)$ and therefore on the unknown state $\rho$ itself. The optimal measurement is thus state-dependent and cannot be specified without prior information about the state being estimated \cite{hayashi2006quantum}. In practice, overcoming this state dependence typically requires adaptive estimation protocols, in which preliminary measurements are used to obtain an estimate of $\rho$ and subsequently configure the measurement toward the corresponding optimal SLD basis.

Global $n$-copy permutation-invariant U-statistic observables bypass this difficulty. Since a U-statistic is constructed directly from the algebraic definition of the target functional, its construction is state-independent, and the resulting estimator is unbiased for every $\rho$. It therefore requires neither preliminary tomography nor adaptive calibration. The variance analysis of quantum U-statistics developed in \cite{GB2010} shows that the leading-order contribution to the variance is determined entirely by the first-order marginal of the underlying kernel, while contributions from higher-order marginals enter only at order $\mathcal{O}(1/n^2)$. The first-order marginal--gradient equivalence established above identifies this leading contribution with $\frac{1}{n}\operatorname{Var}_{\rho}(\nabla f(\rho))$, which coincides with the asymptotic Cramér--Rao limit.

As $n$ tends to infinity, the higher-order contributions become negligible relative to the leading $1/n$ term. Consequently, the global permutation-invariant estimator attains the fundamental quantum variance limit at leading order. Together with the uniqueness result, this establishes that the quantum U-statistic is the unique unbiased permutation-invariant estimator and is asymptotically Cramér--Rao efficient for every polynomial functional.

\subsection{Relaxing Spectral Assumptions: The Bures $\chi^2$-Divergence}
We apply our results to the estimation of various fundamental quantum information-theoretic quantities. In particular, we analyze the statistical estimation of the Bures $\chi^2$-divergence \cite{Bures1969A,Uhlmann1976} between an unknown state $\rho$ and a fully known reference state $\sigma$ \cite{BOW19}. The Bures $\chi^2$-divergence is structurally defined as $$\chi^2_{\mathrm{B}}(\rho\|\sigma) = \mathrm{Tr}[\rho \Omega_\sigma(\rho)] - 1,$$ where $\Omega_\sigma$ represents an inverse symmetric logarithmic derivative super-operator.

Previous research in the literature, notably the seminal work by Bădescu, O'Donnell, and Wright \cite{BOW19}, proposed unbiased estimators for this specific divergence utilizing quantum U-statistics. However, to mathematically guarantee that the statistical variance of their proposed estimator remains bounded, the authors in \cite{BOW19} imposed a strict spectral restriction on the reference state. They required that the minimum eigenvalue of the reference state $\sigma$ be strictly bounded away from zero ($\lambda_{\min}(\sigma) \ge \delta > 0$).

In this paper, we show that while this strict lower-bound assumption on the spectrum of $\sigma$ is sufficient to guarantee bounded variance, it is not necessary. We argue that demanding boundedness of the intrinsic quantum variance of the functional gradient—expressed as $\mathrm{Var}_{\rho}\big(\nabla \chi^2_{\mathrm{B}}(\rho\|\sigma)\big)$—provides a much weaker, more natural, and physically meaningful condition for statistical analysis. 

To establish this claim, we derive a novel, exact continuous-time integral representation for the Bures $\chi^2$-divergence. By identifying the symmetric logarithmic derivative relation to the continuous-time Lyapunov equation and deploying its integral solution, we show that the divergence can be evaluated as $$\chi^2_{\mathrm{B}}(\rho \| \sigma) = 2 \int_0^\infty \mathrm{Tr}[(\rho e^{-\tau\sigma})^2] d\tau - 1.$$ 

Utilizing this continuous-time integral formulation, we explicitly construct parametric families of pairs of quantum states $\{(\rho_n,\sigma_n)\}_n$ where $\lambda_{\mathrm{min}}(\sigma_n)\to0$ asymptotically. For this pair of quantum states, we prove that both the divergence and its estimation variance remain uniformly bounded despite $\lambda_{\mathrm{min}}(\sigma_n)\to0$. 

Thus, we demonstrate that a spectral lower bound on the reference state can be replaced by a condition directly expressed in terms of the variance of the functional gradient, substantially enlarging the regime in which efficient estimation can be established.

\subsection{Variance-Sensitive Finite-Sample Concentration}
The asymptotic Cram\'er--Rao efficiency shows the best possible performance of quantum U-statistics when the sample size is infinitely large. However, real quantum parameter estimation works with a limited number of samples. With a finite sample size, just knowing the expected variance is not enough. To guarantee reliable results, we need strict high-probability tail bounds. Standard concentration inequalities usually rely only on the spectral norm of the observable, although recent work by De Palma and Pastorello \cite{De_Palma_2025} has explored quantum local norm metrics to improve this. Otherwise, these standard bounds are often loose and do not accurately capture the true statistical fluctuations of the quantum state. 

To fix this, we prove an exponential two-sided probability concentration bound. This new bound explicitly keeps the exact finite-sample variance of the quantum U-statistic in its leading quadratic term. We do this by using a moment-expansion approach designed for the specific combinatorial structure of permutation-invariant U-statistic kernels. We break the moment expansion across all possible overlapping subsets, and map these subsets onto intersection graphs followed by a spanning tree counting argument. This allows us to cleanly separate the exact second-order variance while strictly bounding the complicated higher-order connected moments. 

Using this exact variance, we derive a closed-form, variance-sensitive Bernstein-type inequality. This clear formula separates the typical Gaussian fluctuations, which depend on the true variance, from the worst-case extreme errors driven by the spectral norm. It successfully avoids the need for any extra local norm metrics. Finally, we test these bounds along a dynamically scaling deviation sequence to establish the Moderate Deviation Principle for quantum U-statistics. We show that when errors shrink to zero slower than the standard $1/\sqrt{n}$ parametric rate, the non-Gaussian higher-order terms completely vanish. This proves that the empirical distribution keeps a purely Gaussian tail behaviour governed only by the intrinsic quantum Fisher information (QFI). This result bridges the theoretical gap between the small scale of the Central Limit Theorem and true Large Deviations.

The rest of the paper is organized as follows. In Section \ref{sec:kernels}, we introduce finite-copy representations of monomial and generic polynomial functionals, formalize the corresponding permutation-invariant kernels, and establish the first-order marginal--gradient equivalence. Section \ref{sec:U_stat_unique} introduces quantum U-statistics and proves that, for a given polynomial functional, the quantum U-statistic is the unique unbiased permutation-invariant global extension of the corresponding local kernel. In Section \ref{sec:variance_optimality}, we combine the uniqueness of U-statistic observables with the local multiparameter estimation framework to characterize the quantum Cram\'er--Rao limit and prove asymptotic efficiency of the resulting estimators. We then apply the variance characterization to several quantum information-theoretic quantities, including the Bures $\chi^2$-divergence, state purity, and squared Hilbert--Schmidt distance. In Section \ref{sec:concentration}, we establish variance-sensitive exponential concentration bounds for quantum U-statistics, deriving a closed-form Bernstein-type inequality and proving the Moderate Deviation Principle. Finally, the Appendices contain technical proofs and further discussion, including the connection between quantum marginal kernels and classical conditional expectations.

\section{First-Order Marginals and Gradients of Polynomial Functionals}
\label{sec:kernels}

Consider the problem of estimating a scalar-valued polynomial functional $f(\rho)$ where the maximum polynomial degree to the input of the functional is $k$, and where $\rho\in\cD(\cH)$ is unknown. In quantum statistical estimation, these functionals can be written as a linear combination of interleaved monomials,
\begin{equation}
\label{eq:generic_poly}
    f(\rho) = c_0 + \sum_{i=1}^{k} \mathrm{Tr}\left[ \prod_{j=1}^{i} \big(A^{(i)}_j \rho\big) \right],
\end{equation}
where $k$ is the maximum polynomial degree to the input of the functional, $c_0$ is a scalar constant, and each $A^{(i)}_j$ is a known coefficient matrix. In this structure, the state $\rho$ alternates with the known operators $A^{(i)}_j$ inside the trace. Any scalar-valued polynomial functional of $\rho$ can be reduced to this form by choosing the appropriate coefficient operators and using the cyclic property of the trace. 

To motivate this  representation, we demonstrate how several fundamental quantum information-theoretic measures reduce to this exact algebraic structure for specific choices of the known coefficients:
\begin{itemize}
    \item \textbf{Purity:} The purity of a quantum state, $\mathrm{Pur}(\rho) = \mathrm{Tr}[\rho^2]$, is a homogeneous functional where the polynomial degree to the input of the functional is $2$. It is recovered by setting the maximum polynomial degree to the input of the functional to $k=2$, the constant $c_0 = 0$, the first-order coefficient $A^{(1)}_1 = 0$, and the second-order coefficients to the identity matrix, $A^{(2)}_1 = A^{(2)}_2 = \mathbb{I}$.
    
    \item \textbf{$k^{\text{th}}$-Order Purity:} Generalizing the above, the $k^{\text{th}}$-order purity $\mathrm{Tr}[\rho^k]$ (which defines the (exponential) integer Rényi-$k$ entropy) is recovered by setting $c_0 = 0$, all coefficients corresponding to a lower polynomial degree to the input of the functional to zero (i.e., $A^{(i)}_j=0, \forall i\in\{1,2,\ldots,k-1\}, \forall j\in\{1,2,\ldots,i\}$), and $A^{(k)}_j = \mathbb{I}$ for all $j \in \{1, \dots, k\}$.
    
    \item \textbf{Squared Hilbert-Schmidt Distance:} Given a fully known reference state $\sigma$, the squared Hilbert-Schmidt distance is $D_{\mathrm{HS}}^2(\rho, \sigma) = \mathrm{Tr}[(\rho-\sigma)^2] = \mathrm{Tr}[\sigma^2] - 2\mathrm{Tr}[\sigma\rho] + \mathrm{Tr}[\rho^2]$. This is a polynomial functional where the polynomial degree to the input of the functional is $2$, obtained by setting $c_0 = \mathrm{Tr}[\sigma^2]$, the first-order coefficient $A^{(1)}_1 = -2\sigma$, and the second-order coefficients $A^{(2)}_1 = A^{(2)}_2 = \mathbb{I}$.
    
    \item \textbf{Maximal $\chi^2$-Divergence:} For a full-rank reference state $\sigma$, the maximal $\chi^2$-divergence is defined as $\chi^2_{\mathrm{max}}(\rho\|\sigma) = \mathrm{Tr}[\rho^2 \sigma^{-1}] - 1$. This maps directly to Equation \eqref{eq:generic_poly} by setting the maximum polynomial degree to the input of the functional to $k=2$, the constant shift $c_0 = -1$, the first-order coefficient $A^{(1)}_1 = 0$, and the second-order coefficients as $A^{(2)}_1 = \sigma^{-1}$ and $A^{(2)}_2 = \mathbb{I}$ (since $\mathrm{Tr}[\sigma^{-1}\rho\mathbb{I}\rho] = \mathrm{Tr}[\rho^2\sigma^{-1}]$).

    \item \textbf{Sandwiched $\chi^2$-Divergence:} Also requiring a full-rank reference state $\sigma$, the sandwiched $\chi^2$-divergence is given by $\chi^2_{\mathrm{sand}}(\rho\|\sigma) = \mathrm{Tr}\big[ \rho \sigma^{-1/2} \rho \sigma^{-1/2} \big] - 1$. This perfectly mirrors our polynomial representation by simply setting the maximum polynomial degree to the input of the functional to $k=2$, the constant $c_0 = -1$, the first-order coefficient $A^{(1)}_1 = 0$, and both second-order coefficients to $A^{(2)}_1 = A^{(2)}_2 = \sigma^{-1/2}$.

    \item \textbf{Bures $\chi^2$-Divergence:} We show in Lemma \ref{lem:integral_measured_chi2} in Subsection \ref{subsec:app_bures_chi} that the Bures $\chi^2$-divergence has the following integral representation,
    \begin{equation*}
        \chi^2_{\mathrm{B}}(\rho \| \sigma) = 2 \int_0^\infty \mathrm{Tr}[(\rho e^{-\tau\sigma})^2] d\tau - 1.
    \end{equation*}
    
    When $\sigma$ is a fully known, full-rank reference state, it ceases to be a statistical variable. By writing $\sigma$ in its spectral decomposition $\sigma = \sum_x \lambda_x |x\rangle\langle x|$ and substituting it into the trace, we can exactly evaluate the integral over the known deterministic parameters:
    \begin{equation*}
        \int_0^\infty \mathrm{Tr}[\rho e^{-\tau\sigma} \rho e^{-\tau\sigma}] d\tau = \sum_{x,y} \left( \int_0^\infty e^{-\tau(\lambda_x + \lambda_y)} d\tau \right) |\langle x|\rho|y\rangle|^2 = \sum_{x,y} \frac{1}{\lambda_x + \lambda_y} \mathrm{Tr}\big[\rho |y\rangle\langle y| \rho |x\rangle\langle x|\big].
    \end{equation*}
    
    Since the eigenvalues $\lambda_x$ and eigenvectors $|x\rangle$ are fixed constants, this integral simplifies to a finite linear combination of quadratic terms in the unknown state $\rho$. Consequently, under the assumption of a known reference state, the divergence rigorously reduces to a polynomial functional, where the polynomial degree to the input of the functional is $2$, perfectly satisfying the generic polynomial form defined in Equation \eqref{eq:generic_poly}.
    
    \item \textbf{Krylov Shadow Tomography:} In \cite{ZT25}, the authors study the problem of estimating Quantum Fisher Information (QFI) by defining a quantity $B_n^{\mathrm{(Kry)}}$ which converges to QFI. This quantity is constructed using a sequence of polynomials $\{T_k\}_{k=0}^{n-1}$,  where $T_k$ as shown in \cite[Equation (S41)]{ZT25}, for a fully-known Hermitian operator $H$ has the following form,
        \begin{equation}
            T_k = \frac{1}{2^k} \sum_{l=0}^{k+2} \mu_l^{(k)} \mathrm{Tr} (H \rho^l H \rho^{k-l+2}),\label{eq:Tk_general}
        \end{equation}
    where $\mu_l^{(k)} := \binom{k}{l} - 2\binom{k}{l-1} + \binom{k}{l-2}$.
    It is easy to see that the form of Equation \eqref{eq:Tk_general} is the same as that of Equation \eqref{eq:generic_poly}.
\end{itemize}

More complex measures, such as (exponential) Rényi divergences of integer order, can similarly be expressed as linear combinations of these interleaved monomial trace structures by expanding them in the eigenbasis of the reference state. 

Suppose we have access to \(n\) independent and identically distributed copies of \(\rho\), with \(n>k\). We first discuss how a homogeneous monomial can be represented as the expectation value of an observable acting jointly on multiple copies of the state. This representation will subsequently provide the basic building block for constructing an estimator from all available copies.

\subsection{Permutation-Invariant, Finite-Copy Local Kernels for Monomial Functionals}\label{subsec:homo_monomial}

We first consider the simpler case of a single, homogeneous monomial functional of degree $m$, $$f_{\cB}(\rho):= \tr[\cB \rho^m],$$ where $\cB$ is a known observable and $m < n$.

To represent this degree-$m$ polynomial functional in terms of an $m$-copy kernel, we use the cyclic permutation operator $P_{\gamma_m}$ acting on $\mathcal{H}^{\otimes m}$. Let $\gamma_m = (1\; 2\; \dots\; m) \in S_m$ be the backward cyclic permutation such that the operator $P_{\gamma_m}$ acts on the product basis as
\begin{equation*}
   P_{\gamma_m} |x_1, x_2, \dots, x_m\rangle = |x_2, x_3, \dots, x_m, x_1\rangle.
\end{equation*} 
 For any $m$ operators $A_1, A_2, \dots, A_m$, the above operator $P_{\gamma_m}$ satisfies  the following trace identity
\begin{equation*}
    \mathrm{Tr}\big[ P_{\gamma_m} (A_1 \otimes A_2 \otimes \dots \otimes A_m) \big] = \mathrm{Tr}[A_1 A_2 \dots A_m].
\end{equation*}
This holds since evaluating the tensor trace cyclically connects the subsystem inner products (e.g., $\langle x_{i-1} | A_i | x_i \rangle$), combining the independent operators into a single matrix multiplication within a global trace. The above trace identity generalizes the familiar SWAP ($\mathbb{S}$) identity

$$
\tr[\mathbb{S}(X\otimes Y)]=\tr[XY].
$$

Applying this trace identity, with
$
A_1=\mathcal B,
$ and $
A_2=\cdots=A_m=I,
$ and noting that $
(\mathcal B\otimes I^{\otimes(m-1)})\rho^{\otimes m}
=
\mathcal B\rho\otimes\rho\otimes\cdots\otimes\rho,
$
we obtain  
\begin{equation}
    f_{\cB}(\rho) = \tr\left[P_{\gamma_m} (\cB\otimes I^{\otimes (m-1)}) (\rho^{\otimes m})\right]. 
\end{equation}
Note that $P_{\gamma_m} (\cB\otimes I^{\otimes (m-1)})$ need not be self-adjoint and therefore does not, in general, correspond directly to an observable. To address this issue, we instead consider the self-adjoint counterpart of the operator $O_{\cB}$,
\begin{equation}
    O_{\cB}  = \frac{P_{\gamma_m} (\cB\otimes I^{\otimes (m-1)}) + (\cB\otimes I^{\otimes (m-1)})P_{\gamma_m}}{2}.\label{eq:mono_kernel_cons}
\end{equation}

Since $f_{\cB}(\rho)$ is real valued, $O_{\cB}$ satisfies $$\tr[O_{\cB}\rho^{\otimes m}] = f_{\cB}(\rho).$$
Thus $O_\cB$ serves as an observable realization of the monomial $f_{\cB}$ acting on $m$-copies. 

There is, however, a useful freedom in the choice of this observable. Since the state $\rho^{\otimes m}$ is permutation-invariant, we may average $O_{\cB}$ over the permutation group $S_m$. Operationally, this corresponds to applying a \emph{twirling channel} over $S_m$ to define a permutation-invariant observable $O^{\mathrm{sym}}_{\cB}$,
\begin{equation}
    O^{\mathrm{sym}}_{\cB} := \frac{1}{m!}\sum\limits_{\pi\in S_m}P_{\pi^{-1}}O_{\cB}P_{\pi}.\label{eq:sym_kernel}
\end{equation}
Since this twirling channel leaves the permutation-invariant state $\rho^{\otimes m}$ unaffected, this symmetrization strictly preserves the expectation value; that is,
\begin{equation}
     \mathrm{Tr}[O^{\mathrm{sym}}_{\cB} \rho^{\otimes m}] =  f_{\cB}(\rho). 
\end{equation}

The symmetrization in \eqref{eq:sym_kernel} employs the same fundamental variance-reduction principle established by the classical Rao--Blackwell theorem \cite{Rao1945, Blackwell1947}, which was later extended to the quantum domain by Holevo \cite{Holevo1982}. Namely, averaging an estimator over a symmetry that leaves the underlying state invariant preserves its expectation while reducing its variance. In the present setting, the averaging is performed over the permutation group \(S_m\). Since \(\rho^{\otimes m}\) is invariant under every permutation of its subsystems, permutation averaging leaves the expectation value unchanged. Moreover, by the operator convexity of \(X\mapsto X^2\), this averaging cannot increase the second moment, and hence cannot increase the intrinsic quantum variance. We formalize this observation in the following lemma.

\begin{proposition}[Variance Reduction via Symmetrization]\label{prop:rao_blackwell}
    Let $O_{\cB}$ be an asymmetric $m$-copy kernel for a monomial functional $f(\rho) = \tr[\cB \rho^m]$, and let ${O^{\mathrm{sym}}_{\cB}}$ be its fully permutation-invariant equivalent as defined in Equation \eqref{eq:sym_kernel}. For any product state $\rho^{\otimes m}$, the intrinsic quantum variance satisfies,
    \begin{equation*}
        \mathrm{Var}_{\rho^{\otimes m}}({O^{\mathrm{sym}}_{\cB}}) \le \mathrm{Var}_{\rho^{\otimes m}}(O_{\cB})\,.
    \end{equation*}
\end{proposition}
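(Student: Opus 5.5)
The argument rests on two facts: twirling preserves the mean, and the square map is operator convex, so averaging cannot raise the second moment. Write $\mathrm{Var}_{\omega}(X) = \tr[X^2\omega] - (\tr[X\omega])^2$ for Hermitian $X$ and $\omega = \rho^{\otimes m}$. Define the twirl $\mathcal{T}(X) := \frac{1}{m!}\sum_{\pi\in S_m} P_{\pi^{-1}} X P_{\pi}$, so that $O^{\mathrm{sym}}_{\cB} = \mathcal{T}(O_{\cB})$. Each $P_\pi$ is unitary with $P_{\pi^{-1}} = P_\pi^\dagger$. Hence every term $P_\pi^\dagger O_{\cB} P_\pi$ is Hermitian with the same spectrum as $O_{\cB}$, and $\mathcal{T}$ is a unital, trace-preserving, completely positive map.

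\textbf{Step 1 (means agree).} Since $P_\pi \rho^{\otimes m} P_\pi^\dagger = \rho^{\otimes m}$, cyclicity of the trace gives $\tr[P_\pi^\dagger O_{\cB} P_\pi \rho^{\otimes m}] = \tr[O_{\cB}\rho^{\otimes m}]$ for every $\pi$. Averaging over $\pi$ gives $\tr[O^{\mathrm{sym}}_{\cB}\rho^{\otimes m}] = f_{\cB}(\rho)$, which is already recorded in the text preceding the proposition. The squared-mean terms in the two variances therefore coincide, and it remains to compare second moments.

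\textbf{Step 2 (operator inequality for second moments).} The main step is to show
\begin{equation*}
\Big(\tfrac{1}{m!}\textstyle\sum_\pi X_\pi\Big)^2 \le \tfrac{1}{m!}\textstyle\sum_\pi X_\pi^2, \qquad X_\pi := P_\pi^\dagger O_{\cB} P_\pi ,
\end{equation*}
in the Loewner order. This is operator convexity of $t\mapsto t^2$, which can be proved directly by an elementary expansion. Set $\bar X = \frac{1}{m!}\sum_\pi X_\pi$. Then
\begin{equation*}
0 \le \frac{1}{m!}\sum_\pi (X_\pi - \bar X)^2 = \frac{1}{m!}\sum_\pi X_\pi^2 - \bar X^2 ,
\end{equation*}
because each $(X_\pi-\bar X)^2$ is positive semidefinite, the $X_\pi$ being Hermitian. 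Alternatively, one may cite the Kadison--Schwarz inequality $\mathcal{T}(X)^2 \le \mathcal{T}(X^2)$ for the unital CP map $\mathcal{T}$.

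\textbf{Step 3 (evaluate on the state).} Take the trace of the inequality from Step 2 against $\rho^{\otimes m}\ge 0$. Since $X_\pi^2 = P_\pi^\dagger O_{\cB}^2 P_\pi$, permutation invariance of $\rho^{\otimes m}$ (as in Step 1) gives $\tr[X_\pi^2\rho^{\otimes m}] = \tr[O_{\cB}^2\rho^{\otimes m}]$. Therefore $\tr[(O^{\mathrm{sym}}_{\cB})^2\rho^{\otimes m}] \le \tr[O_{\cB}^2\rho^{\otimes m}]$. Subtracting the common squared mean from Step 1 yields the claimed variance inequality.

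The only delicate point is Step 2, since operator squares do not satisfy the naive scalar inequality without care. The explicit identity $\frac{1}{m!}\sum_\pi (X_\pi-\bar X)^2 = \frac{1}{m!}\sum_\pi X_\pi^2 - \bar X^2$ handles this cleanly. Its only hypothesis is the Hermiticity of $O_{\cB}$, which holds because of the symmetrization in \eqref{eq:mono_kernel_cons}.
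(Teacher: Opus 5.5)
Your proposal is correct and follows the paper's proof step for step. Both arguments get equality of means from $P_\pi\rho^{\otimes m}P_\pi^\dagger=\rho^{\otimes m}$, then apply the operator Jensen inequality for $X\mapsto X^2$, then evaluate against the permutation-invariant state using $(P_\pi^\dagger O_{\cB}P_\pi)^2=P_\pi^\dagger O_{\cB}^2P_\pi$. The only difference is that you verify the Jensen step directly through the identity $\frac{1}{m!}\sum_\pi (X_\pi-\bar X)^2=\frac{1}{m!}\sum_\pi X_\pi^2-\bar X^2$, whereas the paper cites Kadison's inequality.

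One small caveat concerns your closing remark. Hermiticity of $O_{\cB}$ should be taken from the hypothesis that it is a kernel, i.e.\ an observable. The operator in \eqref{eq:mono_kernel_cons}, as literally written, is self-adjoint only for $m\le 2$, because its adjoint involves $P_{\gamma_m}^\dagger=P_{\gamma_m}^{-1}\neq P_{\gamma_m}$ when $m\ge 3$.
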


\begin{proof}
    Since the state $\rho^{\otimes m}$ is a tensor product of identical states, it perfectly commutes with any permutation operator, i.e., $P_\pi \rho^{\otimes m} P_\pi^\dagger = \rho^{\otimes m}, \forall \pi\in S_m$. Consequently, we have 
    $$ \tr
[{O^{\mathrm{sym}}_{\cB}}\rho^{\otimes m}]
=
\frac{1}{m!}
\sum_{\pi\in S_m}
\tr
[P_\pi^\dagger O_{\cB} P_\pi\rho^{\otimes m}]\
=
\frac{1}{m!}
\sum_{\pi\in S_m}
\tr
[O_{\cB} P_\pi\rho^{\otimes m}P_\pi^\dagger]\
=
\tr
[O_{\cB}\rho^{\otimes m}]\,.$$
    
    To compare the variances, it suffices to compare the second moments. Since the map $g(X): X \mapsto X^2$ is operator-convex, we apply the discrete operator Jensen's inequality (Kadison's inequality \cite{Kadison1952}) to the permutation-invariant observable,
    \begin{equation*}
        {\left(O^{\mathrm{sym}}_{\cB}\right)}^2 = \left( \frac{1}{m!} \sum_{\pi \in S_m} P_\pi^\dagger O_{\cB} P_\pi \right)^2 \le \frac{1}{m!} \sum_{\pi \in S_m} \big(P_\pi^\dagger O_{\cB} P_\pi\big)^2.
    \end{equation*}
    Since $P_\pi$ is unitary ($P_\pi P_\pi^\dagger = \mathbb{I}$), we have $$\big(P_\pi^\dagger O_{\cB} P_\pi\big)^2 = P_\pi^\dagger O_{\cB}^2 P_\pi.$$
    Taking the expectation value with respect to $\rho^{\otimes m}$ yields,
    \begin{align*}
        \mathrm{Tr}\big[ {\left(O^{\mathrm{sym}}_{\cB}\right)}^2 \rho^{\otimes m} \big] &\le \frac{1}{m!} \sum_{\pi \in S_m} \mathrm{Tr}\big[ P_\pi^\dagger O_{\cB}^2 P_\pi \rho^{\otimes m} \big] \\
        &= \frac{1}{m!} \sum_{\pi \in S_m} \mathrm{Tr}\big[ O_{\cB}^2 P_\pi \rho^{\otimes m} P_\pi^\dagger \big] \\
        &= \frac{1}{m!} \sum_{\pi \in S_m} \mathrm{Tr}\big[ O_{\cB}^2 \,\rho^{\otimes m} \big] \\
        &= \mathrm{Tr}\big[ O_{\cB}^2 \,\rho^{\otimes m} \big]\,.
    \end{align*}
    Subtracting the identical squared expectations from both sides yields $\mathrm{Var}_{\rho^{\otimes m}}({O^{\mathrm{sym}}_{\cB}}) \le \mathrm{Var}_{\rho^{\otimes m}}(O_{\cB})$.
\end{proof}

\begin{remark}
    The variance reduction demonstrated in Proposition \ref{prop:rao_blackwell} acts as a specialized, finite-sample realization of the quantum Rao-Blackwell theorem. While the foundational concepts of quantum sufficient statistics and Rao-Blackwell bounds were established in the seminal work of Holevo \cite{Holevo1982}, the theorem has been recently expanded into abstract operator algebras by Sinha \cite{Sinha2022} and formalized through Generalized Conditional Expectations (GCEs) by Tsang \cite{Tsang2023}. Rather than relying on this heavy $C^*$-algebraic machinery or spectral factorization, we provide a self-contained derivation. By specializing the symmetrization operation to a finite-group twirling channel (see Equation \eqref{eq:sym_kernel}), we obtain a direct and constructive proof via Kadison's inequality \cite{Kadison1952} that is mathematically immediate and concretely tailored to the polynomial estimation framework developed in this paper.
\end{remark}

The preceding construction and discussion motivates the following definition.

\begin{definition}
Given a density operator \(\rho\in\mathcal D(\mathcal H)\), an observable \(O_m\) acting on \(\mathcal H^{\otimes m}\) is called an \(m\)-copy kernel of a real-valued polynomial functional \(f(\rho)\) if
$$
\tr[O_m\rho^{\otimes m}]
=
f(\rho).
$$
\end{definition}

Thus, the permutation-invariant observable \(O^{\text{sym}}_m\), constructed by applying the twirling map to \(O_m\) as demonstrated in Equation \eqref{eq:sym_kernel}, remains a valid \(m\)-copy kernel of \(f(\rho)\). Since symmetrization preserves the expectation value and does not increase the intrinsic quantum variance, we may, without loss of statistical efficiency, restrict our subsequent analysis to permutation-invariant kernels.
The notion of an \(m\)-copy kernel provides the connection between the algebraic functional \(f(\rho)\) and an observable that can be measured on \(m\) copies of the state. For the monomial considered above, \(O^{\text{sym}}_{\mathcal B}\) is such a kernel. For the monomial \(f_{\mathcal B}(\rho)=\tr[\mathcal B\rho^m]\), we now consider the first-order marginal of its permutation-invariant kernel \(O^{\mathrm{sym}}_{\cB}\). Define
\begin{equation}
\label{eq:first_marg}
O^{\mathrm{sym}}_{\mathcal B,1}
:=
\tr_{2,\ldots,m}
\left[
O^{\mathrm{sym}}_{\cB}
\left(
I\otimes\rho^{\otimes(m-1)}
\right)
\right].
\end{equation}
Since \(O^{\mathrm{sym}}_{\cB}\) is invariant under permutations of its \(m\) subsystems, and the state on each subsystem is the same, \(\rho^{\otimes m}\), the above reduction is identical for every choice of the retained subsystem. 
Thus, the first-order marginal kernel is a well-defined single-copy operator, independent of which subsystem is retained. The above technique of marginalization of a permutation-invariant kernel can be extended to any order $k \leq m$ as mentioned in the definition below,

\begin{definition}\label{def:marginal_kernel}
    Given an $m$-copy permutation-invariant kernel $O^{\text{sym}}_m$ of a functional $f(\rho)$ in $\rho$, where $\rho \in \mathcal{D}(\mathcal{H})$, the $k^{\text{th}}$-order marginal kernel $O^{\text{sym}}_{m,k}$ is defined by tracing out (any of) the $(m-k)$ subsystems of the kernel's action, i.e., 
    $$O^{\text{sym}}_{m,k}=\mathrm{Tr}_{(k+1)..m}\left[O^{\text{sym}}_m\big(\mathbb{I}^{\otimes k}\otimes\rho^{\otimes(m-k)}\big)\right]\,.$$
\end{definition}

 The above definition is a quantum counterpart of the conditional expectation in the classical case. For completeness, we discuss this analogy in Appendix \ref{app:A}.

\subsection{Gradient Characterization of Monomial Functionals}

The first-order marginal of a permutation-invariant observable, defined in Equation \eqref{eq:first_marg}, which realizes a monomial functional, has a direct connection to the sensitivity of the functional. We formalize this in Lemma \ref{lem:grad_marg} below.

\begin{lemma}\label{lem:grad_marg}
    Given a monomial functional $f_{\cB}(\rho) = \mathrm{Tr}[\cB\rho^m]$ and its corresponding permutation-invariant $m$-copy kernel $O^{\mathrm{sym}}_{\cB}$ such that $f_{\cB}(\rho) = \mathrm{Tr}[O^{\mathrm{sym}}_{\cB} \rho^{\otimes m}]$, the matrix gradient $\nabla f_{\cB}(\rho)$ is directly proportional to the first-order marginal kernel of $O^{\mathrm{sym}}_{\cB}$. Specifically,
    $$\nabla f_{\cB}(\rho) = m O^{\mathrm{sym}}_{\cB,1},$$
    where the first-order marginal kernel is defined as $O^{\mathrm{sym}}_{\cB,1} := \mathrm{Tr}_{2\dots m}[O^{\mathrm{sym}}_{\cB}(\mathbb{I} \otimes \rho^{\otimes (m-1)})]$.
\end{lemma}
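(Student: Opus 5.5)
We prove the identity by pairing both sides against an arbitrary Hermitian direction $X$. The matrix gradient is characterized by $\frac{d}{dt}f_{\cB}(\rho+tX)\big|_{t=0}=\tr[X\nabla f_{\cB}(\rho)]$ for all Hermitian $X$. We take $\nabla f_{\cB}$ to be the Hermitian representative of this Fréchet derivative on the full space of Hermitian operators, not restricted to traceless ones, so that no identity ambiguity arises in the homogeneous case. It therefore suffices to show
\[
\frac{d}{dt}\tr\big[O^{\mathrm{sym}}_{\cB}(\rho+tX)^{\otimes m}\big]\Big|_{t=0}=m\,\tr\big[X\,O^{\mathrm{sym}}_{\cB,1}\big]
\]
for every Hermitian $X$. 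We also need $\tr[O^{\mathrm{sym}}_{\cB}\sigma^{\otimes m}]=\tr[\cB\sigma^m]$ for all Hermitian $\sigma$, not only density operators. This holds because the trace identity for $P_{\gamma_m}$, the Hermitization in \eqref{eq:mono_kernel_cons}, and the twirling in \eqref{eq:sym_kernel} are all purely algebraic. The one caveat is that the Hermitization step uses $\tr[\cB\sigma^m]\in\mathbb{R}$, so $\cB$ is taken Hermitian, as it is an observable.

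\textbf{Step 1 (left-hand side).} We expand the tensor power by multilinearity:
\[
(\rho+tX)^{\otimes m}=\rho^{\otimes m}+t\sum_{j=1}^m \rho^{\otimes(j-1)}\otimes X\otimes\rho^{\otimes(m-j)}+O(t^2).
\]
The derivative at $t=0$ is therefore $\sum_{j}\tr\big[O^{\mathrm{sym}}_{\cB}(\rho^{\otimes(j-1)}\otimes X\otimes\rho^{\otimes(m-j)})\big]$.

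\textbf{Step 2 (permutation invariance).} Since $P_\pi^\dagger O^{\mathrm{sym}}_{\cB}P_\pi=O^{\mathrm{sym}}_{\cB}$, conjugating by the transposition that swaps slots $1$ and $j$ moves $X$ to the first slot without changing the value. Hence every summand equals $\tr[O^{\mathrm{sym}}_{\cB}(X\otimes\rho^{\otimes(m-1)})]$, and the derivative equals $m$ times this quantity.

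\textbf{Step 3 (partial trace).} We write $X\otimes\rho^{\otimes(m-1)}=(X\otimes\mathbb{I}^{\otimes(m-1)})(\mathbb{I}\otimes\rho^{\otimes(m-1)})$. Using cyclicity and $\tr[(X\otimes\mathbb{I})Y]=\tr[X\,\tr_{2\dots m}Y]$, this becomes $\tr\big[X\,\tr_{2\dots m}[O^{\mathrm{sym}}_{\cB}(\mathbb{I}\otimes\rho^{\otimes(m-1)})]\big]=\tr[X\,O^{\mathrm{sym}}_{\cB,1}]$.

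\textbf{Step 4 (conclusion).} We check that $O^{\mathrm{sym}}_{\cB,1}$ is Hermitian: it is the partial trace of $O^{\mathrm{sym}}_{\cB}$ against a positive operator that commutes appropriately, and one confirms this by taking adjoints and cycling the $\rho$ factors inside the partial trace. Since $\tr[X(\nabla f_{\cB}-mO^{\mathrm{sym}}_{\cB,1})]=0$ for all Hermitian $X$, and the difference is Hermitian, the difference vanishes.

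\textbf{Main obstacle.} The main subtlety is the domain and convention issue. If the gradient is defined only along traceless directions, the identity holds only modulo $\mathbb{I}$. The clean statement needs the extension to all Hermitian $\sigma$, i.e.\ homogeneity, which is exactly why the lemma has no $C(\rho)$ term here.

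As a sanity check, we can compare with the direct formula $\nabla f_{\cB}(\rho)=\tfrac12\sum_{j=0}^{m-1}(\rho^{j}\cB\rho^{m-1-j}+\text{h.c.})$. Computing $O^{\mathrm{sym}}_{\cB,1}$ explicitly through the cyclic trace identity should reproduce it, but this computation is not needed for the proof.
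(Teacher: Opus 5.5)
Your proposal is correct and follows essentially the same route as the paper: differentiate $t\mapsto\tr[O^{\mathrm{sym}}_{\cB}(\rho+tX)^{\otimes m}]$ along an arbitrary Hermitian $X$, use permutation invariance to move $X$ into the first slot (giving the factor $m$), rewrite via the partial-trace identity as $\tr[X\,O^{\mathrm{sym}}_{\cB,1}]$, and compare with the defining relation of the gradient. Your explicit remark that the kernel identity must hold for all Hermitian arguments, not only density operators, is a point the paper uses implicitly. Your Hermiticity check in Step 4 is harmless but unnecessary: if $\tr[DX]=0$ for all Hermitian $X$, then $D=0$ by complex linearity.
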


\begin{proof}
    We prove this lemma by evaluating the derivative of $f_{\cB}(\rho+t X)$ at $t=0$ using two different methods. First, we calculate it algebraically from the polynomial trace representation, $ \frac{d \tr[\cB(\rho+tX)^m]}{dt}\big|_{t =0}$. Second, we calculate it structurally using the tensor-product expectation value, $ \frac{d \tr[O^{\mathrm{sym}}_{\cB}(\rho+tX)^{\otimes m}]}{dt}\big|_{t =0}$. Equating these two results links the matrix gradient to the partial trace operation defining the first-order marginal Kernel. The above discussion is formalized below.

    We now begin with the algebraic evaluation of the derivative. Recall from standard matrix differential calculus (see, e.g., \cite{magnus2019matrix, helstrom1976quantum}) that for any smooth parameterized family of matrices $\rho_\theta$, the Hermitian matrix gradient $\nabla f_{\cB}(\rho_\theta)$ is defined via the generalized chain rule,
    \begin{equation}
    \label{eq:operator_derivative_chain_rule}
        \frac{\partial f_{\cB}(\rho_\theta)}{\partial \theta^j} = \mathrm{Tr}\left[ \frac{\partial \rho_\theta}{\partial \theta^j} \nabla f_{\cB}(\rho_\theta) \right].
    \end{equation}
    We evaluate this for a single-parameter affine perturbation $\rho_t := \rho + t X$, where the parameter is $t$ and $X$ is an arbitrary Hermitian matrix. Since the derivative of the state with respect to the parameter is $\frac{d \rho_t}{dt} = X$, substituting this into the general relation at $t=0$ yields the directional derivative,
    \begin{equation}
        \frac{d f_{\cB}(\rho+t X)}{dt}\bigg|_{t=0} = \mathrm{Tr}[X \nabla f_{\cB}(\rho)]. \label{eq:dir_derv}
    \end{equation}
    
    In the binomial expansion of the matrix polynomial $(\rho + tX)^m$, the term linear in $t$ takes the form $\sum_{k=0}^{m-1} \rho^{m-k-1} X \rho^k$. By isolating $X$ using the cyclic property of the trace, the directional derivative evaluates to
    \begin{equation}
    \label{grho}
        \frac{d f_{\cB}(\rho+t X)}{dt}\bigg|_{t=0} = \mathrm{Tr}\left[ \left(\sum_{k=0}^{m-1} \rho^{m-k-1} \cB \rho^k\right) X \right],
    \end{equation}
    which, using Equation  \eqref{eq:dir_derv}, identifies the analytical matrix gradient as
    $$\nabla f_{\cB}(\rho) = \sum_{k=0}^{m-1} \rho^{m-k-1} \cB \rho^k.$$

    Next, we evaluate the derivative structurally using the permutation-invariant tensor product representation. In the multilinear expansion of $(\rho+tX)^{\otimes m}$, the term linear in $t$ is composed of the sum over all configurations where exactly one subsystem contains $X$ and the remaining $(m-1)$ subsystems contain $\rho$,
    \begin{equation}\label{tp}
        \sum_{i=0}^{m-1} \rho^{\otimes i} \otimes X \otimes \rho^{\otimes (m-i-1)}.
    \end{equation}

To evaluate the global trace inner product of this sum with the kernel $O^{\mathrm{sym}}_{\cB}$, we utilize the permutation-invariance of the kernel, which satisfies $O^{\mathrm{sym}}_{\cB} = P_{\pi}^\dagger O^{\mathrm{sym}}_{\cB} P_{\pi}$ for any permutation operator $P_{\pi} \in S_m$. For each $i^{\text{th}}$ term in the summation, let $P_{\pi_i}$ be the specific permutation operator that swaps the first subsystem with the $(i+1)^{\text{th}}$ subsystem. Applying this permutation transforms the tensor product by shifting $X$ to the first position, $P_{\pi_i} (\rho^{\otimes i} \otimes X \otimes \rho^{\otimes (m-i-1)}) P_{\pi_i}^\dagger = X \otimes \rho^{\otimes (m-1)}$.

By invoking the cyclic property of the trace and the kernel's permutation invariance, we establish that each term in the summation yields an identical expectation value,
\begin{align*}
    \mathrm{Tr}\Big[O^{\mathrm{sym}}_{\cB} \big(\rho^{\otimes i} \otimes X \otimes \rho^{\otimes (m-i-1)}\big)\Big] &= \mathrm{Tr}\Big[\big(P_{\pi_i}^\dagger O^{\mathrm{sym}}_{\cB} P_{\pi_i}\big) \big(\rho^{\otimes i} \otimes X \otimes \rho^{\otimes (m-i-1)}\big)\Big] \\
    &= \mathrm{Tr}\Big[O^{\mathrm{sym}}_{\cB} \big( P_{\pi_i} (\rho^{\otimes i} \otimes X \otimes \rho^{\otimes (m-i-1)}) P_{\pi_i}^\dagger \big)\Big] \\
    &= \mathrm{Tr}\Big[O^{\mathrm{sym}}_{\cB} \big(X \otimes \rho^{\otimes (m-1)}\big)\Big].
\end{align*}

Since this global trace inner product is identical for each of the $m$ distinct terms in Equation ~\eqref{tp}, we can consolidate the sum into $m$ times the evaluation of the first term,
\begin{equation}
\label{additivity}
    \frac{d \mathrm{Tr}[O^{\mathrm{sym}}_{\cB}(\rho+tX)^{\otimes m}]}{dt}\bigg|_{t=0} = m \mathrm{Tr}\big[O^{\mathrm{sym}}_{\cB}(X \otimes \rho^{\otimes (m-1)})\big].
\end{equation}

    We now manipulate this tensor trace to isolate $X$. We can rewrite the argument by separating the local operator $X$ from the $(m-1)$ copies of $\rho$,
    $$\mathrm{Tr}\big[O^{\mathrm{sym}}_{\cB}(X \otimes \rho^{\otimes (m-1)})\big] = \mathrm{Tr}\big[O^{\mathrm{sym}}_{\cB}(\mathbb{I} \otimes \rho^{\otimes (m-1)})(X \otimes \mathbb{I}^{\otimes (m-1)})\big].$$
    At this stage, we apply a standard property of the partial trace. For any global operator $Y \in \mathcal{B}(\mathcal{H}^{\otimes m})$ and a local operator $X \in \mathcal{B}(\mathcal{H})$ acting solely on the first subsystem, the global trace resolves into a local trace over the first subsystem after tracing out the remainder, $\mathrm{Tr}[Y(X \otimes \mathbb{I}^{\otimes (m-1)})] = \mathrm{Tr}[(\mathrm{Tr}_{2\dots m} Y) X]$. This holds universally since tracing over the identity on subsystems $2$ through $m$ effectively integrates out those spaces without acting on $X$.

    By substituting $Y = O^{\mathrm{sym}}_{\cB}(\mathbb{I} \otimes \rho^{\otimes (m-1)})$ into this identity, the evaluation simplifies to
    \begin{align}
        \mathrm{Tr}\big[O^{\mathrm{sym}}_{\cB}(X \otimes \rho^{\otimes (m-1)})\big] &= \mathrm{Tr}\Big[ \big( \mathrm{Tr}_{2\dots m}[O^{\mathrm{sym}}_{\cB}(\mathbb{I} \otimes \rho^{\otimes (m-1)})] \big) X \Big] \nonumber \\
        \label{krho}
        &= \mathrm{Tr}[O^{\mathrm{sym}}_{\cB,1} X],
    \end{align}
    where we have substituted the definition of the first-order marginal kernel, $O^{\mathrm{sym}}_{\cB,1} := \mathrm{Tr}_{2\dots m}[O^{\mathrm{sym}}_{\cB}(\mathbb{I} \otimes \rho^{\otimes (m-1)})]$.

    Substituting Equation ~\eqref{krho} back into Equation ~\eqref{additivity} yields the complete structural derivative,
    $$\frac{d \mathrm{Tr}[O^{\mathrm{sym}}_{\cB}(\rho+tX)^{\otimes m}]}{dt}\bigg|_{t=0} = \mathrm{Tr}[m O^{\mathrm{sym}}_{\cB,1} X].$$

    Finally, equating the outcomes of the algebraic evaluation in Equation ~\eqref{grho} and the structural evaluation gives
    $$\mathrm{Tr}[\nabla f_{\cB}(\rho) X] = \mathrm{Tr}[m O^{\mathrm{sym}}_{\cB,1} X].$$
    Since this trace equality stems from the definition of the matrix gradient and holds for any arbitrary Hermitian perturbation $X$, the operators themselves must be equal. Thus, it follows that
    $$\nabla f_{\cB}(\rho) = m O^{\mathrm{sym}}_{\cB,1}.$$
    This completes the proof.
\end{proof}

\subsection{Permutation-Invariant, Finite-Copy Local Kernels for Generic Polynomial Functionals}

In this subsection, we will generalise the techniques and results obtained in the preceding subsections to the case of a generic polynomial functional $f(\rho)$ mentioned in Equation \eqref{eq:generic_poly}.

We map individual monomials to tensor spaces and symmetrize them via identity padding. Consider a degree-$i$ interleaved monomial $P_i(\rho) = \mathrm{Tr}\left[ \prod_{j=1}^i (A^{(i)}_j \rho) \right]$. We map this trace to an expectation value over $i$ independent copies of $\rho$ using the cyclic permutation operator $P_{\gamma_i}$ on $\mathcal{H}^{\otimes i}$. Here, $\gamma_i = (1\; 2\; \dots\; i) \in S_i$ is the backward cyclic permutation, defined as $P_{\gamma_i} |x_1, x_2, \dots, x_i\rangle = |x_2, x_3, \dots, x_i, x_1\rangle$. 

The monomial trace can be expressed as a tensor expectation,
\begin{equation*}
    P_i(\rho) = \mathrm{Tr}\left[ K_i \rho^{\otimes i} \right], \quad \text{where} \quad K_i = (A^{(i)}_1 \otimes A^{(i)}_2 \otimes \dots \otimes A^{(i)}_i) P_{\gamma_i}\,.
\end{equation*}
To enforce permutation invariance without altering the expectation value, we construct the permutation-invariant $i$-copy kernel by averaging over the symmetric group $S_i$,
\begin{equation*}
    \widetilde{K}_i = \frac{1}{i!} \sum_{\pi \in S_i} P_{\pi^{-1}} K_i P_\pi\,.
\end{equation*}

Since the maximum degree of $f(\rho)$ is $k$, we embed the lower-degree kernel $\widetilde{K}_i$ into the global joint Hilbert space $\mathcal{H}^{\otimes k}$. We do this by padding the kernel with $(k-i)$ copies of the identity matrix. To maintain symmetry across all $k$ subsystems, we average over all $\binom{k}{i}$ possible subsystem placements,
\begin{equation*}
    O_{(i)} = \binom{k}{i}^{-1} \sum_{\beta \in \mathcal{P}_i(k)} \widetilde{K}_i^{(\beta)} \otimes \mathbb{I}^{(k \setminus \beta)}\,,
\end{equation*}
where $\mathcal{P}_i(k)$ denotes the set of all $i$-element subsets of $\{1, \dots, k\}$. For a scalar constant term $c_0$ (a degree-0 monomial), the permutation-invariant kernel is padded as $O_{(0)} = c_0 \mathbb{I}^{\otimes k}$.

Thus, for a generic polynomial functional $f(\rho) = c_0 + \sum_{i=1}^k P_i(\rho)$, the complete global permutation-invariant $k$-partite kernel is given by the linear combination,
\begin{equation}
    O^{\mathrm{sym}}_{k} = c_0 \mathbb{I}^{\otimes k} + \sum_{i=1}^k O_{(i)}\,.\label{eq:sym_k_cons}
\end{equation}

In the subsequent subsection, as already discussed in the case for monomial functionals, we will show that the first-order marginal of the kernel $O^{\mathrm{sym}}_{k}$ constructed above is equivalent to the matrix gradient of the corresponding polynomial functional.

\subsection{Geometric Equivalence}

In this subsection, we generalize the equivalence between the first-order marginal kernel and the functional gradient by first deriving the matrix gradient of this interleaved representation. We state this construction in the lemma below.

\begin{lemma}[Analytical Gradient of Generic Operator Polynomials]\label{lemma:poly_gradient_construction}
    If a scalar-valued generic polynomial $f(\rho)$ is expressed as a linear combination of interleaved monomials,
    \begin{equation*}
        f(\rho) = c_0 + \sum_{i=1}^{k} \mathrm{Tr}\left[ \prod_{j=1}^{i} \big(A^{(i)}_j \rho\big) \right],
    \end{equation*}
    where $c_0$ is a scalar constant and $A^{(i)}_j$ are constant coefficient matrices, then its exact analytical gradient can be constructed via the cyclic permutation of its individual terms as
    \begin{equation*}
        \nabla f(\rho) = \sum_{i=1}^{k} \sum_{m=1}^{i} \left( \prod_{l=m+1}^{i} \big(A^{(i)}_l \rho\big) \right) \left( \prod_{j=1}^{m-1} \big(A^{(i)}_j \rho\big) \right) A^{(i)}_m\,.
    \end{equation*}
\end{lemma}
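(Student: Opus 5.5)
The plan is to follow the same route as in the proof of Lemma~\ref{lem:grad_marg}: compute the directional derivative of $f$ along an arbitrary Hermitian perturbation $X$, and then identify the gradient through the defining relation \eqref{eq:dir_derv}, namely $\frac{d}{dt}f(\rho+tX)|_{t=0}=\mathrm{Tr}[X\nabla f(\rho)]$. Differentiation is linear and the constant $c_0$ contributes nothing. So it suffices to treat a single interleaved monomial
\[
P_i(\rho)=\mathrm{Tr}\Big[\prod_{j=1}^{i}\big(A^{(i)}_j\rho\big)\Big]
\]
and then sum over $i$.

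For fixed $i$, I substitute $\rho+tX$ into each of the $i$ slots. The product is multilinear in the $i$ occurrences of $\rho$, so the coefficient of $t$ is the sum over $m=1,\dots,i$ of the product in which the $m$-th factor $A^{(i)}_m\rho$ is replaced by $A^{(i)}_m X$ and all other factors are unchanged:
\[
\frac{d}{dt}P_i(\rho+tX)\Big|_{t=0}=\sum_{m=1}^{i}\mathrm{Tr}\Big[\Big(\prod_{j=1}^{m-1}A^{(i)}_j\rho\Big)A^{(i)}_m X\Big(\prod_{l=m+1}^{i}A^{(i)}_l\rho\Big)\Big].
\]
Next I use cyclicity of the trace to bring $X$ to the front. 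This moves the trailing block $\prod_{l=m+1}^{i}(A^{(i)}_l\rho)$ to the left of the leading block $\prod_{j=1}^{m-1}(A^{(i)}_j\rho)$, followed by $A^{(i)}_m$ and then $X$. The result is $\mathrm{Tr}[G_{i,m}X]$, where $G_{i,m}$ is exactly the summand in the statement. Empty products are read as $\mathbb{I}$, which covers $m=1$ and $m=i$.

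Summing over $m$ and $i$ gives $\frac{d}{dt}f(\rho+tX)|_{t=0}=\mathrm{Tr}[GX]$, where $G$ is the claimed expression. Since this holds for every Hermitian $X$, comparison with \eqref{eq:dir_derv} identifies $\nabla f(\rho)=G$.

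The step needing care is this final identification. $G$ need not be Hermitian for general coefficient matrices, whereas the pairing $\mathrm{Tr}[GX]$ against all Hermitian $X$ only determines the Hermitian part of $G$. I would handle it in one of two ways. If $f$ is real-valued for all Hermitian $\rho$, as is assumed throughout the paper, then $\mathrm{Tr}[GX]$ is real for every Hermitian $X$. Taking complex conjugates gives $\mathrm{Tr}[G^\dagger X]=\mathrm{Tr}[GX]$, so $G$ and $(G+G^\dagger)/2$ define the same linear functional. Hence the stated expression is the gradient in the sense of \eqref{eq:operator_derivative_chain_rule}. The alternative is to allow complex $X$, that is, arbitrary matrix directions. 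Then the linear functional $X\mapsto\mathrm{Tr}[GX]$ determines $G$ uniquely, because the trace pairing is nondegenerate. The rest of the argument, the multilinear expansion and the cyclic rearrangement, is routine bookkeeping of indices.
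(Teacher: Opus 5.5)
Your proposal is correct and follows the paper's proof essentially step for step: first-order multilinear expansion of the ordered product, cyclic rotation of the trailing block $\prod_{l>m}(A^{(i)}_l\rho)$ to the front, and identification through $\frac{d}{dt}f(\rho+tX)|_{t=0}=\mathrm{Tr}[X\nabla f(\rho)]$. Your care over the final identification is a genuine refinement that the paper skips, since it infers uniqueness from Hermitian $X$ alone. Under your realness assumption, route (1) actually gives $\mathrm{Tr}[(G-G^\dagger)X]=0$ for all Hermitian $X$, and because $G-G^\dagger$ is $i$ times a Hermitian matrix, this forces $G=G^\dagger$; so the stated expression is itself the Hermitian gradient, not merely equivalent to its Hermitian part.
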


\begin{proof}
    To derive the functional form of $\nabla f(\rho)$, we evaluate the directional derivative along the affine path $\rho_t = \rho + t X$ using the generic polynomial functional defined above. The scalar constant $c_0$ vanishes under differentiation. Expanding the ordered matrix product $\prod_{j=1}^{i} B_j = B_1 B_2 \dots B_i$ to the first order in $t$ yields
    \begin{equation*}
        \prod_{j=1}^{i} \big(A^{(i)}_j (\rho + t X)\big) = \prod_{j=1}^{i} \big(A^{(i)}_j \rho\big) + t \sum_{m=1}^{i} \left( \prod_{j=1}^{m-1} \big(A^{(i)}_j \rho\big) \right) A^{(i)}_m X \left( \prod_{l=m+1}^{i} \big(A^{(i)}_l \rho\big) \right) + \mathcal{O}(t^2)\,.
    \end{equation*}
    
    Evaluating the derivative of the global trace at $t = 0$ isolates this linear $\mathcal{O}(t)$ sum,
    \begin{align*}
        \frac{d}{dt}f(\rho_t)\bigg|_{t=0} &= \sum_{i=1}^{k} \sum_{m=1}^{i} \mathrm{Tr}\left[ \left( \prod_{j=1}^{m-1} \big(A^{(i)}_j \rho\big) \right) A^{(i)}_m X \left( \prod_{l=m+1}^{i} \big(A^{(i)}_l \rho\big) \right) \right] \\
        &\overset{\tiny (a) }{=} \sum_{i=1}^{k} \sum_{m=1}^{i} \mathrm{Tr}\left[ \left( \prod_{l=m+1}^{i} \big(A^{(i)}_l \rho\big) \right) \left( \prod_{j=1}^{m-1} \big(A^{(i)}_j \rho\big) \right) A^{(i)}_m X \right] \\
        &= \mathrm{Tr}\left[ \left( \sum_{i=1}^{k} \sum_{m=1}^{i} \left( \prod_{l=m+1}^{i} \big(A^{(i)}_l \rho\big) \right) \left( \prod_{j=1}^{m-1} \big(A^{(i)}_j \rho\big) \right) A^{(i)}_m \right) X \right],
    \end{align*}
    where equality ($a$) follows from the cyclic property of the trace, shifting the operators appearing to the right of the perturbation $X$ to the front of the trace. (By convention, any empty ordered product evaluates to the identity matrix $\mathbb{I}$).
    
    The composite operator enclosed in the large outer parentheses matches the proposed algebraic expression for $\nabla f(\rho)$. Since the definition of the matrix gradient establishes that $\frac{d}{dt}f(\rho_t)\big|_{t=0} = \mathrm{Tr}[\nabla f(\rho) X]$, and this holds for any arbitrary Hermitian perturbation $X$, we obtain the analytical gradient,
    \begin{equation*}
        \nabla f(\rho) = \sum_{i=1}^{k} \sum_{m=1}^{i} \left( \prod_{l=m+1}^{i} \big(A^{(i)}_l \rho\big) \right) \left( \prod_{j=1}^{m-1} \big(A^{(i)}_j \rho\big) \right) A^{(i)}_m\,.
    \end{equation*}
    This completes the proof.
\end{proof}

With the derivation of the matrix gradient of a generic polynomial functional, we can now formalise its algebraic relationship to the first-order marginal of the original permutation-invariant kernel.

\begin{lemma}[Gradient-Marginal equivalence of the Generic polynomials]\label{lemma:generic_poly_algebraic}
    Let $f(\rho) = c_0 + \sum_{i=1}^k P_i(\rho)$ be a generic polynomial functional of maximum degree $k$, and let $O^{\text{sym}}_k \in \mathcal{B}(\mathcal{H}^{\otimes k})$ be its corresponding globally permutation-invariant $k$-copy kernel constructed above. The scaled first-order marginal kernel $k O^{\text{sym}}_{k,1}$ recovers the analytical matrix gradient $\nabla f(\rho)$ up to an additive scalar multiple of the identity, satisfying
    $$k O^{\text{sym}}_{k,1} = \nabla f(\rho) + C(\rho) \mathbb{I},$$
    where $C(\rho) = k c_0 + \sum_{i=1}^k (k-i) P_i(\rho)$ is a real-valued, data-dependent scalar.
\end{lemma}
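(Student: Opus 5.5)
The plan is to exploit linearity of the first-order marginal map $O\mapsto \mathrm{Tr}_{2\dots k}[O(\mathbb{I}\otimes\rho^{\otimes(k-1)})]$ and treat the decomposition \eqref{eq:sym_k_cons} term by term. The constant term is immediate. Since $\mathrm{Tr}[\rho]=1$, the marginal of $c_0\mathbb{I}^{\otimes k}$ is $c_0\mathbb{I}$, so after scaling by $k$ it contributes $kc_0\mathbb{I}$. That is exactly the $c_0$-part of $C(\rho)\mathbb{I}$.

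For a padded term $O_{(i)}=\binom{k}{i}^{-1}\sum_{\beta\in\mathcal P_i(k)}\widetilde K_i^{(\beta)}\otimes\mathbb{I}^{(k\setminus\beta)}$, we retain subsystem $1$ and split the subsets $\beta$ into two classes.
\begin{itemize}
\item If $1\notin\beta$ (there are $\binom{k-1}{i}$ such subsets), all of $\widetilde K_i$ is traced against $\rho^{\otimes i}$. The padding identities give factors $\mathrm{Tr}[\rho]=1$, and the surviving identity on subsystem $1$ gives $\mathrm{Tr}[\widetilde K_i\rho^{\otimes i}]\,\mathbb{I}=P_i(\rho)\mathbb{I}$.
\item If $1\in\beta$ (there are $\binom{k-1}{i-1}$ such subsets), the padding traces again give $1$. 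By permutation invariance of $\widetilde K_i$, the result is its own first-order marginal $\widetilde K_{i,1}:=\mathrm{Tr}_{2\dots i}[\widetilde K_i(\mathbb{I}\otimes\rho^{\otimes(i-1)})]$, independent of where $1$ sits in $\beta$.
\end{itemize}
Using $\binom{k-1}{i-1}/\binom{k}{i}=i/k$ and $\binom{k-1}{i}/\binom{k}{i}=(k-i)/k$, we get
\[
k\,O_{(i),1}= i\,\widetilde K_{i,1}+(k-i)P_i(\rho)\mathbb{I}.
\]
Summing over $i$ then produces the claimed scalar $C(\rho)=kc_0+\sum_i(k-i)P_i(\rho)$.

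It remains to show $i\,\widetilde K_{i,1}=\nabla P_i(\rho)$, where $\nabla P_i$ is the $i$-th summand of the gradient in Lemma~\ref{lemma:poly_gradient_construction}. For this we repeat the argument of Lemma~\ref{lem:grad_marg} verbatim with $\widetilde K_i$ in place of $O^{\mathrm{sym}}_{\cB}$. That proof used only two facts: $\mathrm{Tr}[\widetilde K_i\rho^{\otimes i}]=P_i(\rho)$ for every $\rho$ (extended polynomially to $\rho+tX$), and invariance under conjugation by permutations. Both hold here, since the twirl preserves expectations on product operators. Differentiating $\mathrm{Tr}[\widetilde K_i(\rho+tX)^{\otimes i}]$ at $t=0$ then gives $i\,\mathrm{Tr}[\widetilde K_{i,1}X]$. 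Lemma~\ref{lemma:poly_gradient_construction} gives the same derivative as $\mathrm{Tr}[\nabla P_i(\rho)X]$. Summing over $i$ yields $kO^{\mathrm{sym}}_{k,1}=\nabla f(\rho)+C(\rho)\mathbb{I}$.

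The step I expect to need the most care is the final identification of operators. Here $K_i$, and possibly $\widetilde K_i$, need not be Hermitian, and individual $P_i(\rho)$ may be complex, so one cannot appeal to self-adjointness. I would argue as follows. The identity $\mathrm{Tr}[(i\widetilde K_{i,1}-\nabla P_i)X]=0$ holds for all Hermitian $X$, and Hermitian matrices span $\mathcal B(\mathcal H)$ over $\mathbb{C}$, so it holds for all $X$. Hence the operator vanishes by nondegeneracy of the trace pairing. This avoids any assumption of Hermiticity, and it justifies calling $C(\rho)$ real whenever the combined functional $f$ is real.
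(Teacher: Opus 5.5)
Your proof is correct. On the padding step it matches the paper: both split the subsets $\beta$ by whether subsystem $1$ lies in $\beta$, with weights $i/k$ and $(k-i)/k$, to get $kO_{(i),1}=i\widetilde K_{i,1}+(k-i)P_i(\rho)\mathbb{I}$.

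The routes differ at the key identity $i\widetilde K_{i,1}=\nabla P_i(\rho)$.

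\begin{itemize}
\item The paper computes $\widetilde K_{i,1}$ explicitly. It moves the twirl onto $\mathbb{I}\otimes\rho^{\otimes(i-1)}$ and groups $S_i$ by $m=\pi(1)$, with classes of size $(i-1)!$. It then contracts $K_i=(A^{(i)}_1\otimes\cdots\otimes A^{(i)}_i)P_{\gamma_i}$ against $\rho^{\otimes(m-1)}\otimes\mathbb{I}\otimes\rho^{\otimes(i-m)}$. This produces the cyclic split $\bigl(\prod_{l>m}A^{(i)}_l\rho\bigr)\bigl(\prod_{j<m}A^{(i)}_j\rho\bigr)A^{(i)}_m$, which it matches term by term with Lemma \ref{lemma:poly_gradient_construction}.
\item You reuse the duality argument of Lemma \ref{lem:grad_marg}. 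It needs only the polynomial identity $\mathrm{Tr}[\widetilde K_iY^{\otimes i}]=P_i(Y)$ for all matrices $Y$, plus permutation invariance of $\widetilde K_i$.
\end{itemize}

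Your route is shorter and more general. It shows that $i$ times the first-order marginal of any permutation-invariant kernel of a homogeneous degree-$i$ polynomial equals its gradient, regardless of how the kernel is built. The cost is that you need nondegeneracy of the trace pairing on possibly non-Hermitian operators. You handle this correctly via the complex span of Hermitian matrices. The paper's explicit contraction yields the operator directly and never needs that step.

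One caveat concerns your closing remark: realness of $f$ alone does not make $C(\rho)$ real. For instance, take $k=1$, $c_0=\sqrt{-1}$ and $A^{(1)}_1=-\sqrt{-1}\,\mathbb{I}$. Then $f\equiv 0$ on $\mathcal{D}(\mathcal{H})$, but $C(\rho)=c_0=\sqrt{-1}$. Realness of $C(\rho)$ comes from realness of $c_0$ and of each $P_i(\rho)$, which is the assumption tacitly behind the statement.
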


\begin{proof}
    The proof proceeds by evaluating the first-order marginal kernel of the composite kernel and linking the result to the analytical gradient.
    
    We evaluate the first-order marginal kernel of the global composite kernel, $O^{\text{sym}}_{k,1} = \mathrm{Tr}_{2 \dots k}[O (\mathbb{I} \otimes \rho^{\otimes k-1})]$. By the linearity of the partial trace, $O^{\text{sym}}_{k,1} = c_0 \mathbb{I} + \sum_{i=1}^k O_{(i),1}$.

    To compute $O_{(i),1}$, observe that the first subsystem is included in the $i$-element subset $\beta$ with combinatorial probability $\binom{k-1}{i-1} / \binom{k}{i} = \frac{i}{k}$. If the first subsystem is included, tracing out the remaining $k-1$ subsystems leaves the first-order marginal of the unpadded $i$-copy kernel, denoted by $\widetilde{K}_{i,1} = \mathrm{Tr}_{2 \dots i}[\widetilde{K}_i (\mathbb{I} \otimes \rho^{\otimes i-1})]$. 

    Conversely, if the first subsystem is excluded from $\beta$ (with probability $\frac{k-i}{k}$), it is occupied by the identity operator $\mathbb{I}$. Tracing out the remaining subsystems evaluates the full expectation of $\widetilde{K}_i$, which yields $P_i(\rho)\mathbb{I}$. 

    Therefore, the first-order marginal of the padded kernel splits into
    \begin{equation*}
        O_{(i),1} = \frac{i}{k} \widetilde{K}_{i,1} + \frac{k-i}{k} P_i(\rho) \mathbb{I}\,.
    \end{equation*}

    We now evaluate $\widetilde{K}_{i,1}$ to connect the physical marginal kernel with the algebraic gradient. Expanding the partial trace over the group-averaged kernel yields
    \begin{equation*}
        \widetilde{K}_{i,1} = \frac{1}{i!} \sum_{\pi \in S_i} \mathrm{Tr}_{2 \dots i}\left[ \big( P_{\pi^{-1}} K_i P_\pi \big) (\mathbb{I}_1 \otimes \rho^{\otimes i-1}) \right]\,.
    \end{equation*}

    By utilizing the cyclic property of the trace and the unitarity of the permutation operators, we shift the conjugation onto the state,
    \begin{equation*}
        \widetilde{K}_{i,1} = \frac{1}{i!} \sum_{\pi \in S_i} \mathrm{Tr}_{2 \dots i}\left[ K_i \Big( P_\pi (\mathbb{I}_1 \otimes \rho^{\otimes i-1}) P_{\pi^{-1}} \Big) \right]\,.
    \end{equation*}

    The conjugated state $P_\pi (\mathbb{I}_1 \otimes \rho^{\otimes i-1}) P_{\pi^{-1}}$ moves the single untraced identity operator $\mathbb{I}$ from the first tensor factor to the $m^{\text{th}}$ tensor factor, where $m = \pi(1)$. Therefore, the symmetric group $S_i$ partitions into $i$ equivalence classes based on this index $m \in \{1, \dots, i\}$. We split the sum over the permutations accordingly,
    \begin{equation*}
        \widetilde{K}_{i,1} = \frac{1}{i!}\sum_{m=1}^{i} \sum_{\substack{\pi \in S_i:\\\pi(1) =m}} \mathrm{Tr}_{2 \dots i}\left[ K_i \Big( P_\pi (\mathbb{I}_1 \otimes \rho^{\otimes i-1}) P_{\pi^{-1}} \Big) \right]\,.
    \end{equation*}

    Since the remaining $(i-1)$ tensor factors are populated by identical copies of $\rho$, any permutations that shuffle these identical copies leave the joint state invariant. Thus, for any permutation in the equivalence class defined by $\pi(1)=m$, the conjugated state evaluates to $\rho^{\otimes m-1} \otimes \mathbb{I}_m \otimes \rho^{\otimes i-m}$. Since each class contains $(i-1)!$ permutations, the inner sum evaluates to $(i-1)!$ identical terms. Furthermore, since the untraced identity slot has shifted to the $m^{\text{th}}$ position, the effective partial trace operates over the remaining subsystems, denoted as $[i] \setminus m$. The expression simplifies to a sum over the $i$ spatial configurations,
    \begin{equation*}
        \widetilde{K}_{i,1} = \frac{1}{i!} \sum_{m=1}^i (i-1)! \, \mathrm{Tr}_{[i]\setminus m}\left[ K_i \Big( \rho^{\otimes m-1} \otimes \mathbb{I}_m \otimes \rho^{\otimes i-m} \Big) \right]\,.
    \end{equation*}

    We now substitute $K_i = (A^{(i)}_1 \otimes \dots \otimes A^{(i)}_i)P_{\gamma_i}$ and evaluate the partial trace. The cyclic shift $P_{\gamma_i}$ connects subsystem $s$ to subsystem $s-1$, weaving the tensor factors into a contiguous operator loop. Since the identity matrix sits at the $m^{\text{th}}$ position (acting as the untraced open slot), the loop breaks exactly at the operator $A^{(i)}_m$. Tracing out the remaining $(i-1)$ subsystems against $\rho$ evaluates the rest of the chain, leaving the $m^{\text{th}}$ subsystem as the output,
    \begin{equation*}
        \mathrm{Tr}_{[i]\setminus m}\left[ K_i \Big( \rho^{\otimes m-1} \otimes \mathbb{I}_m \otimes \rho^{\otimes i-m} \Big) \right] = \left( \prod_{l=m+1}^{i} (A^{(i)}_l \rho) \right) \left( \prod_{j=1}^{m-1} (A^{(i)}_j \rho) \right) A^{(i)}_m\,.
    \end{equation*}

    Cancelling the factorials $\big( (i-1)! / i! = 1/i \big)$ and substituting this trace evaluation back into the sum yields the geometric form of the first-order marginal,
    \begin{equation*}
        \widetilde{K}_{i,1} = \frac{1}{i} \sum_{m=1}^i \left( \prod_{l=m+1}^{i} (A^{(i)}_l \rho) \right) \left( \prod_{j=1}^{m-1} (A^{(i)}_j \rho) \right) A^{(i)}_m\,.
    \end{equation*}

    Multiplying the marginal kernel by $i$ recovers the sum over all cyclic splits,
    \begin{equation*}
        i \widetilde{K}_{i,1} = \sum_{m=1}^i \left( \prod_{l=m+1}^{i} (A^{(i)}_l \rho) \right) \left( \prod_{j=1}^{m-1} (A^{(i)}_j \rho) \right) A^{(i)}_m\,.
    \end{equation*}

    As established in Lemma \ref{lemma:poly_gradient_construction}, this right-hand side is identically the analytical gradient of the monomial, $\nabla P_i(\rho)$. Substituting $i \widetilde{K}_{i,1} = \nabla P_i(\rho)$ back into the padded marginal kernel equation and scaling by $k$ gives
    \begin{equation*}
        k O_{(i),1} = \nabla P_i(\rho) + (k-i) P_i(\rho) \mathbb{I}\,.
    \end{equation*}

    Summing these components over all degrees generates the scaled marginal for the total global kernel,
    \begin{equation*}
        k O^{\text{sym}}_{k,1} = k c_0 \mathbb{I} + \sum_{i=1}^k \Big( \nabla P_i(\rho) + (k-i) P_i(\rho) \mathbb{I} \Big) = \nabla f(\rho) + C(\rho) \mathbb{I}\,,
    \end{equation*}
    where $C(\rho) = k c_0 + \sum_{i=1}^k (k-i) P_i(\rho)$. This completes the proof.
\end{proof}

The additive shift $C(\rho)\mathbb{I}$ occurs since we extend lower-degree polynomials into the $k$-partite tensor space. For valid quantum states, the trace is conserved ($\mathrm{Tr}[\rho] = 1$). Therefore, multiplying any term by $\mathrm{Tr}[\rho]$ does not change its value. 

When constructing the $k$-copy permutation-invariant kernel $O^{\text{sym}}_k$, we pad every lower-degree monomial $P_i(\rho)$ with $(k-i)$ copies of the identity matrix. Evaluating this padded observable is equivalent to multiplying the monomial by $(\mathrm{Tr}[\rho])^{k-i}$. This transforms the generic polynomial into a homogeneous polynomial of degree $k$,
\begin{equation*}
    f_{\text{hom}}(\rho) = c_0 \big(\mathrm{Tr}[\rho]\big)^k + \sum_{i=1}^k P_i(\rho) \big(\mathrm{Tr}[\rho]\big)^{k-i}\,.
\end{equation*}
The marginal kernel evaluates the gradient of this homogenized functional $f_{\text{hom}}(\rho)$, rather than the original $f(\rho)$. Applying the product rule to find the gradient of $f_{\text{hom}}(\rho)$ generates trace derivatives ($\nabla \mathrm{Tr}[\rho] = \mathbb{I}$). Differentiating these padding terms produces the $C(\rho)\mathbb{I}$ shift seen in our derivation.

Despite this shift, the scaled marginal and the analytical gradient behave identically when evaluated along valid physical trajectories. This is due to the fact that any such valid physical trajectory is formalized as a \textit{traceless} Hermitian perturbation. We state this fact formally as a corollary.

\begin{corollary}[Equivalence between Marginal Kernel and Functional Gradient for Generic Polynomials]\label{lemma:generic_poly_equivalence}
    The scaled first-order marginal kernel $k O^{\text{sym}}_{k,1}$ and the analytical matrix gradient $\nabla f(\rho)$ are geometrically equivalent over the manifold of valid density matrices. Specifically, for any traceless Hermitian perturbation $X$, their directional derivatives coincide,
    \begin{equation*}
        \mathrm{Tr}[X (k O^{\text{sym}}_{k,1})] = \mathrm{Tr}[X \nabla f(\rho)]\,.
    \end{equation*}
\end{corollary}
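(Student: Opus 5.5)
The corollary follows at once from Lemma~\ref{lemma:generic_poly_algebraic}, so the plan is to substitute that identity into the trace pairing and use tracelessness of $X$. Concretely, I would start from
\begin{equation*}
k O^{\text{sym}}_{k,1} = \nabla f(\rho) + C(\rho)\mathbb{I},
\end{equation*}
multiply both sides on the left by an arbitrary traceless Hermitian $X$, and take the trace. By linearity of the trace this gives
\begin{equation*}
\mathrm{Tr}[X (k O^{\text{sym}}_{k,1})] = \mathrm{Tr}[X\nabla f(\rho)] + C(\rho)\,\mathrm{Tr}[X].
\end{equation*}
Here it matters that $C(\rho)$ is a scalar depending only on $\rho$, not on $X$, so it can be pulled out of the trace.

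The second step is to invoke $\mathrm{Tr}[X]=0$, which kills the shift term and yields the claimed equality. I would add a short remark justifying the restriction to traceless perturbations. Any smooth curve $\rho_t$ in $\mathcal{D}(\mathcal{H})$ through $\rho$ satisfies $\mathrm{Tr}[\rho_t]=1$. Differentiating, its tangent $X=\dot\rho_0$ is Hermitian with $\mathrm{Tr}[X]=0$. This is exactly the sense in which the two operators are ``geometrically equivalent over the manifold'': they induce the same directional derivative $\frac{d}{dt}f(\rho_t)|_{t=0}$ along every admissible direction, by \eqref{eq:dir_derv}.

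Optionally, I would note the consistency check via the homogenization $f_{\text{hom}}$ discussed above. Along traceless directions $\mathrm{Tr}[\rho+tX]=1$, so $f_{\text{hom}}$ and $f$ agree on the affine line and hence have equal derivatives. This gives a second route to the same conclusion.

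There is no real obstacle here. The only point needing care is that the constant $C(\rho)$ does not depend on the perturbation $X$, which is immediate from its explicit form $C(\rho)=kc_0+\sum_{i=1}^k (k-i)P_i(\rho)$ in Lemma~\ref{lemma:generic_poly_algebraic}.
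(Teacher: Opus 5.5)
Your proposal is correct and matches the paper's argument: you substitute $kO^{\text{sym}}_{k,1} = \nabla f(\rho) + C(\rho)\mathbb{I}$ from Lemma~\ref{lemma:generic_poly_algebraic}, pair both sides with $X$, and use $\mathrm{Tr}[X]=0$ to eliminate the $C(\rho)\,\mathrm{Tr}[X]$ term. Your optional homogenization remark is also sound, and it mirrors the paper's own explanation of the $C(\rho)\mathbb{I}$ shift via $f_{\text{hom}}$.
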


\begin{remark}[On the Necessity of Traceless Perturbations]
    Note that for the generic polynomial case, the equivalence between $k O^{\text{sym}}_{k,1}$ and $\nabla f(\rho)$ requires the perturbation $X$ to be traceless. This differs from the homogeneous monomial case. For a purely homogeneous functional $f(\rho) = \mathrm{Tr}[\cB\rho^m]$, no identity padding is used ($k-m = 0$), so the scalar shift $C(\rho)$ is zero. Thus, for homogeneous monomials, the equality $m O^{\mathrm{sym}}_{\cB,1} = \nabla f(\rho)$ holds for any Hermitian matrix $X$, even if it is not traceless. However, for a mixed-degree polynomial, the $C(\rho)\mathbb{I}$ shift means the equivalence only holds for traceless physical perturbations; if we evaluated a non-traceless deviation ($\mathrm{Tr}[X] \neq 0$), the directional derivatives would differ.
\end{remark}

\subsection{Statistical Consequences}

The equivalence obtained in Lemma \ref{lemma:generic_poly_algebraic} has direct statistical implications. Let $O_n^{\text{sym}}$ be any permutation-invariant observable with the property that $\tr[O_n^{\text{sym}} \rho^{\otimes n}] = f(\rho)$. Later in the manuscript, we show that 
\begin{align}
    \mathrm{Var}_{\rho^{\otimes n}}(O_n^{\text{sym}}) &= \frac{1}{n} \mathrm{Var}_{\rho}\big(k O^{\text{sym}}_{k,1}\big) + \mathcal{O}\left(\frac{1}{n^2}\right)\label{eq:implication1}\\
    &= \frac{1}{n} \mathrm{Var}_{\rho}\big(\nabla f(\rho)\big) + \mathcal{O}\left(\frac{1}{n^2}\right)\label{eq:implication2},
\end{align}
where in Equation \eqref{eq:implication1}, $k$ is the maximum polynomial degree of the input to the functional. Further, Equation \eqref{eq:implication1} follows since we show in our manuscript (see Subsection \ref{subsec:U_stat_univ}) that for any order $n$, a globally unbiased permutation-invariant kernel $O_n^{\text{sym}}$ is uniquely identified by U-statistics. Further, Equation \eqref{eq:implication2} follows from Lemma \ref{lemma:generic_poly_algebraic} above. The quantity $\nabla f(\rho)$ is closely related to Quantum Fisher Information, and therefore, these unbiased and permutation-invariant kernels are Cramér-Rao efficient. We prove this formally in Section \ref{sec:variance_optimality}.

\section{Uniqueness of Quantum U-Statistics}\label{sec:U_stat_unique}
Let $f(\rho)$ be as defined in Equation \eqref{eq:generic_poly} with the property that the degree of the input polynomial to $f(\rho)$ is $k$. Consider a permutation-invariant observable $O_n^{\text{sym}}$ with the property that $\tr[O_n^{\text{sym}} \rho^{\otimes n}] = f(\rho)$, where $n>k$. In this section, we show that this observable $O_n^{\text{sym}}$ is uniquely identified by a U-statistic observable, which is an $ n$-copy permutation-invariant observable obtained by extending an unbiased permutation-invariant $k$-copy observable of $f(\rho)$, also popularly known as \textit{kernel} in the literature \cite{GB2010,BOW19}. Before proving this uniqueness, we first discuss quantum U-statistics in the subsection below.

\subsection{Quantum U-Statistics}

Suppose we have access to $n$ copies of $\rho$ and we want to estimate $f(\rho)$. Then the quantum U-statistic observable acting on $\rho^{\otimes n}$ is defined as follows.

\begin{definition}[{\cite[Definition $6.3$]{GB2010}}]\label{def:symm_u_stat}
    Given any product state $\rho^{\otimes n}$, where $\rho\in \mathcal{D}(\mathcal{H})$, let $O_k^{\text{sym}}$ denote a permutation-invariant $k$-copy kernel of a polynomial functional $f(\rho)$, i.e., $\tr[O_k^{\text{sym}} \rho^{\otimes k}] = f(\rho)$, where $k < n$. Then, the quantum U-statistic observable $U_{n,k}$ acting on $\rho^{\otimes n}$ is defined as follows,
    \[U_{n,k}=\frac{1}{\binom{n}{k}}\sum_{1\le i_1<i_2<\ldots<i_k\le n}(O_k^{\text{sym}})_{(i_1,i_2,\ldots,i_k)},\,\]
    where $(O_k^{\text{sym}})_{(i_1,i_2,\ldots,i_k)}$ is acting on the $i_1^{\text{th}}$, $i_2^{\text{th}}$, $\ldots, i_k^{\text{th}}$ subsystems with $1\leq i_1< i_2<\ldots< i_k\leq n$.
\end{definition}

While a local $k$-copy kernel provides an unbiased estimate, evaluating it on a single subset wastes the remaining $n-k$ copies. The U-statistic resolves this by uniformly averaging the local measurement across all $\binom{n}{k}$ subsystem combinations.

To illustrate, consider the functional $f(\rho) = \mathrm{Tr}[\rho^2\sigma^{-1}]$ with a fully known reference state $\sigma$. To construct the permutation-invariant kernel for it, in Equation \eqref{eq:mono_kernel_cons}, we choose $\cB$ to be $\sigma^{-1}$ and set $\gamma_m$ to the transposition $(1,2)$, making the cyclic permutation operator the $\mathrm{SWAP}$ operator $\mathbb{S}$. Then we recover the kernel given in Equation \eqref{eq:mono_kernel_cons}. Evaluating this for $m=2$, the localized $2$-copy permutation-invariant kernel takes the explicit form $$O^{\mathrm{sym}}_{\sigma^{-1}} = \frac{1}{2}\big(\mathbb{S}(\sigma^{-1}\otimes \mathbb{I}) + (\sigma^{-1}\otimes \mathbb{I})\mathbb{S}\big).$$

Given an $n$-copy state $\rho^{\otimes n}$ ($n > 2$), the corresponding quantum U-statistic observable averages this kernel, denoted $O^{\mathrm{sym}}_{\sigma^{-1}}$, over all distinct pairs chosen from $\rho^{\otimes n}$, i.e.,
\begin{equation*}
    U_{n,2} = \frac{1}{\binom{n}{2}} \sum_{1 \le i < j \le n} \big(O^{\mathrm{sym}}_{\sigma^{-1}}\big)_{(i,j)}\,,
\end{equation*}
where $\big(O^{\mathrm{sym}}_{\sigma^{-1}}\big)_{(i,j)}$ applies the permutation-invariant $2$-copy kernel to the $i^{\text{th}}$ and $j^{\text{th}}$ subsystems, and acts as the global identity elsewhere. By structurally enforcing permutation invariance, this construction strictly minimizes statistical variance in accordance with the quantum Rao-Blackwell theorem. Since its non-commutative formalization by Gu\c{t}\u{a} and Butucea \cite{GB2010}, quantum U-statistics has remained the standard framework for asymptotically efficient multi-copy estimation.

\subsection{Uniqueness of Quantum U-Statistics}\label{subsec:U_stat_univ}

Having established the Quantum U-statistic as a symmetrization method, we now prove its geometric universality. In this subsection, we show that for any polynomial functional, the Quantum U-statistic is the mathematically unique unbiased permutation-invariant $n$-copy extension. We establish this by demonstrating that any arbitrary permutation-invariant unbiased estimator inevitably collapses to the exact U-statistic.

To formalize this, let $U_{n,k}$ be the global $k^{\text{th}}$-order quantum U-statistic observable for a polynomial functional $f(\rho)$. Suppose there exists another arbitrary, permutation-invariant $n$-copy kernel $O_n^{\text{sym}}$ that also provides an unbiased estimate of $f(\rho)$ for all quantum states $\rho \in \mathcal{D}(\mathcal{H})$. Since both observables unbiasedly estimate the same functional, their expectation values on any product state $\rho^{\otimes n}$ are identical,
\begin{equation*}
    \mathrm{Tr}\big[O_n^{\text{sym}} \rho^{\otimes n}\big] = f(\rho) \quad \text{and} \quad \mathrm{Tr}\big[U_{n,k} \rho^{\otimes n}\big] = f(\rho).
\end{equation*}

We define the difference operator $D_n = O_n^{\text{sym}} - U_{n,k}$. Since both $O_n^{\text{sym}}$ and $U_{n,k}$ are permutation-invariant Hermitian operators, their linear difference $D_n$ is necessarily a permutation-invariant Hermitian operator as well. Evaluating the expectation value of this difference operator yields
\begin{equation*}
    \mathrm{Tr}\big[D_n \rho^{\otimes n}\big] = \mathrm{Tr}\big[O_n^{\text{sym}} \rho^{\otimes n}\big] - \mathrm{Tr}\big[U_{n,k} \rho^{\otimes n}\big] = f(\rho) - f(\rho) = 0,
\end{equation*}
which holds for every $\rho \in \mathcal{D}(\mathcal{H})$. We will use this observation to conclude that $D_n=0.$ 

Towards this, consider the following subspace 
\begin{equation}
    \cK(M_d) := \big\{ Y \in M_d^{\otimes n} \mid [P_{\pi},Y]=0, \;\forall \pi \in S_n \big\}\,, \label{eq:commutant_subspace}
\end{equation}
where $M_d$ is the vector space of all $d\times d$ matrices. Observe that the operator $D_n$ is an element of this subspace $\cK(M_d)$. In Lemma \ref{lemma:operator_span} below, we show that $ \cK(M_d) = \mathrm{span}_{\mathbb{C}} \big\{ A^{\otimes n} \mid A \in M_d \big\} $. This structural property will allow us to conclude that $D_n = 0$. 

Before formally stating the lemma, we note that an equivalent characterization of this commutant subspace (see Equation \eqref{eq:commutant_subspace}) was established in \cite[Corollary 4]{Harrow2013}. However, our constructive algebraic proof is comparatively simpler and more direct than Harrow's approach. We discuss these methodological differences in detail in Subsection \ref{subsec:compare_haarrow}.

\begin{lemma}[Permutation-Invariant Basis of Operator Tensor Space]\label{lemma:operator_span}
    Let $M_d$ be the vector space of all $d\times d$ matrices. Define the following permutation-invariant subspace
    \begin{equation*}
        \cK(M_d) := \big\{ Y \in _d^{\otimes n} \mid [P_{\pi},Y]=0, \;\forall \pi \in S_n \big\}\,,
    \end{equation*}
    where $P_{\pi}$ denotes the unitary permutation operator corresponding to the permutation $\pi\in S_n$. Then, 
    \begin{equation*}
        \cK(M_d) = \mathrm{span}_{\mathbb{C}} \big\{ A^{\otimes n} \mid A \in M_d \big\}\,.
    \end{equation*}
\end{lemma}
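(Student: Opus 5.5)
The inclusion $\mathrm{span}_{\mathbb{C}}\{A^{\otimes n}\}\subseteq\cK(M_d)$ is immediate. For every $\pi\in S_n$ we have $P_\pi A^{\otimes n}P_\pi^\dagger = A^{\otimes n}$, since conjugation by $P_\pi$ only permutes identical tensor factors. Commutation with $P_\pi$ is preserved under linear combinations, so the whole span lies in $\cK(M_d)$. The content of the lemma is the reverse inclusion. Our plan is to prove it by identifying $M_d^{\otimes n}$ with $(\mathbb{C}^{d^2})^{\otimes n}$ and reducing it to the classical fact that the symmetric tensors are spanned by pure tensor powers.

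First, we describe $\cK(M_d)$ as symmetric tensors. Fix the matrix-unit basis $\{E_{ab}\}$ of $M_d$. Conjugation by $P_\pi$ acts on elementary tensors by permuting factors:
\[
P_\pi (X_1\otimes\cdots\otimes X_n)P_\pi^\dagger = X_{\pi^{-1}(1)}\otimes\cdots\otimes X_{\pi^{-1}(n)}.
\]
Hence $Y\in\cK(M_d)$ exactly when $Y$ is a fixed point of this factor-permuting action. Viewing $M_d\cong V:=\mathbb{C}^{d^2}$, this says $\cK(M_d)=\mathrm{Sym}^n(V)\subset V^{\otimes n}$. Every such $Y$ equals its own symmetrization $\frac{1}{n!}\sum_\pi P_\pi Y P_\pi^\dagger$. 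So $\cK(M_d)$ is spanned by the symmetrizations $\mathcal{S}(X_1\otimes\cdots\otimes X_n)$ of elementary tensors, and it suffices to show that each of these lies in $\mathrm{span}\{A^{\otimes n}\}$.

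Second, we express each symmetrization through tensor powers by polarization. Take $A(t)=\sum_{j=1}^n t_jX_j$ with $t\in\mathbb{C}^n$. Then $A(t)^{\otimes n}$ is a polynomial in $t$, and the coefficient of $t_1t_2\cdots t_n$ is $\sum_{\pi}X_{\pi(1)}\otimes\cdots\otimes X_{\pi(n)} = n!\,\mathcal{S}(X_1\otimes\cdots\otimes X_n)$. We extract this coefficient by a finite linear combination of evaluations, using the standard polarization identity
\[
n!\,\mathcal{S}(X_1\otimes\cdots\otimes X_n)=\frac{1}{2^n}\sum_{\epsilon\in\{\pm1\}^n}\epsilon_1\cdots\epsilon_n\Big(\sum_j\epsilon_jX_j\Big)^{\otimes n}.
\]
Alternatively, one can sample $t_j$ at $n+1$ distinct values per coordinate and invert a Vandermonde system. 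Either way, the symmetrization becomes an explicit $\mathbb{C}$-linear combination of the tensor powers $A^{\otimes n}$ with $A=\sum_j\epsilon_jX_j\in M_d$, which gives $\cK(M_d)\subseteq\mathrm{span}\{A^{\otimes n}\}$.

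The main obstacle is verifying the polarization identity, specifically that all monomials other than $t_1\cdots t_n$ cancel. We would do this by expanding $(\sum_j\epsilon_jX_j)^{\otimes n}$ multilinearly. A word in which index $j$ appears $m_j$ times carries the factor $\prod_j\epsilon_j^{m_j}$, so after multiplying by $\epsilon_1\cdots\epsilon_n$ its weight is $\prod_j\epsilon_j^{m_j+1}$. Averaging over $\epsilon\in\{\pm1\}^n$ kills this weight unless every $m_j+1$ is even, that is, every $m_j$ is odd. Since $\sum_j m_j=n$ and each odd $m_j\ge 1$, this forces every $m_j=1$, leaving exactly the $n!$ permutation terms.

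A secondary point to record is that the span may be taken over $\mathbb{C}$ with arbitrary, not necessarily Hermitian, $A$, exactly as the lemma states. For the subsequent application to $D_n$, we will then need the separate fact that a polynomial identity $\mathrm{Tr}[D_n\rho^{\otimes n}]=0$ on density matrices extends to all $A\in M_d$. That follows from homogeneity together with the fact that density matrices span $M_d$ and have nonempty interior in the trace-one Hermitian slice, so polynomial identities in $\rho$ extend by analytic continuation.
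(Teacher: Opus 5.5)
Your proof is correct, but it reaches the reverse inclusion by a different route than the paper. The paper's proof (Appendix~\ref{app:operator_span}) is an induction on $n$. It writes $S_n(A_1,\dots,A_n)$ as a sum over the positions where $A_n$ can be inserted into $S_{n-1}(A_1,\dots,A_{n-1})$, and substitutes the inductive expansion $\sum_j c_jX_j^{\otimes(n-1)}$. After that it only needs the \emph{first-order} coefficient $\sum_k X^{\otimes(k-1)}\otimes Y\otimes X^{\otimes(n-k)}$ of the one-parameter family $(X+tY)^{\otimes n}$, which it extracts by interpolation in a single variable.

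You instead vectorize $M_d\cong\mathbb{C}^{d^2}$ and identify $\cK(M_d)$ with $\mathrm{Sym}^n(\mathbb{C}^{d^2})$. You then extract the multilinear coefficient of $t_1\cdots t_n$ in $(\sum_j t_jX_j)^{\otimes n}$ in one step via the $\pm1$ polarization identity. Your parity check (every $m_j$ odd with $\sum_j m_j=n$ forces all $m_j=1$) verifies that identity correctly.

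Your route is non-inductive and fully explicit. Each symmetrized elementary tensor becomes a combination of at most $2^n$ tensor powers, with real coefficients $\pm2^{-n}$ and bases $\sum_j\epsilon_jX_j$. These bases are Hermitian whenever the $X_j$ are, so the real Hermitian version (Corollary~\ref{col:operator_span}) follows directly. You also justify why $\cK(M_d)$ is spanned by symmetrized elementary tensors (an invariant element equals its own twirl), which the paper simply asserts ``by definition''.

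The paper's route, in exchange, stays inside operator space and uses only a one-variable coefficient extraction at each step. Note that your vectorization is essentially the reduction to the vector symmetric-subspace fact that the paper attributes to Harrow and contrasts itself with, although your polarization argument makes that reduction self-contained. Your closing remark on extending $\mathrm{Tr}[D_n\rho^{\otimes n}]=0$ to all $A\in M_d$ (homogeneity, the open positive-definite cone, complexification) is also sound. It plays the same role as the paper's top-coefficient argument with $\gamma=(\mathbb{I}+\varepsilon X)/\mathrm{Tr}[\mathbb{I}+\varepsilon X]$.
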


\begin{proof}
    See Appendix \ref{app:operator_span} for the proof.
\end{proof}

In fact, to prove that $D_n=0$, we do not need to invoke Lemma \ref{lemma:operator_span} in its full generality. One can easily verify that for the vector space $L_d$ of all $d\times d$ Hermitian matrices, the same proof naturally yields \[\cK(L_d)=\mathrm{span}_{\mathbb{R}}\{X^{\otimes n}\mid X\in L_d\}\,,\] where $\cK(L_d)$ is as defined in Lemma \ref{lemma:operator_span} (with the Hermitian tensor space $L_d^{\otimes n}$ replacing the general tensor space $M_d^{\otimes n}$). We state this direct fact formally as a corollary.

\begin{corollary}\label{col:operator_span}
    Let $L_d$ be the vector space of all $d\times d$ Hermitian matrices. Define the following permutation-invariant subspace
    \begin{equation*}
        \cK(L_d) := \big\{ Y \in L_d^{\otimes n} \mid [P_{\pi},Y]=0, \;\forall \pi \in S_n \big\}\,,
    \end{equation*}
    where $P_{\pi}$ denotes the unitary permutation operator corresponding to the permutation $\pi\in S_n$. Then, 
    \begin{equation*}
        \cK(L_d) = \mathrm{span}_{\mathbb{R}} \big\{ X^{\otimes n} \mid X \in L_d \big\}\,.
    \end{equation*}
\end{corollary}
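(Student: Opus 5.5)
We need to prove that the real vector space of permutation-invariant Hermitian operators on $(\mathbb{C}^d)^{\otimes n}$ is the real span of $X^{\otimes n}$ with $X$ Hermitian. Here $L_d^{\otimes n}$ is read as the real space of Hermitian operators on $\mathcal H^{\otimes n}$, which is the real tensor power of $L_d$.

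The inclusion $\supseteq$ is immediate. Each $X^{\otimes n}$ with $X$ Hermitian is Hermitian, and $P_\pi X^{\otimes n}P_\pi^\dagger = X^{\otimes n}$, so real combinations of such operators lie in $\cK(L_d)$. The real content is the inclusion $\subseteq$, and for it I would transport the argument of Lemma \ref{lemma:operator_span} to the real setting.

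\textbf{Step 1 (basis and symmetrizer).} Fix a real basis $\{E_1,\dots,E_{d^2}\}$ of $L_d$, for instance the generalized Gell-Mann matrices together with $\mathbb{I}$. Then $\{E_{j_1}\otimes\cdots\otimes E_{j_n}\}$ is a real basis of $L_d^{\otimes n}$. Define the real-linear map $\mathcal S(Y)=\frac{1}{n!}\sum_{\pi}P_\pi Y P_\pi^\dagger$.
\begin{itemize}
\item $\mathcal S$ preserves Hermiticity.
\item $\mathcal S$ fixes every element of $\cK(L_d)$, so it maps onto $\cK(L_d)$.
\item It permutes tensor factors: $\mathcal S(E_{j_1}\otimes\cdots\otimes E_{j_n})$ is the symmetrized tensor.
\end{itemize}
Hence $\cK(L_d)$ is spanned over $\mathbb R$ by the symmetrized tensors $\mathrm{Sym}(E_{j_1}\otimes\cdots\otimes E_{j_n})$. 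Such a tensor depends only on the multiset of indices, i.e. on multiplicities $(\alpha_1,\dots,\alpha_{d^2})$ with $|\alpha|=n$.

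\textbf{Step 2 (polarization).} For real $t=(t_1,\dots,t_{d^2})$, set $X_t=\sum_a t_aE_a\in L_d$. Expanding multilinearly gives
\[
X_t^{\otimes n}=\sum_{|\alpha|=n}\binom{n}{\alpha}t^\alpha\,\mathrm{Sym}_\alpha,
\]
where $\mathrm{Sym}_\alpha$ is the symmetrized basis tensor with multiplicity $\alpha$. This is a homogeneous polynomial identity in $t$ with coefficients in $\cK(L_d)$.

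\textbf{Step 3 (extracting coefficients).} This is the main obstacle: showing each coefficient $\mathrm{Sym}_\alpha$ is a real combination of values $X_t^{\otimes n}$. I would use the standard fact that the monomials $t^\alpha$ with $|\alpha|=n$ are linearly independent as functions on $\mathbb R^{d^2}$. Consequently there are finitely many points $t^{(1)},\dots,t^{(N)}$, with $N=\binom{n+d^2-1}{n}$, at which the evaluation matrix $[ (t^{(r)})^\alpha ]_{r,\alpha}$ is invertible. The existence of these points follows because a nonzero polynomial cannot vanish on all of $\mathbb R^{d^2}$, so one can choose the points inductively to keep the rows independent. Inverting this real matrix expresses each $\binom{n}{\alpha}\mathrm{Sym}_\alpha$ as a real linear combination of $X_{t^{(r)}}^{\otimes n}$. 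If Lemma \ref{lemma:operator_span}'s proof already uses a Vandermonde-type or finite-difference argument, the same argument applies verbatim with real scalars; the only point to check is that every scalar stays real.

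\textbf{Step 4 (conclusion).} Combining Steps 1–3 gives $\cK(L_d)\subseteq \mathrm{span}_{\mathbb R}\{X^{\otimes n}: X\in L_d\}$.

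As an alternative route, one could deduce the statement from Lemma \ref{lemma:operator_span}. Write $Y\in\cK(L_d)$ as $\sum c_r A_r^{\otimes n}$ and take Hermitian parts. This is less direct, because $(A^{\otimes n}+A^{\dagger\otimes n})/2$ is not itself of the form $X^{\otimes n}$, so I prefer the direct argument above.
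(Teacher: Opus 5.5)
Your argument is correct, but it is organized differently from the paper's. The paper obtains the corollary by noting that its inductive proof of Lemma~\ref{lemma:operator_span} in Appendix~\ref{app:operator_span} runs unchanged with real scalars. That proof has three ingredients. First, $\cK$ is spanned by symmetrized elementary tensors. Second, one inducts on $n$ by inserting $A_n$ into every slot of $S_{n-1}(A_1,\dots,A_{n-1})=\sum_j c_jX_j^{\otimes(n-1)}$. Third, the resulting mixed terms $T(X,Y)=\sum_{k}X^{\otimes(k-1)}\otimes Y\otimes X^{\otimes(n-k)}$ are recovered as the linear coefficient of $(X+tY)^{\otimes n}$ by one-variable interpolation at distinct real $t$, which keeps every operator Hermitian and every coefficient real.

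You replace the induction by a single multivariate polarization. After the same symmetrizer reduction, you expand $X_t^{\otimes n}$ in a fixed real basis of $L_d$. You then extract all symmetrized basis tensors $\mathrm{Sym}_\alpha$ at once by inverting a real evaluation matrix at $N=\binom{n+d^2-1}{n}$ points. You justify the existence of these points correctly, from the fact that a polynomial vanishing on $\mathbb{R}^{d^2}$ is zero.

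The paper's route needs only the univariate Vandermonde fact and no choice of basis. Yours needs multivariate unisolvence but gives more:
\begin{itemize}
\item an explicit spanning family of exactly $N$ tensor powers;
\item hence $\dim_{\mathbb{R}}\cK(L_d)=\binom{n+d^2-1}{n}$, since the $\mathrm{Sym}_\alpha$ are linearly independent, being supported on disjoint sets of basis tensors;
\item an explicit check that all coefficients are real, instead of leaving it to ``the same proof naturally yields.''
\end{itemize}

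If you want to bypass the existence-of-points step entirely, use the inclusion--exclusion polarization identity
\[
\sum_{\pi\in S_n}A_{\pi(1)}\otimes\cdots\otimes A_{\pi(n)}=\sum_{\emptyset\neq S\subseteq[n]}(-1)^{n-|S|}\Big(\sum_{i\in S}A_i\Big)^{\otimes n}.
\]
It writes each symmetrized tensor of Hermitian $A_i$ as an explicit integer combination of $n$-th tensor powers of Hermitian matrices.
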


Using the above span property, let us now rigorously prove that $D_n=0$.

\begin{corollary}[Uniqueness of Permutation-Invariant Operators]
    Let $D_n \in \mathcal{B}(\mathcal{H}^{\otimes n})$ be a permutation-invariant Hermitian operator. If the expectation value $\mathrm{Tr}[D_n \rho^{\otimes n}] = 0$, for all $\rho \in \mathcal{D}(\mathcal{H})$, then $D_n$ is identically the null operator ($D_n = 0$).
\end{corollary}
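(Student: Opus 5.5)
The plan is to combine Corollary \ref{col:operator_span} with a polynomial-identity argument. The Hilbert--Schmidt inner product $\langle Y, Z\rangle = \mathrm{Tr}[Y Z]$ is a real inner product on Hermitian operators. So it suffices to show that $\mathrm{Tr}[D_n X^{\otimes n}] = 0$ for every Hermitian $X \in L_d$. Indeed, by Corollary \ref{col:operator_span}, $D_n$ itself is a real linear combination $D_n = \sum_r c_r X_r^{\otimes n}$, since $D_n$ is Hermitian and permutation-invariant. This gives $\mathrm{Tr}[D_n^2] = \sum_r c_r \mathrm{Tr}[D_n X_r^{\otimes n}] = 0$, and positivity of $\mathrm{Tr}[D_n^2]$ forces $D_n = 0$.

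The main step is extending the vanishing from density operators to all Hermitian $X$. Define $p(X) := \mathrm{Tr}[D_n X^{\otimes n}]$. It is a homogeneous real polynomial of degree $n$ in the $d^2$ real coordinates of $X$. By hypothesis $p(\rho) = 0$ for all $\rho \in \mathcal{D}(\mathcal{H})$.

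\textbf{Step 1 (homogeneity).} For any positive definite $X$ we write $X = t\rho$ with $t = \mathrm{Tr}[X] > 0$ and $\rho = X/t$ a density operator. Then $p(X) = t^n p(\rho) = 0$. Hence $p$ vanishes on the cone of positive definite matrices.

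\textbf{Step 2 (identity theorem).} The positive definite cone is a nonempty open subset of the real vector space $L_d \cong \mathbb{R}^{d^2}$. A real polynomial vanishing on a nonempty open set is identically zero, since all its Taylor coefficients at an interior point vanish. Therefore $p \equiv 0$ on $L_d$.

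An alternative route that avoids Step 2 is polarization. We pick a basis of Hermitian matrices near $\mathbb{I}/d$, all positive definite, and observe that the multilinear form $(X_1,\dots,X_n)\mapsto \mathrm{Tr}[D_n(X_1\otimes\cdots\otimes X_n)]$ is symmetric. This follows from the permutation invariance of $D_n$. The form is determined by its diagonal values $p$ through the polarization identity, so it vanishes on a basis and hence everywhere. Either route gives the same conclusion.

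I expect Step 2 to be the only real obstacle, namely justifying that vanishing on the density-operator set (which is not open in $L_d$, having codimension one) propagates to all Hermitian operators. Homogeneity together with the open-cone argument handles this cleanly. One must only note that $\mathcal{D}(\mathcal{H})$ has nonempty relative interior, e.g. $\mathbb{I}/d$, so that the scaled set is genuinely open. Once this is in place, the final step reduces to the orthogonality argument with Corollary \ref{col:operator_span} described in the first paragraph, which is routine.
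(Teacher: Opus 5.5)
Your proposal is correct and follows essentially the same route as the paper: first extend the vanishing to $\mathrm{Tr}[D_n X^{\otimes n}]=0$ for every Hermitian $X$, then use Corollary \ref{col:operator_span} to conclude $\mathrm{Tr}[D_n^2]=0$. The only difference is in the extension step. The paper restricts to the density operators $(\mathbb{I}+\varepsilon X)/\mathrm{Tr}[\mathbb{I}+\varepsilon X]$ and reads off the $\varepsilon^n$ coefficient of a univariate polynomial, whereas you use homogeneity together with the identity theorem on the open positive-definite cone, which amounts to the same idea.
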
\label{col:symmetric_uniqueness}

\begin{proof}
    Consider any arbitrary non-zero Hermitian operator $X$ ($\in L_d$). For any $\varepsilon \in \left(0, \frac{1}{\|X\|_\infty}\right)$, the operator $\gamma = \frac{\mathbb{I}+\varepsilon X}{\mathrm{Tr}[\mathbb{I}+\varepsilon X]}$ is strictly positive and has unit trace, thus constituting a valid density operator in $\mathcal{D}(\mathcal{H})$. Evaluating the expectation value of $D_n$ on the $n$-copy state $\gamma^{\otimes n}$ gives,
    \begin{equation*}
        \mathrm{Tr}[D_n\gamma^{\otimes n}] = 0.
    \end{equation*}
    Expanding $\gamma$ into the left-hand side of the above expression, we get,
    \begin{equation}\label{eq:trace_polynomial}
        \frac{1}{\left(\mathrm{Tr}[\mathbb{I}+\varepsilon X]\right)^n} \mathrm{Tr}\left[D_n(\mathbb{I}+\varepsilon X)^{\otimes n}\right] = 0.
    \end{equation}
    For the above equality to hold across the continuous interval $\varepsilon \in \left(0, \frac{1}{\|X\|_\infty}\right)$, the numerator $\mathrm{Tr}\left[D_n(\mathbb{I}+\varepsilon X)^{\otimes n}\right]$, which is a polynomial in $\varepsilon$ of degree at most $n$, must be identically equal to zero. Consequently, all coefficients of this polynomial must independently vanish. Extracting the coefficient of the highest-order term $\varepsilon^n$ dictates that $\mathrm{Tr}[D_n X^{\otimes n}] = 0$, for any arbitrary Hermitian operator $X$.

    By Corollary \ref{col:operator_span}, the subspace $\cK(L_d)$ (of the Hermitian tensor space $\cL_d^{\otimes n}$) is linearly spanned by elements of the form $X^{\otimes n}$ where $X \in L_d$. Since $\mathrm{Tr}[D_n X^{\otimes n}] = 0$ for all $X \in L_d$, linearity ensures that $D_n$ is orthogonal to the entire subspace $\cK(L_d)$. Since $D_n$ itself is a permutation-invariant Hermitian operator, it resides entirely within the subspace $\cK(L_d)$. Therefore, its inner product with itself must be zero ($\mathrm{Tr}[D_n^2] = 0$), which implies $D_n$ is identically the null operator ($D_n = 0$). This completes the proof. 
\end{proof}

Therefore, the above corollary immediately implies that $O_n^{\text{sym}} = U_{n,k}$, establishing that the Quantum U-statistic is the sole valid permutation-invariant extension. We state this structural consequence as a corollary.

\begin{corollary}[Uniqueness of Quantum U-Statistic]\label{lemma:u_stat_uniqueness}
    Let $U_{n,k}$ be the global $k^{\text{th}}$-order quantum U-statistic observable for a polynomial $f(\rho)$, and let $O_n^{\text{sym}} \in \mathcal{B}(\mathcal{H}^{\otimes n})$ be any arbitrary permutation-invariant kernel that unbiasedly estimates $f(\rho)$ for all $\rho \in \mathcal{D}(\mathcal{H})$. Then, $O_n^{\text{sym}}$ is identically equal to $U_{n,k}$.
\end{corollary}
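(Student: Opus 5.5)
The plan is to reduce the statement to the preceding uniqueness corollary for permutation-invariant operators with vanishing expectation. First I will check that $U_{n,k}$ meets the hypotheses. It is Hermitian, since each $(O_k^{\text{sym}})_{(i_1,\ldots,i_k)}$ is the Hermitian kernel tensored with identities on the complementary $n-k$ sites.

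It is permutation-invariant. For $\pi\in S_n$, conjugation $P_{\pi^{-1}}(O_k^{\text{sym}})_{(i_1,\ldots,i_k)}P_\pi$ gives the kernel placed on the image subset $\pi^{-1}(\{i_1,\ldots,i_k\})$, possibly with an internal reordering of its $k$ legs. That reordering is absorbed because $O_k^{\text{sym}}$ commutes with every $P_\tau$, $\tau\in S_k$. Since $\pi$ acts bijectively on the family of $k$-element subsets of $\{1,\ldots,n\}$, the uniform average over all $\binom{n}{k}$ subsets is unchanged. Hence $[P_\pi,U_{n,k}]=0$.

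It is unbiased. For each subset, the partial trace over the complement gives $\mathrm{Tr}\big[(O_k^{\text{sym}})_{(i_1,\ldots,i_k)}\rho^{\otimes n}\big]=\mathrm{Tr}[O_k^{\text{sym}}\rho^{\otimes k}]\,(\mathrm{Tr}\rho)^{n-k}=f(\rho)$. Averaging over subsets then yields $\mathrm{Tr}[U_{n,k}\rho^{\otimes n}]=f(\rho)$ for every $\rho\in\mathcal D(\mathcal H)$.

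Next, set $D_n=O_n^{\text{sym}}-U_{n,k}$. As a difference of Hermitian operators commuting with all $P_\pi$, it is Hermitian and permutation-invariant, and $\mathrm{Tr}[D_n\rho^{\otimes n}]=f(\rho)-f(\rho)=0$ for all density operators $\rho$. The corollary on uniqueness of permutation-invariant operators then gives $D_n=0$. That corollary is itself based on Corollary~\ref{col:operator_span}, which expresses $\cK(L_d)$ as the real span of $X^{\otimes n}$, together with the polynomial-in-$\varepsilon$ argument applied to $\gamma=(\mathbb I+\varepsilon X)/\mathrm{Tr}[\mathbb I+\varepsilon X]$. So $O_n^{\text{sym}}=U_{n,k}$.

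The only step that needs real care is the permutation invariance of $U_{n,k}$. There one has to track that conjugating an embedded kernel both moves its support and permutes its legs, and both effects have to be neutralised: the relabelling by the bijection on $k$-subsets, the leg permutation by the $S_k$-invariance of $O_k^{\text{sym}}$. Everything else is a direct invocation of earlier results. I would also remark that the hypothesis must hold for all $\rho$, which is what lets the earlier corollary use the full-rank perturbed states $\gamma$. Since $U_{n,k}$ does not depend on which admissible kernel is chosen, the argument also shows that any two permutation-invariant $k$-copy kernels of $f$ produce the same U-statistic.
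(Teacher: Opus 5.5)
Your proposal is correct and matches the paper's argument. You form $D_n=O_n^{\text{sym}}-U_{n,k}$, note that it is a Hermitian permutation-invariant operator with $\mathrm{Tr}[D_n\rho^{\otimes n}]=0$ for every $\rho$, and conclude $D_n=0$ from the uniqueness corollary built on the spanning result of Lemma~\ref{lemma:operator_span}. You also check explicitly that $U_{n,k}$ is Hermitian, permutation-invariant (supports move by the bijection on $k$-subsets, and leg reorderings are absorbed by the $S_k$-invariance of the kernel) and unbiased; the paper uses these facts without verifying them.
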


\subsection{Comparison of Lemma \ref{lemma:operator_span} and \cite[Corollary $4$]{Harrow2013}}\label{subsec:compare_haarrow}

Lemma \ref{lemma:operator_span} is precisely the operator-spanning result proved in Corollary 4 of Harrow \cite{Harrow2013}. Both results characterize the commutant of the permutation representation on ($M_d^{\otimes n}$) as the linear span of identical tensor powers ($X^{\otimes n}$). Thus, we do not claim the spanning property itself as a new result.

Our proof, however, is arguably simpler and more direct. Harrow derives the operator result through the corresponding spanning property of the permutation-invariant vector subspace, whereas we work directly in the operator tensor space. Our proof uses only elementary algebra: permutation-invariant tensors are written as linear combinations of symmetrized elementary tensors, and an induction based on the expansion of $(X+tY)^{\otimes n}$ shows that each such tensor is a linear combination of identical tensor powers. Thus, while Lemma \ref{lemma:operator_span} gives the same mathematical statement as Harrow's Corollary 4, our proof provides a short, self-contained, and elementary derivation directly suited to the estimation framework of this manuscript.

\section{Cram\'er-Rao Efficiency of Quantum U-Statistic}\label{sec:variance_optimality}

From Lemma \ref{lemma:u_stat_uniqueness} in the previous section, any unbiased, permutation-invariant $n$-copy kernel $O^{\mathrm{sym}}_{n}$ for estimating $f(\rho) = c_0 + \sum_{i=1}^{k} \mathrm{Tr}\left[ \prod_{j=1}^{i} \big(A^{(i)}_j \rho\big) \right]$, is the same as the U-statistic estimator $U_{n,k}$. Therefore, we have
\begin{align}
    \mathrm{Var}_{\rho^{\otimes n}}(O_n^{\text{sym}})
    &= \frac{1}{n} \mathrm{Var}_{\rho}\big(k O_{k,1}^{\text{sym}}\big) + \mathcal{O}\left(\frac{1}{n^2}\right)\label{eq:hoeffding_decomp_var}\\
&{=}    \frac{1}{n} \mathrm{Var}_{\rho}\big(\nabla f(\rho)\big) + \mathcal{O}\left(\frac{1}{n^2}\right)\,\label{eq:grad_marg_connection}
\end{align}
where $O_{k,1}^{\text{sym}}$ is the first-order marginal of the permutation-invariant kernel $O_{k}^{\text{sym}}$ for estimating $f(\rho)$, as constructed in Equation \eqref{eq:sym_k_cons}. In the above, Equation \eqref{eq:hoeffding_decomp_var} \footnote{For more details on this, see Appendix \ref{app:proof_grad_marg}} follows from \cite[Lemma $6.4$]{GB2010} and Equation \eqref{eq:grad_marg_connection} follows from Lemma \ref{lemma:generic_poly_algebraic} above.

From Equation \eqref{eq:grad_marg_connection}, we have identified the leading-order term in the variance of $O_n^{\text{sym}}$ to be  $\mathrm{Var}_{\rho}\big(\nabla f(\rho)\big)$.
The natural question is whether one can construct an estimator whose asymptotic variance attains this benchmark, in the sense of Cram\'er-Rao efficiency. Before exploring this possibility, we first discuss Cram\'er-Rao type bound in the subsection below.

\subsection{The Quantum Cram\'er-Rao Type Bound}

In quantum estimation theory, the fundamental limit on estimation precision is governed by the quantum Cram\'er-Rao bound (QCRB) \cite{Helstrom1967,Boixo_2007}. To rigorously formulate these bounds in local quantum estimation, Suzuki et al. \cite{Suzuki_2020} developed a comprehensive statistical framework based on the Quantum Fisher Information (QFI). In this manuscript, we employ a specialized, Symmetric Logarithmic Derivative (SLD)-based version of their framework, as presented in \cite{Hayashi_2025}.

In \cite{Hayashi_2025}, the authors, under certain regularity assumptions, considered a quantum statistical model with a parameter vector $\theta = (\theta^1,\cdots,\theta^k) \in \mathbb{R}^k$, which generates a family of density operators $\rho_\theta$. Below, we first discuss these regularity conditions in Assumption \ref{assump:regularity}. 

\begin{assumption}[Standard Regularity Conditions {\cite{Suzuki_2020}}]\label{assump:regularity}
For the parameterization mapping $\theta \mapsto \rho_\theta$, we assume:
\begin{enumerate}
    \item \textbf{Strict Positivity (Full-Rank):} The quantum states are strictly positive ($\rho_\theta > 0$).
    \item \textbf{Smoothness and Independence:} The mapping is smooth, and the tangent operators $\partial \rho_\theta / \partial \theta^j$ are linearly independent.
\end{enumerate}
\end{assumption}

These regularity conditions are essential due to the following mathematical reasons:
\begin{itemize}
    \item \textbf{Resolving the Infinitesimal Perturbation Issue:} If the true state $\rho_*$ is rank-deficient and resides on the boundary of the density matrix space, moving an infinitesimal distance in an arbitrary traceless Hermitian direction instantly produces negative eigenvalues. This violates positive semi-definiteness and renders local parameterizations invalid. Restricting the analysis to strictly positive states guarantees a valid, fully dimensional neighborhood of quantum states.
    \item \textbf{Resolving the Unique Lyapunov Solution Existence Issue:} Defining the SLD operators requires uniquely solving the Lyapunov equation (Equation \eqref{eq:sld_lyapunov}). When $\rho_\theta$ is strictly positive, the linear super-operator governing the Lyapunov equation is invertible on the space of Hermitian matrices, ensuring the unique existence of bounded SLD operators. More broadly, the guaranteed existence and uniqueness of solutions to continuous-time Lyapunov systems are foundational in matrix analysis, where unique solutions are given by convergent integral representations of the form $X = \int_0^\infty e^{\tau A} Q e^{\tau A^\dagger} d\tau$ when $A$ is Hurwitz stable \cite{BR97, Heinz1951} (see also Proposition \ref{prop:lyapunov}).
\end{itemize}

For each parameter $\theta^j$, the Symmetric Logarithmic Derivative (SLD) operator $L_j$ is the self-adjoint solution to the Lyapunov equation,
\begin{equation}
    \frac{\partial \rho_\theta}{\partial \theta^j} = \frac{1}{2}(\rho_\theta L_j + L_j \rho_\theta) \equiv \rho_\theta \circ L_j. \label{eq:sld_lyapunov}
\end{equation}

The statistical distinguishability of these states under small parameter shifts is given by the SLD Quantum Fisher Information (QFI) matrix, $J(\theta)$. Its elements are defined by the SLD inner product,
\begin{equation}
    J_{ij}(\theta) := \langle L_i, L_j \rangle_{\rho_\theta}.\label{eq:SLD_QFI_matrix}
\end{equation}
where the operation $\circ$ is known as the matrix Jordan product.

Since probabilities must sum to one ($\mathrm{Tr}[\rho_\theta] = 1$), the expectation value of the SLD operator is always zero, i.e., $\mathrm{Tr}[\rho_\theta L_j] = 0$. This holds since taking the partial derivative of the unit trace gives $\frac{\partial}{\partial \theta^j} \mathrm{Tr}[\rho_\theta] = \mathrm{Tr}\left[ \frac{\partial \rho_\theta}{\partial \theta^j} \right] = 0$. Substituting the Lyapunov equation (Equation  \eqref{eq:sld_lyapunov}) into this trace and applying the linearity and cyclic property of the trace ($\mathrm{Tr}[AB] = \mathrm{Tr}[BA]$) simplifies the expression to the expectation value of the SLD,
\begin{equation*}
    \mathrm{Tr}\left[ \frac{1}{2}(\rho_\theta L_j + L_j \rho_\theta) \right] = \frac{1}{2} \Big( \mathrm{Tr}[\rho_\theta L_j] + \mathrm{Tr}[\rho_\theta L_j] \Big) = \mathrm{Tr}[\rho_\theta L_j] = 0.
\end{equation*}

Let a Positive Operator-Valued Measure (POVM) $\mathcal{M} = \{M_x\}$, with $\sum_x M_x = \mathbb{I}$, define the estimation strategy. Each outcome $x$ occurs with probability $p(x|\theta) = \mathrm{Tr}[\rho_\theta M_x]$. The estimator $\hat{\theta}$ is a classical function $\hat{\theta}(x) \in \mathbb{R}^k$ that maps an outcome to a parameter estimate. Thus, $\hat{\theta} = (\hat{\theta}^1, \hat{\theta}^2, \dots, \hat{\theta}^k)^T$ is a random vector, where each component $\hat{\theta}^i$ estimates the $i^{\text{th}}$ parameter $\theta^i$.

For the estimator $\hat{\theta}$ to be locally unbiased at the true parameter $\theta_0$ (corresponding to the true unknown state $\rho_{\theta_0} = \rho_*$), its expected value must equal the true parameter, meaning $\mathbb{E}_{\theta_0}[\hat{\theta}] = \sum_x \hat{\theta}(x) \mathrm{Tr}[\rho_{\theta_0} M_x] = \theta_0$. Additionally, its local Jacobian matrix $J_E$ must equal the identity matrix at $\theta_0$. The elements of this Jacobian measure how the expectation of each estimator component $\hat{\theta}^i$ changes with respect to each parameter $\theta^j$,
\begin{equation}
    (J_E)_{ij} = \left. \frac{\partial}{\partial \theta^j} \mathbb{E}_{\theta}[\hat{\theta}^i] \right|_{\theta_0} = \delta_{ij}.\label{eq:loc_jac}
\end{equation}

This condition places two geometric requirements on the estimator:
\begin{itemize}
    \item \textbf{Local Sensitivity (Diagonal Elements):} For $i = j$, the condition $\left. \frac{\partial}{\partial \theta^i} \mathbb{E}_{\theta}[\hat{\theta}^i] \right|_{\theta_0} = 1$ ensures that the estimator correctly scales with infinitesimal changes in the target parameter. If the underlying parameter shifts by a differential amount $\varepsilon$, the expected value of the estimator must shift by exactly $\varepsilon$.
    \item \textbf{Parameter Decoupling (Off-Diagonal Elements):} For $i \neq j$, the condition $\left. \frac{\partial}{\partial \theta^j} \mathbb{E}_{\theta}[\hat{\theta}^i] \right|_{\theta_0} = 0$ enforces statistical independence between different parameters. If an independent parameter $\theta^j$ fluctuates, it must not alter the expected value of the estimator for $\theta^i$. This prevents estimation bias caused by variations in other orthogonal directions of the parameter space.
\end{itemize}

In quantum parameter estimation, the fundamental variance limits are derived for the broad class of \textit{locally unbiased} estimators. This requires that at a specific operating point $\theta_0$, the estimator is properly centered and its first-order Taylor expansion precisely mirrors the parameter space. While we explicitly construct a \textit{globally unbiased} estimator in this work (where $\mathbb{E}_\theta[\hat{\theta}^i] = \theta^i$ holds for all $\theta$), the quantum Cram\'er-Rao framework remains formulated locally since the Quantum Fisher Information is a local geometric quantity. Since any globally unbiased estimator inherently satisfies the local unbiasedness conditions at every possible $\theta_0$, the local variance bound establishes a rigorous theoretical floor that our global estimator must obey.

Under these conditions, the fundamental multiparameter quantum Cram\'er-Rao inequality, originally established by Helstrom and Holevo \cite{Helstrom1968, helstrom1976quantum, Holevo1982} bounds the covariance matrix of any arbitrary estimator $\hat{\theta}$ as
\begin{equation*}
    \mathrm{Cov}(\hat{\theta}) \ge J_E J^{-1}(\theta_0) J_E^T,
\end{equation*}
where $J(\theta_0)$ is the SLD QFI matrix (see Equation  \eqref{eq:SLD_QFI_matrix}) evaluated at the true parameter $\theta_0$ and $J_E$ is the local Jacobian matrix (see Equation  \eqref{eq:loc_jac}) evaluated at $\theta_0$. 

This structure highlights a direct statistical trade-off: an estimator could artificially achieve a lower variance by under-responding to parameter shifts (i.e., elements of $J_E$ being less than $1$), but this noise reduction comes at the cost of estimation bias. To establish a rigorous lower bound without bias, the standard framework restricts the analysis to locally unbiased estimators. Enforcing local sensitivity and parameter decoupling ensures the Jacobian is exactly the identity matrix ($J_E = \mathbb{I}$). This collapses the generalized inequality to its standard form,
\begin{equation*}
    \mathrm{Cov}(\hat{\theta}) \ge J^{-1}(\theta_0).
\end{equation*}

To estimate a scalar functional $f(\rho_\theta)$, we use a parameterization where the first component $\theta^1$ is the target functional (i.e., $\theta^1 = f(\rho_\theta)$). The remaining components $\theta^2, \dots, \theta^k$ are said to be nuisance parameters, which are basically unknown degrees of freedom that define the state but are not the target of the estimation. 

In multiparameter estimation, the minimum variance for the target estimator is bounded by the $(1,1)$ component of the inverse QFI matrix, $[J^{-1}(\theta_0)]_{11}$, rather than just its isolated Fisher information. To understand the orthogonality condition, we partition the QFI matrix $J(\theta_0)$ into blocks,
\begin{equation*}
    J = \begin{pmatrix} J_{11} & \mathbb{J}^T \\ \mathbb{J} & J_{\mathrm{nuis}} \end{pmatrix},
\end{equation*}
where $J_{11} = \langle L_1, L_1 \rangle_{\rho_*}$ is the isolated Fisher information of the target parameter $\theta^1$, $J_{\mathrm{nuis}}$ is the $(k-1) \times (k-1)$ QFI submatrix corresponding to the nuisance parameters, and $\mathbb{J}$ is a column vector containing the cross-information terms $J_{j1} = \langle L_j, L_1 \rangle_{\rho_*}$ for $j \ge 2$.

Using the block matrix inversion formula, the $(1,1)$ component of the inverse matrix is the reciprocal of the Schur complement of block $J_{\mathrm{nuis}}$,
\begin{equation*}
    [J^{-1}(\theta_0)]_{11} = \frac{1}{J_{11} - \mathbb{J}^T J_{\mathrm{nuis}}^{-1} \mathbb{J}}.
\end{equation*}
Since the QFI matrix is positive semi-definite, the term $\mathbb{J}^T J_{\mathrm{nuis}}^{-1} \mathbb{J}$ is non-negative ($\ge 0$). Subtracting this term in the denominator represents a statistical penalty: it increases the minimum variance due to the overlap (or cross-talk) between the target parameter and the nuisance parameters.

To find the tightest bound and decouple the target functional from the unknown background state, we must eliminate this penalty. We do this by constructing a local parameterization where the tangent operators for the nuisance parameters correspond to SLDs $L_j$ that are orthogonal to the target SLD $L_1$ under the SLD inner product. 

By enforcing the orthogonality condition $\langle L_1, L_j \rangle_{\rho_*} = 0$ for all $j \ge 2$, every element of the cross-information vector $\mathbb{J}$ becomes zero. This makes the QFI matrix block-diagonal. In this orthogonal geometry, the penalty term $\mathbb{J}^T J_{\mathrm{nuis}}^{-1} \mathbb{J}$ vanishes. The single-copy multiparameter Cramér-Rao type bound $V_{\mathrm{CR}}$ for the functional then simplifies to the inverse variance of its isolated SLD,
\begin{equation*}
    V_{\mathrm{CR}} = [J^{-1}(\theta_0)]_{11} = \frac{1}{J_{11}} = \langle L_1, L_1 \rangle_{\rho_*}^{-1}.
\end{equation*}

This relation, stated in Proposition \ref{prop:qcrb_loca} below, provides the mathematical basis for deriving the Cram\'er-Rao variance bound of our permutation-invariant observable $O_n^{\mathrm{sym}}$ for polynomial functionals.

\subsection{Additivity of Quantum Cram\'er Bound and the $n$-Copy Scaling Limit \cite{Suzuki_2020}}

When provided with $n$ independent and identically distributed (i.i.d.) copies of the quantum state, the joint statistical model is governed by the tensor product state $\rho_\theta^{\otimes n}$. A foundational property of the Quantum Fisher Information is its additivity over independent subsystems. Consequently, as originally established by Helstrom and Holevo \cite{helstrom1976quantum, Holevo1982}, the strict additivity of the Quantum Fisher Information over independent subsystems ensures that the QFI matrix for the collective $n$-copy state scales exactly linearly with the number of copies,
\begin{equation*}
    J_{ij}^{(n)}(\theta) = n J_{ij}(\theta).
\end{equation*}

This additivity establishes the mathematical origin of the $1/n$ statistical scaling limit. By taking the matrix inverse of the $n$-copy QFI matrix, we obtain,
\begin{equation*}
    [J^{(n)}(\theta)]^{-1} = \frac{1}{n} J^{-1}(\theta).
\end{equation*}
Therefore, the minimum achievable variance for estimating the target parameter using any locally unbiased $n$-copy measurement strategy is exactly the single-copy bound divided by $n$,
\begin{equation*}
    V_{\mathrm{CR}}^{(n)} = \frac{V_{\mathrm{CR}}}{n}.
\end{equation*}

As we formally prove later in this section, the single-copy Cram\'er-Rao bound for a generic scalar-valued polynomial functional $f(\rho)$ evaluates algebraically to $V_{\mathrm{CR}} = \mathrm{Var}_{\rho_*}(\nabla f(\rho_*))$, where the variance is taken over the functional's analytic gradient operator. Thus, the fundamental asymptotic variance limit for any unbiased $n$-copy estimator is identically $\mathrm{Var}_{\rho_*}(\nabla f(\rho_*))/n$.

\subsection{Parameterization and Local Unbiasedness}

\textit{To formalize the Cram\'er-Rao bound, we use the estimation framework with nuisance parameters designed in \cite{Suzuki_2020}. However, for better readability, we will utilise the version presented in \cite[Section 2]{Hayashi_2025}.} A density matrix on a $d$-dimensional Hilbert space is Hermitian and has a unit trace, so the space of quantum states has exactly $d^2-1$ independent real parameters. We choose a parameterization $\rho_\theta$ with parameters $\theta = (\theta^1, \theta^2, \dots, \theta^{d^2-1}) \in \mathbb{R}^{d^2-1}$. The first parameter is the value of the function to be estimated, and the remaining parameters are orthogonal to the first at the true state $\rho = \rho_*$.

Specifically, setting $\theta_0 := (\theta_0^1, 0, \dots, 0)$ where $\theta_0^1 := f(\rho_*)$, our parameterization must satisfy these local unbiasedness conditions,
\begin{align}
    \rho_{\theta_0} &= \rho_*, \label{local_framework:eq1} \\
    \left.\frac{\partial f(\rho_\theta)}{\partial \theta^1}\right|_{\theta=\theta_0} &= 1, \label{local_framework:eq2} \\
    \left.\frac{\partial f(\rho_\theta)}{\partial \theta^j}\right|_{\theta=\theta_0} &= 0 \quad \text{for } j = 2, \dots, d^2-1. \label{local_framework:eq3}
\end{align}

We define the SLD inner product for two Hermitian matrices $X$ and $Y$ as $\langle X,Y \rangle_\rho := \mathrm{Tr}[\rho(X \circ Y)]$, using the symmetrization $X \circ Y := (XY+YX)/2$. We then choose SLD operators $L_1, \dots, L_{d^2-1}$ to generate the tangent vectors of the parameterization,
\begin{align}
    \left.\frac{\partial \rho_\theta}{\partial \theta^j}\right|_{\theta=\theta_0} &= \rho_* \circ L_j \quad \text{for } j = 1, \dots, d^2-1. \label{local_framework:eq4}
\end{align}
To separate the nuisance parameters from the target parameter, we enforce the SLD orthogonality condition,
\begin{align}
    \langle L_1, L_j \rangle_{\rho_*} &= \mathrm{Tr}[\rho_\theta (L_i \circ L_j)] \overset{(a)}{=}\tr[(\rho_\theta \circ L_j) L_1]=\mathrm{Tr}\left[ \left.\frac{\partial \rho_\theta}{\partial \theta^j}\right|_{\theta=\theta_0} L_1 \right] = 0 \quad \text{for } j = 2, \dots, d^2-1. \label{local_framework:eq5}
\end{align}
where the equality (a) follows from trace duality of the Jordan product.

Under this parameterization, the minimum variance bound is given by the following proposition.

\begin{proposition}[{\cite[Theorem $5.3$]{Suzuki_2020}}]\label{prop:qcrb_loca}
    Let the local parameterization of the quantum statistical model satisfy the unbiasedness and orthogonality conditions defined in \eqref{local_framework:eq1} through \eqref{local_framework:eq5}. Then, the Cram\'er-Rao type bound for estimating $f(\rho_\theta)$ is $\langle L_1, L_1 \rangle_{\rho_*}^{-1}$ at $\theta = \theta_0$.
\end{proposition}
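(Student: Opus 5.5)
The plan is to derive the bound from the multiparameter quantum Cram\'er--Rao inequality $\mathrm{Cov}(\hat\theta)\ge J^{-1}(\theta_0)$, which applies because conditions \eqref{local_framework:eq1}--\eqref{local_framework:eq3} make the target component locally unbiased with identity Jacobian. The argument reduces to showing that the SLD QFI matrix is block diagonal at $\theta_0$, so that $[J^{-1}(\theta_0)]_{11}=1/J_{11}$ with $J_{11}=\langle L_1,L_1\rangle_{\rho_*}$. I would also argue that this value is attained and does not depend on the particular choice of nuisance directions, since that is what makes it \emph{the} Cram\'er--Rao type bound for $f$.

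\textbf{Step 1 (well-posedness).} Under Assumption~\ref{assump:regularity}, $\rho_*>0$. Hence the Jordan map $L\mapsto \rho_*\circ L$ is invertible on Hermitian matrices, by the Lyapunov argument referenced in the paper. So the SLDs $L_j$ in \eqref{local_framework:eq4} exist and are unique. The tangents $\partial_j\rho_\theta$ are linearly independent, so the $L_j$ are linearly independent. Therefore $J(\theta_0)$, the Gram matrix of the $L_j$ in the positive definite SLD inner product, is positive definite and invertible.

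\textbf{Step 2 (block structure).} Condition \eqref{local_framework:eq5} says exactly that $J_{1j}=J_{j1}=0$ for $j\ge 2$. Hence the cross vector $\mathbb{J}$ vanishes, and the Schur complement formula gives $[J^{-1}]_{11}=(J_{11}-\mathbb{J}^T J_{\mathrm{nuis}}^{-1}\mathbb{J})^{-1}=J_{11}^{-1}$. Combining this with the QCRB yields $\mathrm{Var}(\hat\theta^1)\ge \langle L_1,L_1\rangle_{\rho_*}^{-1}$ for every locally unbiased estimator.

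\textbf{Step 3 (intrinsic meaning and tightness).} This is the main obstacle: showing that the bound is the relevant one, not an artifact of the parameterization. I would first write $\partial_j f=\mathrm{Tr}[\partial_j\rho_\theta\,\nabla f]=\langle L_j, G\rangle_{\rho_*}$, where $G$ is the traceless-projected gradient $\nabla f-\mathrm{Tr}[\rho_*\nabla f]\mathbb{I}$. By \eqref{local_framework:eq2}--\eqref{local_framework:eq3}, $G$ is SLD-orthogonal to $L_2,\dots,L_{d^2-1}$ and has $\langle L_1,G\rangle=1$. The span of the $L_j$ is the full $(d^2-1)$-dimensional space of operators with zero $\rho_*$-mean, so $G$ must be proportional to $L_1$. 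Explicitly, $L_1=G/\langle G,G\rangle_{\rho_*}$, which gives $\langle L_1,L_1\rangle^{-1}=\langle G,G\rangle_{\rho_*}=\mathrm{Var}_{\rho_*}(\nabla f)$. This value depends only on $f$ and $\rho_*$.

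For attainability, I would invoke the single-parameter result that measuring the observable $\theta_0^1\mathbb{I}+L_1/J_{11}$ in its eigenbasis gives a locally unbiased estimator with variance exactly $1/J_{11}$. This follows because $\langle L_1, L_1\rangle_{\rho_*}=\mathrm{Tr}[\rho_* L_1^2]$ for a commuting projective measurement. Together these steps identify $\langle L_1,L_1\rangle_{\rho_*}^{-1}$ as the optimal local bound, consistent with \cite[Theorem 5.3]{Suzuki_2020}.
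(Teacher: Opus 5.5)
Your Steps 1--2 are exactly the paper's justification (the paper otherwise just cites \cite[Theorem 5.3]{Suzuki_2020}): condition \eqref{local_framework:eq5} makes the cross-information vector $\mathbb{J}$ vanish, so the Schur complement gives $[J^{-1}(\theta_0)]_{11}=1/J_{11}=\langle L_1,L_1\rangle_{\rho_*}^{-1}$. Your Step 3 is correct and goes beyond what the paper does at this point. The argument that $G\propto L_1$ for every admissible parameterization yields the parameterization-independent value $\mathrm{Var}_{\rho_*}(\nabla f(\rho_*))$, which the paper only computes later for one explicit construction. Measuring $\theta_0^1\mathbb{I}+L_1/J_{11}$ gives local attainability, which the paper mentions only informally. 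One correction: $G=\nabla f(\rho_*)-\mathrm{Tr}[\rho_*\nabla f(\rho_*)]\mathbb{I}$ is $\rho_*$-centered, not traceless, and zero $\rho_*$-mean is the property your spanning argument actually uses.
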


\subsection{Derivation of the Cram\'er-Rao Bound for Generic Polynomial Functionals}

We now use Proposition \ref{prop:qcrb_loca} to find the exact variance limit for the U-Statistic estimator for $f(\rho)$ of the form mentioned in \eqref{eq:generic_poly}.

\begin{lemma}[Cram\'er-Rao Bound for Generic Polynomial Functionals]
Let $\rho_* \in \mathcal{D}(\mathcal{H})$ be the true unknown quantum state satisfying Assumption \ref{assump:regularity}. Consider a generic real-valued polynomial functional $f(\rho)$ with a well-defined gradient $\nabla f(\rho)$ such that it is not proportional to the identity matrix at the true state (i.e., $\nabla f(\rho_*) \neq c\mathbb{I}$ for any scalar $c \in \mathbb{R}$). Using a local parameterization with nuisance parameters that satisfy the unbiasedness and orthogonality conditions, the single-copy Cram\'er-Rao bound for estimating $f(\rho)$ evaluates exactly to the intrinsic state variance of its gradient operator,
\begin{equation*}
    V_{\mathrm{CR}}(\rho_*) = \mathrm{Var}_{\rho_*}(\nabla f(\rho_*)) = \mathrm{Tr}\big[\rho_* \big(\nabla f(\rho_*)\big)^2\big] - \big(\mathrm{Tr}[\rho_* \nabla f(\rho_*)]\big)^2.
\end{equation*}
\end{lemma}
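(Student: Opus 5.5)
The plan is to identify the target SLD $L_1$ explicitly from conditions \eqref{local_framework:eq2}, \eqref{local_framework:eq4} and \eqref{local_framework:eq5}, and then invoke Proposition \ref{prop:qcrb_loca}. Write $G := \nabla f(\rho_*)$ and $\bar G := G - \mathrm{Tr}[\rho_* G]\,\mathbb{I}$. Since $\mathrm{Tr}[\rho_*\bar G]=0$, the operator $\rho_*\circ \bar G$ is traceless and Hermitian, so it is an admissible tangent direction at the full-rank state $\rho_*$. The ansatz is $L_1 = \bar G/\mathrm{Var}_{\rho_*}(G)$. The denominator is strictly positive: if $\mathrm{Var}_{\rho_*}(G)=\mathrm{Tr}[\rho_*\bar G^2]=0$ with $\rho_*>0$, then $\bar G=0$, i.e.\ $G=c\mathbb{I}$, which the hypothesis excludes.

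\emph{Step 1: verify the conditions.} By the chain rule \eqref{eq:operator_derivative_chain_rule} and \eqref{local_framework:eq4},
\[
\partial_j f(\rho_\theta)\big|_{\theta_0} = \mathrm{Tr}\big[(\rho_*\circ L_j)\,G\big] = \mathrm{Tr}\big[(\rho_*\circ L_j)\,\bar G\big],
\]
where the identity part drops because every tangent vector is traceless. By trace duality of the Jordan product this equals $\langle L_j,\bar G\rangle_{\rho_*}$. With the ansatz for $L_1$, condition \eqref{local_framework:eq2} reads $\langle \bar G,\bar G\rangle_{\rho_*}/\mathrm{Var}_{\rho_*}(G)=1$. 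This holds because $\langle \bar G,\bar G\rangle_{\rho_*}=\mathrm{Tr}[\rho_*\bar G^2]=\mathrm{Var}_{\rho_*}(G)$. For $j\ge 2$, the orthogonality condition \eqref{local_framework:eq5}, $\langle L_1,L_j\rangle_{\rho_*}=0$, is proportional to $\langle \bar G,L_j\rangle_{\rho_*}=\partial_j f|_{\theta_0}$. So \eqref{local_framework:eq3} and \eqref{local_framework:eq5} become the same requirement, and the two are consistent. For \eqref{local_framework:eq1}, set $\rho_\theta=\rho_*+(\theta^1-\theta^1_0)\,\rho_*\circ L_1+\sum_{j\ge2}\theta^j\rho_*\circ L_j$, possibly corrected at second order so that $\theta^1=f(\rho_\theta)$ holds exactly if one insists on it. Full rank of $\rho_*$ keeps $\rho_\theta$ a valid state near $\theta_0$.

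\emph{Step 2: existence of the nuisance directions.} I need $L_2,\dots,L_{d^2-1}$ that span, together with $L_1$, the $(d^2-1)$-dimensional space $\{L=L^\dagger:\mathrm{Tr}[\rho_* L]=0\}$ and are SLD-orthogonal to $L_1$. Take an orthonormal basis of the SLD-orthogonal complement of $\bar G$ inside this space. Because $\rho_*>0$, the SLD inner product is positive definite, and the map $L\mapsto\rho_*\circ L$ is injective, via the unique Lyapunov solution. The tangent operators $\rho_*\circ L_j$ are therefore linearly independent, which satisfies Assumption \ref{assump:regularity}.

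\emph{Step 3: conclude.} Proposition \ref{prop:qcrb_loca} gives
\[
V_{\mathrm{CR}}=\langle L_1,L_1\rangle_{\rho_*}^{-1}=\Big(\mathrm{Var}_{\rho_*}(G)/\mathrm{Var}_{\rho_*}(G)^2\Big)^{-1}=\mathrm{Var}_{\rho_*}(G).
\]
The main obstacle is Step 1. Two points need care there: the centering of $G$, which ensures $L_1$ has zero mean, as any SLD must; and checking that the identity shift in $G$ is invisible on traceless tangents, the same mechanism as in Corollary \ref{lemma:generic_poly_equivalence}. The nondegeneracy hypothesis $G\neq c\mathbb{I}$ is used exactly to guarantee $\mathrm{Var}_{\rho_*}(G)>0$ in the normalization.
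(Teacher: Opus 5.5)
Your proposal is correct and follows essentially the same route as the paper: center the gradient, set $L_1=\bar G/\mathrm{Var}_{\rho_*}(G)$, and choose nuisance directions SLD-orthogonal to $\bar G$. The paper instead picks traceless $X_j$ with $\mathrm{Tr}[X_j\nabla f(\rho_*)]=0$, which is the same condition by trace duality of the Jordan product. It then verifies conditions \eqref{local_framework:eq1}--\eqref{local_framework:eq5} and invokes Proposition~\ref{prop:qcrb_loca}, as you do; your explicit checks that $\mathrm{Var}_{\rho_*}(G)>0$ and that the tangent operators are linearly independent fill in points the paper leaves implicit.
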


\begin{proof}
    To calculate the variance limit for $f(\rho)$ at $\rho = \rho_*$, we define the centered gradient operator as
    \begin{equation*}
        L^1_{f} := \nabla f(\rho_*) - c\mathbb{I},
    \end{equation*}
    where the scalar shift is the expectation value $c = \mathrm{Tr}[\rho_* \nabla f(\rho_*)]$. By design, the expectation value of this centered operator is zero: $\mathrm{Tr}[\rho_* L^1_{f}] = \mathrm{Tr}[\rho_* \nabla f(\rho_*)] - c\mathrm{Tr}[\rho_*] = 0$.
    
    We define the primary tangent operator for our parameterization as
    \begin{equation*}
        X_1 = \frac{1}{\mathrm{Var}_{\rho_*}(\nabla f(\rho_*))} \rho_* \circ L^1_{f}.
    \end{equation*}
    Since $\rho_*$ and $L^1_f$ are Hermitian, $X_1$ is a traceless Hermitian operator. Since Assumption \ref{assump:regularity} ensures $\rho_*$ is strictly positive (full-rank), the tangent space of valid physical perturbations has exactly $d^2-1$ degrees of freedom. This dimensionality guarantees we can choose $d^2-2$ other linearly independent, traceless Hermitian operators $X_j$ (for $j \ge 2$) that are orthogonal to the gradient, meaning $\mathrm{Tr}[X_j \nabla f(\rho_*)] = 0$. We then define the full parameterization of the state space as
    \begin{equation*}
        \rho_\theta := \rho_* + (\theta^1 - \theta_0^1)X_1 + \sum_{j=2}^{d^2-1} \theta^j X_j.
    \end{equation*}
    This ensures that at $\theta = \theta_0$, the parameterization gives the true state, satisfying Condition \eqref{local_framework:eq1}. Crucially, under Assumption \ref{assump:regularity}, any infinitesimal perturbation along the traceless directions $X_j$ avoids negative eigenvalues, ensuring $\rho_\theta$ remains a valid positive semi-definite density matrix locally. Taking the derivative, the tangent vectors are $\left.\frac{\partial \rho_\theta}{\partial \theta^j}\right|_{\theta=\theta_0} = X_j$.
    
    Using Equation \eqref{eq:operator_derivative_chain_rule}, the directional derivative of the polynomial functional $f(\rho)$ around $\rho_*$ simplifies to
    \begin{equation}
        \left.\frac{\partial f(\rho_\theta)}{\partial \theta^j}\right|_{\theta=\theta_0} = \mathrm{Tr}\left[ \left.\frac{\partial \rho_\theta}{\partial \theta^j}\right|_{\theta=\theta_0} \nabla f(\rho_*) \right] = \mathrm{Tr}[X_j \nabla f(\rho_*)].\label{eq:f_par_generic}
    \end{equation}
    
    For the nuisance parameters ($j \ge 2$), our choice of $X_j$ ensures $\mathrm{Tr}[X_j \nabla f(\rho_*)] = 0$. Substituting this into Equation \eqref{eq:f_par_generic} gives $0$, satisfying Condition \eqref{local_framework:eq3}.
    
    For the target parameter ($j = 1$), we substitute the definition of $X_1$ into the derivative,
    \begin{align}
        \left.\frac{\partial f(\rho_\theta)}{\partial \theta^1}\right|_{\theta=\theta_0} &= \mathrm{Tr}[X_1 \nabla f(\rho_*)] \nn\\
        &= \frac{1}{\mathrm{Var}_{\rho_*}(\nabla f(\rho_*))} \mathrm{Tr}\big[ (\rho_* \circ L^1_{f}) \nabla f(\rho_*) \big] \nn\\
        &\overset{(a)}{=} \frac{1}{\mathrm{Var}_{\rho_*}(\nabla f(\rho_*))} \mathrm{Tr}\big[ \rho_* (L^1_{f} \circ \nabla f(\rho_*)) \big] \nn\\
        &\overset{(b)}{=} \frac{1}{\mathrm{Var}_{\rho_*}(\nabla f(\rho_*))} \langle L^1_{f}, \nabla f(\rho_*) \rangle_{\rho_*},\label{eq:f_par_generic_1}
    \end{align}
    where the equality (a) uses the trace duality of the Jordan product, $\mathrm{Tr}[(A \circ B) C] = \mathrm{Tr}[A (B \circ C)]$, and step (b) applies the definition of the SLD inner product. 
    
    To calculate the numerator in Equation \eqref{eq:f_par_generic_1}, we substitute $L^1_{f} = \nabla f(\rho_*) - c\mathbb{I}$ and expand the SLD inner product,
    \begin{align*}
        \langle L^1_{f}, \nabla f(\rho_*) \rangle_{\rho_*} &= \big\langle \nabla f(\rho_*) - c\mathbb{I}, \, \nabla f(\rho_*) \big\rangle_{\rho_*} \\
        &= \mathrm{Tr}\big[ \rho_* \big(\nabla f(\rho_*)\big)^2 \big] - c \, \mathrm{Tr}[ \rho_* (\mathbb{I} \circ \nabla f(\rho_*)) ] \\
        &\overset{(a)}{=} \mathrm{Tr}\big[ \rho_* \big(\nabla f(\rho_*)\big)^2 \big] - \big(\mathrm{Tr}[ \rho_* \nabla f(\rho_*) ]\big)^2 \\
        &= \mathrm{Var}_{\rho_*}(\nabla f(\rho_*)).
    \end{align*}
    where the equality (a) uses the definition of the shift $c = \mathrm{Tr}[\rho_* \nabla f(\rho_*)]$. This shows that the inner product equals the variance of the gradient. Substituting this variance back into Equation \eqref{eq:f_par_generic_1} cancels the denominator, making the derivative exactly $1$. This proves perfect tracking and satisfies Condition \eqref{local_framework:eq2}.
    
    To connect this parameterization to the SLD Fisher information framework, we write the tangent vectors as $X_j = \rho_* \circ L_j$. By Assumption \ref{assump:regularity}, the strict positivity of $\rho_*$ ensures the Lyapunov super-operator is strictly invertible, guaranteeing that a unique, bounded Hermitian SLD operator $L_j$ exists for each chosen tangent direction. For the target parameter, this gives
    \begin{equation*}
        L_1 = \frac{1}{\mathrm{Var}_{\rho_*}(\nabla f(\rho_*))} L^1_{f}.
    \end{equation*}
    This defines the primary SLD and satisfies Condition \eqref{local_framework:eq4}.
    
    To separate the target parameter from the nuisance parameters, we must show $\langle L_1, L_j \rangle_{\rho_*} = 0$ for all $j \ge 2$. Expanding the SLD inner product gives
    \begin{align}
        \langle L_1, L_j \rangle_{\rho_*} &= \frac{1}{\mathrm{Var}_{\rho_*}(\nabla f(\rho_*))} \mathrm{Tr}\big[ \rho_* (L^1_{f} \circ L_j) \big] \nn\\
        &\overset{(c)}{=} \frac{1}{\mathrm{Var}_{\rho_*}(\nabla f(\rho_*))} \mathrm{Tr}\big[ (\rho_* \circ L_j) L^1_{f} \big] \nn\\
        &\overset{(d)}{=} \frac{1}{\mathrm{Var}_{\rho_*}(\nabla f(\rho_*))} \mathrm{Tr}\big[ X_j (\nabla f(\rho_*) - c\mathbb{I}) \big] \nn\\
        &= \frac{1}{\mathrm{Var}_{\rho_*}(\nabla f(\rho_*))} \big( \mathrm{Tr}[X_j \nabla f(\rho_*)] - c\mathrm{Tr}[X_j] \big) \nn\\
        &\overset{(e)}{=} \frac{1}{\mathrm{Var}_{\rho_*}(\nabla f(\rho_*))} \mathrm{Tr}[X_j \nabla f(\rho_*)],\label{eq:orthogonal_SLDs_generic}
    \end{align}
    Here, step (c) uses the trace duality of the Jordan product. Step (d) substitutes $X_j = \rho_* \circ L_j$ and the definition of $L^1_{f}$. Step (e) uses the conservation of probability: since density matrices have a unit trace, any tangent vector must be traceless ($\mathrm{Tr}[X_j] = 0$). This eliminates the scalar shift $c$. 
    
    Since we chose $X_j$ so that $\mathrm{Tr}[X_j \nabla f(\rho_*)] = 0$ for $j \ge 2$, Equation \eqref{eq:orthogonal_SLDs_generic} becomes zero. Thus, $\langle L_1, L_j \rangle_{\rho_*} = 0$, satisfying the orthogonality requirement of Condition \eqref{local_framework:eq5}.
    
    With all the conditions met, we use Proposition \ref{prop:qcrb_loca} to find the Cram\'er-Rao bound. This is the inverse variance of the primary SLD operator $L_1$. We calculate the inner product,
    \begin{align*}
        \langle L_1, L_1 \rangle_{\rho_*} &= \left(\frac{1}{\mathrm{Var}_{\rho_*}(\nabla f(\rho_*))}\right)^2 \langle L^1_{f}, L^1_{f} \rangle_{\rho_*} \\
        &= \frac{1}{\big(\mathrm{Var}_{\rho_*}(\nabla f(\rho_*))\big)^2} \mathrm{Var}_{\rho_*}(\nabla f(\rho_*)) \\
        &= \frac{1}{\mathrm{Var}_{\rho_*}(\nabla f(\rho_*))}.
    \end{align*}
    Inverting this scalar gives the variance of the gradient operator, which is our fundamental bound,
    \begin{equation*}
        V_{\mathrm{CR}}(\rho_*) = \langle L_1, L_1 \rangle_{\rho_*}^{-1} = \mathrm{Var}_{\rho_*}(\nabla f(\rho_*)).
    \end{equation*}
    This completes the proof.
\end{proof}

\begin{remark}
The explicit condition $\nabla f(\rho_*) \neq c\mathbb{I}$ ensures that the first-order estimation theory remains valid. If the gradient evaluates exactly to a scalar multiple of the identity matrix, its variance vanishes ($V_{\mathrm{CR}}(\rho_*) = c^2 - c^2 = 0$). Geometrically, this corresponds to a stationary point of the functional, such as evaluating a quantum divergence between two identical states. At this point, the functional experiences no first-order change in any physical direction, rendering the standard first-order Quantum Cram\'er-Rao bound uninformative. To determine the fundamental estimation limits in this specific regime, one must rely on second-order (or higher-order) Quantum Cram\'er-Rao bounds. These bounds capture the curvature of the state space using the second-order Symmetric Logarithmic Derivative, which fundamentally shifts the statistical variance scaling from $\mathcal{O}(1/n)$ to $\mathcal{O}(1/n^2)$.
\end{remark}

\subsection{Asymptotic Efficiency}

The main result is that a permutation-invariant observable $O_n^{\mathrm{sym}}$ acting on $\rho^{\otimes n}$, which is an unbiased estimator for $f(\rho)$, is \emph{asymptotically efficient}. It extracts the maximum amount of information per copy of $\rho$ allowed by quantum mechanics.

To show this, we write the Cram\'er-Rao bound for $n$ independent copies, which is the single-copy bound divided by $n$,
\begin{equation*}
    V_{\mathrm{CR}}^{(n)} = \frac{V_{\mathrm{CR}}(\rho_*)}{n} = \frac{\mathrm{Var}_{\rho_*}(\nabla f(\rho_*))}{n}.
\end{equation*}

This variance limit highlights a practical challenge in standard local estimation. In quantum estimation theory, achieving the Cram\'er-Rao bound requires measuring in the eigenbasis of the primary SLD operator, $L_1$. Since $L_1 \propto \nabla f(\rho_*) - c\mathbb{I}$, the optimal local measurement must be projected onto the eigenbasis of the functional's gradient, $\nabla f(\rho_*)$.

However, this creates an experimental circularity. Since $\nabla f(\rho_*)$ depends on the unknown true state $\rho_*$, building the optimal measurement setup requires knowing the state before actually measuring it. In practice, this forces the use of adaptive estimation protocols. For example, an experimenter must consume a portion of the $n$ copies for initial state tomography to approximate $\rho_*$, and then use this estimate to calibrate the measurement hardware for the remaining copies.

Our globally unbiased permutation-invariant estimator $O_n^{\mathrm{sym}}$ avoids this issue entirely. Since $O_n^{\mathrm{sym}}$ is constructed directly from the algebraic definition of $f(\rho)$, it is completely state-independent. It evaluates the functional without requiring prior knowledge of $\rho_*$, initial tomography, or adaptive measurements.

We now compare this to the variance of our permutation-invariant observable. According to Equation  \eqref{eq:hoeffding_decomp_var}, the variance of any unbiased permutation-invariant estimator $O_n^{\mathrm{sym}}$ on $\rho^{\otimes n}$ can be decomposed. The leading term of this variance is determined by its first-order marginal kernel, which corresponds to the functional gradient $\nabla f(\rho_*)$. Thus, the exact variance is
\begin{align*}
    \mathrm{Var}_{\rho_*^{\otimes n}}(O_n^{\mathrm{sym}}) &= \frac{\mathrm{Var}_{\rho_*}(\nabla f(\rho_*))}{n} + \mathcal{O}\left(\frac{1}{n^2}\right)\\
    &= V_{\mathrm{CR}}^{(n)} + \mathcal{O}\left(\frac{1}{n^2}\right).
\end{align*}

As the sample size $n \to \infty$, the higher-order $\mathcal{O}(1/n^2)$ terms vanish much faster than the leading $\mathcal{O}(1/n)$ term. Taking the limit gives
\begin{equation*}
    \lim_{n \to \infty} n \mathrm{Var}_{\rho_*^{\otimes n}}(O_n^{\mathrm{sym}}) = \mathrm{Var}_{\rho_*}(\nabla f(\rho_*)) = n V_{\mathrm{CR}}^{(n)}.
\end{equation*}
This shows that the permutation-invariant estimator $O_n^{\mathrm{sym}}$ achieves the quantum Cram\'er-Rao bound and is asymptotically efficient.

\begin{remark}
The higher-order $\mathcal{O}(1/n^2)$ term in the variance expansion highlights a key structural difference between local and global estimation for non-linear functionals. The local Cram\'er-Rao bound assumes an idealized single-copy observable, i.e., the SLD operator, perfectly tuned to the unknown state $\rho_*$. The variance of $n$ independent single-copy measurements scales exactly as $1/n$, with no higher-order terms. 

However, for a non-linear functional $f(\rho)$, no single-copy observable can be globally unbiased. To estimate $f(\rho)$ without prior knowledge of the state, a globally unbiased estimator must use collective, multi-copy measurements (such as SWAP operators acting across different copies). In the Hoeffding decomposition, the independent first-order marginal kernel achieves the fundamental Cram\'er-Rao limit, while the remaining multi-copy interactions generate the $\mathcal{O}(1/n^2)$ variance terms. 

Therefore, this higher-order term represents a finite-sample ``cost of universality.'' It is the statistical penalty for using a state-independent, multi-copy observable instead of a perfectly tuned single-copy measurement. Since this penalty vanishes as $n \to \infty$, the global permutation-invariant estimator provides a practical advantage, as it guarantees robustness and state-independence at finite sample sizes while achieving optimal local efficiency in the asymptotic limit.
\end{remark}

Having rigorously established the asymptotic Cram\'er-Rao efficiency of global permutation-invariant estimators, we now explore several practical applications and theoretical discussions related to our variance analysis. In the subsequent subsections, we demonstrate the versatility of this universal variance framework by applying it to existing estimation paradigms and fundamental quantum information-theoretic quantities, including relative entropy bounds, the Bures $\chi^2$-divergence, state purity, and the squared Hilbert-Schmidt distance.

\subsection{Discussion about the Variance Bound for estimating Relative Entropy in \cite{Hayashi_2025}}

Another interesting problem on estimating a functional was studied by Hayashi in \cite{Hayashi_2025}, where there is an unknown quantum state $\rho$ and a fully known reference state $\sigma$, and the objective is to estimate the relative entropy $D(\rho||\sigma)$ defined as
\begin{equation*}
    D(\rho||\sigma) := \mathrm{Tr}\big[\rho(\log\rho-\log \sigma)\big].
\end{equation*}

For this specific divergence, the analytical gradient with respect to the unknown state evaluates to $\nabla D(\rho||\sigma) = (\log \rho - \log \sigma + \mathbb{I})$. To solve this estimation problem, the work constructs a global $n$-copy observable to estimate $D(\rho||\sigma)$, effectively serving as a surrogate for the ideal functional observable $(\log \rho - \log \sigma)$.

The performance of this estimator is rigorously bounded in \cite[Theorem 2]{Hayashi_2025}, which establishes that for any states $\rho$ and $\sigma$, we have
\begin{equation*}
    \mathrm{MSE}_n(\rho||\sigma) \le \left( \frac{1}{\sqrt{n}} \sqrt{V(\rho||\sigma)} + \frac{1}{n}\log d_{n,d} \right)^2\,.
\end{equation*}

In this theorem, $\mathrm{MSE}_n(\rho||\sigma)$ basically represents the variance (mean squared error) of the global $n$-copy observable. The term $V(\rho||\sigma)$, known as the relative varentropy \cite{Tomamichel_2013, Li_2014}, corresponds exactly to the variance of the functional gradient. Since shifting an operator by a scalar multiple of the identity matrix does not alter its variance, the relative varentropy evaluates to
\begin{equation*}
    V(\rho||\sigma) := \mathrm{Var}_{\rho}(\nabla D(\rho||\sigma)) = \mathrm{Tr}\left[\rho\big(\log\rho - \log\sigma - D(\rho||\sigma)\mathbb{I}\big)^2\right]\,.
\end{equation*}

In light of the Cram\'er-Rao optimality subsequently established by the author for the proposed global $n$-copy estimator, the variance analysis demonstrates a fundamental structural correspondence to Equation  \eqref{eq:hoeffding_decomp_var} of our paper. Specifically, it underscores that the first-order asymptotic term of the global estimator's variance scales precisely with the relative varentropy $V(\rho||\sigma)$, which is identically the intrinsic state variance of the target functional's gradient.

\subsection{Application I - Bures $\chi^2$-Divergence}\label{subsec:app_bures_chi}

As a non-trivial application of Equation  \eqref{eq:hoeffding_decomp_var}, we introduce the Bures $\chi^2$-divergence between two quantum states and analyse the statistical variance inherent in estimating this quantity for an unknown density operator $\rho$ with respect to a fully known reference state $\sigma$.

\begin{definition}[ Bures $\chi^2$-Divergence \cite{Bures1969A,Uhlmann1976}]\label{def:bures}
    Given two density operators $\rho, \sigma\in \mathcal{D}(\mathcal{H})$, the Bures $\chi^2$-divergence between the states $\rho$ and $\sigma$ is expressed as 
    \begin{align*}
        \chi^2_{\mathrm{B}}(\rho\|\sigma) &:= \tr\left[(\rho-\sigma) \Omega_\sigma(\rho-\sigma)\right] = \tr[\rho \Omega_{\sigma}(\rho)] - 1\,,
    \end{align*} where $\Omega_\sigma$ is a super-operator whose inverse is given by
    \[\Omega_\sigma^{-1}(X)=\frac{\sigma X+X \sigma}{2}\nn\,.\]
\end{definition}

Let us derive an analytically tractable expression for the super-operator $\Omega_{\sigma}(\rho)$. To that end, we invoke the following proposition concerning the integral solution of the continuous-time Lyapunov equation in order to establish an integral representation for the Bures $\chi^2$-divergence between two quantum states.

\begin{proposition}[Continuous-Time Lyapunov Solution {\cite[Theorem 9.4]{BR97}, \cite{Heinz1951}}]
\label{prop:lyapunov}
    Let $A$ be a Hurwitz stable matrix (i.e., all eigenvalues of $A$ have strictly negative real parts) and let $Q$ be a Hermitian matrix. The unique solution $X$ to the continuous-time Lyapunov equation $AX + XA^\dagger + Q = 0$ is given by the following convergent continuous-time integral,
    \begin{equation}
        X = \int_0^\infty e^{\tau A} Q e^{\tau A^\dagger} d\tau.
    \end{equation}
\end{proposition}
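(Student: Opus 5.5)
We verify that the integral converges, that it solves the equation, and that no other solution exists.

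\emph{Convergence.} Since $A$ is Hurwitz, there are constants $C\ge 1$ and $\alpha>0$ with $\|e^{\tau A}\|\le Ce^{-\alpha\tau}$ for all $\tau\ge 0$. One way to see this is to take $\alpha$ strictly smaller than $\min_i|\mathrm{Re}\,\lambda_i(A)|$ and bound the Jordan form. Because $\|e^{\tau A^\dagger}\|=\|e^{\tau A}\|$, the integrand is bounded in norm by $C^2\|Q\|e^{-2\alpha\tau}$. Hence the integral converges absolutely, and $X$ is well defined. It is Hermitian because $(e^{\tau A}Qe^{\tau A^\dagger})^\dagger=e^{\tau A}Qe^{\tau A^\dagger}$.

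\emph{$X$ solves the equation.} Set $F(\tau):=e^{\tau A}Qe^{\tau A^\dagger}$. The product rule gives
\[
F'(\tau)=AF(\tau)+F(\tau)A^\dagger .
\]
Integrating over $[0,T]$ and letting $T\to\infty$ yields
\[
AX+XA^\dagger=\lim_{T\to\infty}F(T)-F(0)=0-Q .
\]
The limit is zero by the exponential bound above. Therefore $AX+XA^\dagger+Q=0$. Moving $A$ and $A^\dagger$ inside the integral is justified by absolute convergence.

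\emph{Uniqueness.} Define the linear map $\mathcal{L}(Y):=AY+YA^\dagger$ on $M_d$. It suffices to show that $\mathcal{L}(Y)=0$ forces $Y=0$. Given such a $Y$, put $G(\tau):=e^{\tau A}Ye^{\tau A^\dagger}$. Then $G'(\tau)=e^{\tau A}\mathcal{L}(Y)e^{\tau A^\dagger}=0$, so $G$ is constant. This gives $Y=G(0)=\lim_{\tau\to\infty}G(\tau)=0$.

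Alternatively, under vectorization $\mathcal{L}$ becomes $\mathbb{I}\otimes A+\bar A\otimes\mathbb{I}$, whose eigenvalues are $\lambda_i+\bar\lambda_j$. Each of these has strictly negative real part, so $\mathcal{L}$ is invertible.

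\emph{Main obstacle.} The only delicate point is the uniform exponential bound $\|e^{\tau A}\|\le Ce^{-\alpha\tau}$ for non-normal $A$, since polynomial Jordan factors must be absorbed by lowering $\alpha$. In the application to the Bures divergence this difficulty disappears. There $A=-\sigma/2$ with $\sigma>0$ Hermitian, so $\|e^{-\tau\sigma/2}\|=e^{-\tau\lambda_{\min}(\sigma)/2}$ and one can take $C=1$.
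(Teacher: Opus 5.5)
Your proof is correct and complete. The paper does not prove this proposition; it cites \cite[Theorem 9.4]{BR97} (Heinz's formula), and your argument is essentially the standard proof of that result: exponential decay of $e^{\tau A}$, integrating $\frac{d}{d\tau}\bigl(e^{\tau A}Qe^{\tau A^\dagger}\bigr)$ over $[0,\infty)$, and uniqueness from the disjointness of the spectra of $A$ and $-A^\dagger$, which is what your vectorization remark expresses. In your closing remark you take $A=-\sigma/2$ with $Q=\rho$, while the paper's Bures application uses $A=-\sigma$ with $Q=2\rho$; the substitution $\tau\mapsto 2\tau$ shows both give the same $X$, and $C=1$ works in either case.
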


Leveraging the preceding Proposition, the integral representation for the Bures $\chi^2$-divergence is as given in the following lemma.

\begin{lemma}[Integral Representation of Bures $\chi^2$-Divergence]
\label{lem:integral_measured_chi2}
    Given two density operators $\rho, \sigma\in \mathcal{D}(\mathcal{H})$, where $\rho$ is fully unknown whereas $\sigma$ is a fully known reference operator, the Bures $\chi^2$-divergence $\chi^2_{\mathrm{B}}(\rho \| \sigma)$ between the two states admits the following continuous-time integral representation,
    \begin{equation}
       \chi^2_{\mathrm{B}}(\rho \| \sigma) = 2 \int_0^\infty \mathrm{Tr}[(\rho e^{-\tau\sigma})^2] d\tau - 1.
    \end{equation}
\end{lemma}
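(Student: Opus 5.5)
The plan is to solve for $\Omega_\sigma(\rho)$ with Proposition \ref{prop:lyapunov} and then substitute the resulting integral into the trace formula of Definition \ref{def:bures}. Throughout we take $\sigma$ to be full rank ($\sigma>0$); this is already implicit in Definition \ref{def:bures}, since $\Omega_\sigma^{-1}(X)=\tfrac12(\sigma X+X\sigma)$ must be invertible.

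\textbf{Step 1: write $\Omega_\sigma(\rho)$ as a Lyapunov solution.} Set $Y:=\Omega_\sigma(\rho)$. By definition of $\Omega_\sigma$,
\begin{equation*}
\tfrac12(\sigma Y+Y\sigma)=\rho .
\end{equation*}
Equivalently,
\begin{equation*}
\big(-\tfrac12\sigma\big)Y+Y\big(-\tfrac12\sigma\big)^\dagger+\rho=0 .
\end{equation*}
This is the continuous-time Lyapunov equation with $A=-\tfrac12\sigma$ and $Q=\rho$.

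\textbf{Step 2: check the hypotheses of Proposition \ref{prop:lyapunov}.} Since $\sigma>0$, every eigenvalue of $A=-\tfrac12\sigma$ is strictly negative, so $A$ is Hurwitz stable. Also $A^\dagger=A$ and $Q=\rho$ is Hermitian. The proposition therefore gives the unique solution
\begin{equation*}
\Omega_\sigma(\rho)=\int_0^\infty e^{-\tau\sigma/2}\,\rho\,e^{-\tau\sigma/2}\,d\tau .
\end{equation*}
The integral converges absolutely because $\|e^{-\tau\sigma/2}\|_\infty=e^{-\tau\lambda_{\min}(\sigma)/2}$.

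\textbf{Step 3: evaluate the trace.} Insert this expression into $\chi^2_{\mathrm B}(\rho\|\sigma)=\mathrm{Tr}[\rho\,\Omega_\sigma(\rho)]-1$. Linearity and continuity of the trace let us move it inside the convergent integral, and cyclicity then gives
\begin{equation*}
\mathrm{Tr}\big[\rho e^{-\tau\sigma/2}\rho e^{-\tau\sigma/2}\big]=\mathrm{Tr}\big[(\rho e^{-\tau\sigma/2})^2\big].
\end{equation*}
Substituting $\tau=2s$, so that $d\tau=2\,ds$, yields
\begin{equation*}
\chi^2_{\mathrm B}(\rho\|\sigma)=2\int_0^\infty \mathrm{Tr}\big[(\rho e^{-s\sigma})^2\big]\,ds-1 .
\end{equation*}
This is the claimed representation.

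\textbf{Cross-check.} In the eigenbasis of $\sigma$, the integrand equals $\sum_{x,y}e^{-s(\lambda_x+\lambda_y)}|\langle x|\rho|y\rangle|^2$. Integrating term by term gives $\sum_{x,y}\frac{2}{\lambda_x+\lambda_y}|\langle x|\rho|y\rangle|^2$, which is the known spectral formula for $\mathrm{Tr}[\rho\,\Omega_\sigma(\rho)]$. This confirms the factor of $2$.

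\textbf{Main obstacle.} The only delicate step is the scaling bookkeeping: the factor $\tfrac12$ in the Jordan product has to be absorbed into $A$, and the change of variables has to produce exactly the prefactor $2$. The spectral cross-check settles both.
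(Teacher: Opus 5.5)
Your proof is correct and follows the paper's argument: both write $\Omega_\sigma(\rho)$ as the solution of a continuous-time Lyapunov equation via Proposition~\ref{prop:lyapunov}, then substitute the resulting integral into $\mathrm{Tr}[\rho\,\Omega_\sigma(\rho)]-1$. The only differences are cosmetic. The paper takes $A=-\sigma$, $Q=2\rho$ directly, whereas you take $A=-\tfrac12\sigma$, $Q=\rho$ and rescale $\tau=2s$; you also state explicitly the full-rank assumption on $\sigma$, which the lemma statement leaves implicit but the Hurwitz-stability hypothesis requires.
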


\begin{proof}
    See Appendix \ref{subsec:integral_measured_chi2} for the proof.
\end{proof}

Building upon this novel continuous-time integral representation, we can evaluate the integral directly in the eigenbasis of the reference state $\sigma$ to recover the well-known spectral form of the inverse symmetric logarithmic derivative super-operator. This explicit spectral expression has been extensively studied in the foundational literature of quantum estimation theory and quantum information geometry \cite{braunstein1994statistical, hayashi2006quantum, BOW19}.

\begin{corollary}[Spectral Expression of $\Omega_{\sigma}(\rho)$ \cite{braunstein1994statistical, hayashi2006quantum, BOW19}]
\label{cor:spectral_omega}
    Suppose the density operator $\sigma$ admits the spectral decomposition $\sigma = \sum\limits_{i=1}^d \lambda_i |i\rangle \langle i|$. Then, the super-operator $\Omega_{\sigma}(\rho)$ can be explicitly expressed in the eigenbasis of $\sigma$ as
    \begin{equation}
        \Omega_{\sigma}(\rho) = \sum_{j,k=1}^d \frac{2}{\lambda_j + \lambda_k} |j\rangle \langle j| \rho |k\rangle \langle k|\,.
    \end{equation}
\end{corollary}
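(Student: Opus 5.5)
The plan is to start from Lemma \ref{lem:integral_measured_chi2}, or equivalently from the Lyapunov characterization in Proposition \ref{prop:lyapunov}, and evaluate everything in the eigenbasis of $\sigma$. Write $X:=\Omega_\sigma(\rho)$. By Definition \ref{def:bures}, $X$ is the solution of $\tfrac12(\sigma X+X\sigma)=\rho$, which rearranges to
\begin{equation*}
(-\tfrac12\sigma)X+X(-\tfrac12\sigma)^\dagger+\rho=0 .
\end{equation*}
For full-rank $\sigma$, the matrix $A=-\tfrac12\sigma$ is Hurwitz stable and $\rho$ is Hermitian, so Proposition \ref{prop:lyapunov} gives
\begin{equation*}
\Omega_\sigma(\rho)=\int_0^\infty e^{-\tau\sigma/2}\rho\, e^{-\tau\sigma/2}\,d\tau .
\end{equation*}
Uniqueness of this solution also identifies it with the operator $\Omega_\sigma(\rho)$ appearing in the divergence.

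Next I insert the spectral decomposition $\sigma=\sum_i\lambda_i|i\rangle\langle i|$. This gives $e^{-\tau\sigma/2}=\sum_i e^{-\tau\lambda_i/2}|i\rangle\langle i|$. Sandwiching $\rho$ between two resolutions of the identity in this basis yields
\begin{equation*}
\Omega_\sigma(\rho)=\sum_{j,k}\Big(\int_0^\infty e^{-\tau(\lambda_j+\lambda_k)/2}d\tau\Big)|j\rangle\langle j|\rho|k\rangle\langle k| .
\end{equation*}
The sum is finite, so exchanging it with the integral is justified. Each scalar integral equals $2/(\lambda_j+\lambda_k)$, which is the claimed formula.

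As a consistency check against Lemma \ref{lem:integral_measured_chi2} (rescale $\tau\mapsto2\tau$), I will verify that
\begin{equation*}
\mathrm{Tr}[\rho\,\Omega_\sigma(\rho)]=\sum_{j,k}\frac{2}{\lambda_j+\lambda_k}|\langle j|\rho|k\rangle|^2 .
\end{equation*}
This matches the expansion already computed in the Bures bullet of Section \ref{sec:kernels}.

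One can also check the formula directly, without the integral. Applying $\tfrac12(\sigma\,\cdot+\cdot\,\sigma)$ to the proposed expression multiplies each matrix unit $|j\rangle\langle j|\rho|k\rangle\langle k|$ by $\tfrac{\lambda_j+\lambda_k}{2}$, which returns $\rho$.

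The only real obstacle is the rank-deficient case, where some $\lambda_j+\lambda_k=0$. There $A$ is not Hurwitz, and the integral diverges on the corresponding block. I would state the corollary under the full-rank assumption, which is the standing assumption whenever $\Omega_\sigma$ is invertible. Alternatively, one can interpret it with the convention that terms with $\lambda_j=\lambda_k=0$ are restricted to the support of $\sigma$, i.e.\ $\Omega_\sigma$ acts as a pseudo-inverse. In that case the same computation applies blockwise, using only the indices with $\lambda_j+\lambda_k>0$.
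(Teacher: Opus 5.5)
Your proposal is correct and follows essentially the paper's route: the paper takes $A=-\sigma$, $Q=2\rho$ in Proposition \ref{prop:lyapunov} to get $\Omega_\sigma(\rho)=2\int_0^\infty e^{-\tau\sigma}\rho\,e^{-\tau\sigma}\,d\tau$, then evaluates $\int_0^\infty e^{-\tau(\lambda_j+\lambda_k)}d\tau=1/(\lambda_j+\lambda_k)$ in the eigenbasis of $\sigma$, which differs from your $A=-\tfrac12\sigma$, $Q=\rho$ normalization only by $\tau\mapsto 2\tau$. Your full-rank caveat matches the paper's proof, which also uses full rank to guarantee $\lambda_j+\lambda_k>0$, and your integral-free check (apply $\tfrac12(\sigma\,\cdot+\cdot\,\sigma)$ to the matrix units, then invoke uniqueness) is a valid shortcut.
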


\begin{proof}
    See Appendix \ref{subsec:spectral_omega} for the proof.
\end{proof}

Now, suppose there exists a $n$-copy permutation-invariant observable $O^{\mathrm{sym}}_{n,\chi^2_{\mathrm{B}}}$ estimating the Bures $\chi^2$-divergence between the two quantum states $\rho$ and $\sigma$ (\cite{BOW19} presents such an estimator based on quantum U-statistics). Then, as per Equation  \eqref{eq:hoeffding_decomp_var}, the variance of the estimator exactly scales as 
\begin{equation}\label{eq:bures_variance}
    \mathrm{Var}_{\rho^{\otimes n}}\left(O^{\mathrm{sym}}_{n,\chi^2_{\mathrm{B}}}\right) = \frac{1}{n} \mathrm{Var}_{\rho}\big(\nabla \chi^2_{\mathrm{B}}(\rho||\sigma)\big) + \mathcal{O}\left(\frac{1}{n^2}\right)\,.
\end{equation}

Considering a Hermitian perturbation $\delta\rho$, and exploiting the linearity of the super-operator $\Omega_\sigma$, the first-order directional variation of the functional comes out to be 
\begin{equation}\label{eq:first_order_frechet}
    \delta f = \mathrm{Tr}[\delta\rho \, \Omega_\sigma(\rho)] + \mathrm{Tr}[\rho \, \Omega_\sigma(\delta\rho)]\,.
\end{equation}

To combine the two terms on the right-hand side of Equation  \eqref{eq:first_order_frechet}, we first need to observe that $\Omega_\sigma$ is self-adjoint under the Hilbert-Schmidt inner product. Since for any Hermitian operators $A$ and $B$, assuming $X = \Omega_\sigma(A)$ and $Y = \Omega_\sigma(B)$, by the definition of the super-operator $\Omega_\sigma$, we have $A = \frac{1}{2}(\sigma X + X\sigma)$ and $B = \frac{1}{2}(\sigma Y + Y\sigma)$. Using the cyclic property of the trace, we have
\begin{align*}
    \mathrm{Tr}[A\Omega_\sigma(B)] &= \frac{1}{2}\mathrm{Tr}\big[(\sigma X + X\sigma)Y\big] \\
    &= \frac{1}{2}\mathrm{Tr}\big[X(Y\sigma + \sigma Y)\big] \\
    &= \mathrm{Tr}[X B] \\
    &= \mathrm{Tr}[\Omega_\sigma(A)B]\,.
\end{align*}

Setting $A = \rho$ and $B = \delta\rho$, and applying this self-adjointness to the second term of our variation yields $\mathrm{Tr}[\rho \, \Omega_\sigma(\delta\rho)] = \mathrm{Tr}[\Omega_\sigma(\rho)\delta\rho]$. Since the trace of a product of two Hermitian matrices is commutative, the total first-order variation simplifies to $\delta f = 2\mathrm{Tr}[\Omega_\sigma(\rho)\delta\rho]$.

By the standard definition of the matrix gradient $\delta f = \langle \nabla f(\rho), \delta\rho \rangle_{\mathrm{HS}} = \mathrm{Tr}[\nabla f(\rho)\delta\rho]$, we immediately identify the gradient of the functional as $\nabla \chi^2_{\mathrm{B}}(\rho \Vert{} \sigma) = 2\,\Omega_\sigma(\rho)\,.$ Substituting the value of the functional gradient in Equation \eqref{eq:bures_variance}, we see that the variance of the estimator $O^{\mathrm{sym}}_{n,\chi^2_{\mathrm{B}}}$ scales exactly as 
\begin{align*}
     \mathrm{Var}_{\rho^{\otimes n}}\left(O^{\mathrm{sym}}_{n,\chi^2_{\mathrm{B}}}\right) &= \frac{1}{n} \mathrm{Var}_{\rho}\big(2\,\Omega_\sigma(\rho)\big) + \mathcal{O}\left(\frac{1}{n^2}\right)\\
      &= \frac{4}{n} \mathrm{Var}_{\rho}\big(\Omega_\sigma(\rho)\big) + \mathcal{O}\left(\frac{1}{n^2}\right)\,.
\end{align*}

We note that the minimum singular value assumption $\lambda_{\min}(\sigma) \ge \delta > 0$ introduced by \cite{BOW19} to guarantee bounded estimator variance is sufficient, but not necessary. Boundedness of the first-order variance term can hold even when $\lambda_{\min}(\sigma)$ becomes arbitrarily small, as the following theorem shows.

\begin{theorem}[Sufficiency and Non-Necessity of Spectral Bounds]\label{thm:variance_spectral_bound}
    Let $\rho, \sigma \in \mathcal{D}(\mathcal{H})$. If $\lambda_{\min}(\sigma) \ge \delta$, for a fixed $\delta > 0$ then both $\chi^2_{\mathrm{B}}(\rho\Vert{}\sigma)$ and $\mathrm{Var}_{\rho}\big(\nabla \chi^2_{\mathrm{B}}(\rho\Vert{}\sigma)\big)$ are uniformly bounded. However, the converse does not hold, i.e., bounded first-order variance and/or $\chi^2_{\mathrm{B}}(\rho\Vert{}\sigma)$ does not imply $\lambda_{\min}(\sigma) \ge \delta$.
\end{theorem}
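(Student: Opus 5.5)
The argument has two parts: sufficiency through operator-norm bounds on $\Omega_\sigma$, and failure of the converse through an explicit family of states.

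\emph{Sufficiency.} We work in the eigenbasis of $\sigma=\sum_i\lambda_i|i\rangle\langle i|$ and use Corollary~\ref{cor:spectral_omega}. By this corollary, $\Omega_\sigma$ acts as the Schur (entrywise) multiplier with kernel $2/(\lambda_j+\lambda_k)$. The operator $\Omega_\sigma^{-1}(X)=\tfrac12(\sigma X+X\sigma)$ is positive on the Hilbert--Schmidt space, with eigenvalues $(\lambda_j+\lambda_k)/2\ge\delta$. Hence $\Omega_\sigma$ is a positive superoperator with Hilbert--Schmidt operator norm at most $1/\delta$. The same bound $\|\Omega_\sigma(X)\|_\infty\le\|X\|_\infty/\delta$ also holds in operator norm. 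To see this, use Lemma~\ref{lem:integral_measured_chi2}, or equivalently Proposition~\ref{prop:lyapunov}, to write
\[
\Omega_\sigma(X)=2\int_0^\infty e^{-\tau\sigma}Xe^{-\tau\sigma}\,d\tau .
\]
Then $\|\Omega_\sigma(X)\|_\infty\le 2\|X\|_\infty\int_0^\infty e^{-2\tau\delta}d\tau=\|X\|_\infty/\delta$.

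Applying this with $X=\rho$, which satisfies $\|\rho\|_\infty\le1$, gives the divergence bound
\[
\chi^2_{\mathrm B}(\rho\|\sigma)=\mathrm{Tr}[\rho\,\Omega_\sigma(\rho)]-1\le 1/\delta-1 .
\]
For the variance, we use $\nabla\chi^2_{\mathrm B}=2\Omega_\sigma(\rho)$, derived just before the theorem, and get
\[
\mathrm{Var}_\rho(2\Omega_\sigma(\rho))\le 4\,\mathrm{Tr}[\rho\,\Omega_\sigma(\rho)^2]\le 4\|\Omega_\sigma(\rho)\|_\infty^2\le 4/\delta^2 .
\]
Both bounds are uniform in $\rho$ and depend only on $\delta$.

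\emph{Failure of the converse.} We build a family $(\rho_n,\sigma_n)$ with $\lambda_{\min}(\sigma_n)\to0$ in which $\rho_n$ has no weight on the vanishing eigendirection. The natural choice is to take both states diagonal in the same basis of a qubit, for example $\sigma_n=\mathrm{diag}(1-\epsilon_n,\epsilon_n)$ and $\rho=\mathrm{diag}(p,1-p)$. For commuting states, Corollary~\ref{cor:spectral_omega} gives $\Omega_\sigma(\rho)=\mathrm{diag}(\rho_i/\lambda_i)$. The divergence then equals $\sum_i\rho_i^2/\lambda_i-1$, and the gradient variance is
\[
4\Bigl(\sum_i\rho_i^3/\lambda_i^2-\bigl(\sum_i\rho_i^2/\lambda_i\bigr)^2\Bigr).
\]
If we keep $\rho_2$ fixed and nonzero, the second eigenvalue contributes $\rho_2^2/\epsilon_n$ to the divergence, which blows up.

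To avoid this, we also let $\rho$ vary and take $\rho_2=\rho_{2,n}=\epsilon_n$, i.e.\ $\rho_n=\sigma_n$. Then the divergence is identically zero and the variance is zero. That is already a valid counterexample, but it is degenerate, since the gradient is proportional to the identity. A less degenerate choice is $\rho_n=\mathrm{diag}(1-c\epsilon_n,c\epsilon_n)$ with a fixed constant $c\neq1$. Then $\rho_{2,n}^2/\epsilon_n=c^2\epsilon_n\to0$ and $\rho_{2,n}^3/\epsilon_n^2=c^3\epsilon_n\to0$, while the contributions from the first eigenvalue stay bounded. Both quantities therefore remain uniformly bounded and in fact converge, although $\lambda_{\min}(\sigma_n)=\epsilon_n\to0$, so no fixed $\delta$ works.

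For a non-commuting example, one can rotate $\rho_n$ slightly off-diagonal with coherence of order $\epsilon_n$ and run the same computation through Corollary~\ref{cor:spectral_omega}. The off-diagonal weights there are $2/(\lambda_1+\lambda_2)\approx2$, which stay bounded.

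The main obstacle is the converse: choosing the scaling of $\rho_n$'s weight near the small eigenvalue. Making $\chi^2_{\mathrm B}$ bounded only requires $\rho_{2,n}=O(\sqrt{\epsilon_n})$. The gradient variance is more demanding, because of the $\rho_{2,n}^3/\epsilon_n^2$ term, and needs $\rho_{2,n}=O(\epsilon_n^{2/3})$. Choosing $\rho_{2,n}\propto\epsilon_n$ satisfies both constraints at once.
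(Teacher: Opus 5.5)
Your proof is correct. For sufficiency you follow the paper's argument: bound $\|\Omega_\sigma(\rho)\|_\infty\le 1/\delta$, then use $\mathrm{Var}_\rho(2\Omega_\sigma(\rho))\le 4\|\Omega_\sigma(\rho)\|_\infty^2$. You do two things the paper does not. You justify that key estimate through the Lyapunov integral, whereas the paper only asserts it. You also bound the divergence itself by $1/\delta-1$, which the statement claims but the paper's proof never writes down.

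The converse is where your route genuinely differs. You use commuting states $\sigma_n=\mathrm{diag}(1-\epsilon_n,\epsilon_n)$ and $\rho_n=\mathrm{diag}(1-c\epsilon_n,c\epsilon_n)$. Then $\Omega_{\sigma_n}(\rho_n)=\mathrm{diag}\big(\tfrac{1-c\epsilon_n}{1-\epsilon_n},c\big)$, and both quantities tend to $0$; the gradient variance is about $4c(1-c)^2\epsilon_n$.

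The paper instead takes $\sigma_n=\mathrm{diag}(1-1/n,1/n)$ and a non-commuting $\rho_n$ with coherence $1/n$ and population $n^{-2/3}$ on the small eigendirection. This sits exactly at the critical scaling $\rho_{22}\sim\lambda_{\min}^{2/3}$ that your own scaling analysis identifies. The payoff is that $\mathrm{Var}_{\rho_n}(\Omega_{\sigma_n}(\rho_n))\to 1$ stays bounded away from zero. The example is therefore non-degenerate in the sense the paper's first-order Cram\'er--Rao discussion needs, since there $\mathrm{Var}_\rho(\nabla f(\rho))>0$. Your family proves the theorem as stated, but only through quantities that vanish in the limit, so it illustrates the point less sharply. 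You can get the stronger version inside your diagonal framework by taking $\rho_{2,n}=\epsilon_n^{2/3}$; the gradient variance then tends to $4$ while $\chi^2_{\mathrm{B}}\to 0$.

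In the other direction, your computation is transparent and easy to check. The paper's displayed formulas contain slips that do not affect its boundedness conclusion. The off-diagonal entry of $\Omega_{\sigma_n}(\rho_n)$ should be $2/n$, because $2/(\lambda_1+\lambda_2)=2$. Also, $\chi^2_{\mathrm{B}}(\rho_n\|\sigma_n)$ actually behaves like $n^{-1/3}\to 0$, not $1$.
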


\begin{proof}
    For the forward implication, $\lambda_{\min}(\sigma) \ge \delta > 0$ implies $\Vert{}\Omega_\sigma(\rho)\Vert{}_\infty \le \frac{1}{\delta}$. Bounding the second moment via the spectral norm yields 
    \[\mathrm{Var}_\rho\big(\Omega_\sigma(\rho)\big) \le \mathrm{Tr}\left[\rho \Omega_\sigma(\rho)^2\right] \le \Vert{}\Omega_\sigma(\rho)\Vert{}_\infty^2 \le \frac{1}{\delta^2}\,,\]
    which directly bounds the gradient variance as $\mathrm{Var}_\rho\big(\nabla \chi^2_{\mathrm{B}}(\rho\Vert{}\sigma)\big) = 4\mathrm{Var}_\rho\big(\Omega_\sigma(\rho)\big) \le 4/\delta^2$. 
    
    To show the non-necessity of the minimum singular value assumption of \cite{BOW19}, let us construct the parametric family of $2 \times 2$ density operators for $n \ge 2$ as 
    \[\rho_n = \begin{pmatrix} 1-\frac{1}{n^{2/3}} & \frac{1}{n} \\ \frac{1}{n} & \frac{1}{n^{2/3}} \end{pmatrix},\quad \text{and}\quad \sigma_n = \begin{pmatrix} 1-\frac{1}{n} & 0 \\ 0 & \frac{1}{n} \end{pmatrix}\,.\]
    Note that $\lim_{n\to\infty} \lambda_{\min}(\sigma_n) = 0$, and via direct substitution into Lemma \ref{lem:integral_measured_chi2}, we have 
    \[\Omega_{\sigma_n}(\rho_n) = \begin{pmatrix} \frac{1-\frac{1}{n^{2/3}}}{1-\frac{1}{n}} & \frac{1}{n} \\ \frac{1}{n} & {n^{1/3}} \end{pmatrix}.\]
    Evaluating the divergence and variance explicitly gives us \[\chi^2_{\mathrm{B}}(\rho_n\Vert{}\sigma_n) = \frac{\left(1 - \frac{1}{n^{2/3}}\right)^3}{\left(1 - \frac{1}{n}\right)^2} + \frac{1}{n^2} + \frac{2}{n^2}\left(\frac{1 - \frac{1}{n^{2/3}}}{1 - \frac{1}{n}}\right) + \frac{2}{n^{5/3}}\,,\] 
    \begin{align*}
        \text{and, }
    \mathrm{Var}_{\rho_n}\big(\Omega_{\sigma_n}(\rho_n)\big) &= \frac{n^{-2/3}\left(1 - n^{-2/3}\right)^3}{\left(1 - n^{-1}\right)^2} + 1 - n^{-2/3} + n^{-2} - 4n^{-4} \\& + 2n^{-2} \frac{\left(1 - n^{-2/3}\right)\left(2n^{-2/3} - 1\right)}{1 - n^{-1}} + 2n^{-5/3} - 4n^{-7/3} - 2n^{-1/3} \frac{\left(1 - n^{-2/3}\right)^2}{1 - n^{-1}}\,.
    \end{align*}
    
    Therefore, $\lim\limits_{n\to\infty}\lambda_{\min}(\sigma_n) = 0$ while $\lim\limits_{n\to\infty} \chi^2_{\mathrm{B}}(\rho_n\Vert{}\sigma_n)=\lim\limits_{n\to\infty}\mathrm{Var}_{\rho_n}\big(\Omega_{\sigma_n}(\rho_n)\big)=1$. Thus, both terms remain uniformly bounded despite $\lim\limits_{n\to\infty} \lambda_{\mathrm{min}}(\sigma_n)= 0$.
\end{proof}

Consequently, assuming the boundedness of $\mathrm{Var}_{\rho}\big(\nabla \chi^2_{\mathrm{B}}(\rho\Vert{}\sigma)\big)$ provides a weaker and more natural condition for statistical analysis than requiring a uniform lower bound on $\lambda_{\min}(\sigma)$. In fact, as discussed in section \ref{sec:variance_optimality}, the boundedness of $\mathrm{Var}_{\rho}\big(\nabla \chi^2_{\mathrm{B}}(\rho\Vert{}\sigma)\big)$ is required solely to ensure a convergent first-order Quantum Fisher Information, and thereby to exhibit the first-order Cram\'er-Rao optimality of our analysis. Nevertheless, as already pointed out in Corollary \ref{cor:degenerate_variance}, our analysis can be naturally extended to account for vanishing variances up to any $k^{\text{th}}$-order tensor gradient (and a non-vanishing variance for the $(k+1)^{\text{th}}$-order tensor gradient) of the functional, in which case we need to study the $(k+1)^{\text{th}}$-order convergent Quantum Fisher Information, and the $(k+1)^{\text{th}}$-order Cram\'er-Rao optimality of our analysis. In the remaining case where the variance of every $k^{\text{th}}$-order tensor gradient of the functional vanishes, the estimation problem itself becomes fundamentally non-viable.

\subsection{Application II - Purity}

Let $\rho \in \mathcal{D}(\mathcal{H})$. We define the \emph{purity} of the density operator $\rho$ as $\operatorname{Pur}(\rho):=\tr(\rho^2)$.

Similar to the variance analysis before, suppose there exists an $n$-copy permutation-invariant observable $O^{\mathrm{sym}}_{n,\mathrm{Pur}}$ estimating the purity $\mathrm{Pur(\rho)}$ of a given quantum state $\rho$ (refer to the discussion in Subsection \ref{subsec:homo_monomial}). Then, as per Equation  \eqref{eq:hoeffding_decomp_var}, the variance of the estimator exactly scales as 
\begin{equation}\label{eq:purity_variance}
    \mathrm{Var}_{\rho^{\otimes n}}\left(O^{\mathrm{sym}}_{n,\mathrm{Pur}}\right) = \frac{1}{n} \mathrm{Var}_{\rho}\big(\nabla \mathrm{Pur}(\rho)\big) + \mathcal{O}\left(\frac{1}{n^2}\right)\,.
\end{equation}

Let us compute the gradient of the purity functional $\mathrm{Pur}(\rho)$. Considering a Hermitian perturbation $\delta\rho$, we expand the functional to extract its first-order directional variation as
\begin{align*}
    f(\rho + \delta\rho) &= \tr\big[(\rho + \delta\rho)^2\big] \\
    &= \tr\big[\rho^2 + \rho(\delta\rho) + (\delta\rho)\rho + (\delta\rho)^2\big]\,.
\end{align*}

By discarding the quadratic term $\tr\big[(\delta\rho)^2\big]$, which corresponds to $\mathcal{O}(\Vert{}\delta\rho\Vert{}^2)$, the first-order variation isolates to 
\begin{equation*}
    \delta f = \tr[\rho(\delta\rho)] + \tr[(\delta\rho)\rho]\,.
\end{equation*}

Exploiting the cyclic property of the trace, we observe that $\tr[(\delta\rho)\rho] = \tr[\rho(\delta\rho)]$. Consequently, the two terms combine seamlessly, simplifying the total first-order variation to $\delta f = 2\tr[\rho(\delta\rho)]$. By the standard definition of the matrix gradient under the Hilbert-Schmidt inner product $\delta f = \langle \nabla f(\rho), \delta\rho \rangle_{\mathrm{HS}} = \tr[\nabla f(\rho)\delta\rho]$, we immediately identify the gradient of the purity functional as $\nabla \operatorname{Pur}(\rho) = 2\rho$. Substituting the value of the functional gradient in Equation  \eqref{eq:purity_variance}, we see that the variance of the estimator $O^{\mathrm{sym}}_{n,\mathrm{Pur}}$ scales exactly as 
\begin{align*}
     \mathrm{Var}_{\rho^{\otimes n}}\left(O^{\mathrm{sym}}_{n,\mathrm{Pur}}\right) &= \frac{1}{n} \mathrm{Var}_{\rho}(2\rho) + \mathcal{O}\left(\frac{1}{n^2}\right)\\
      &= \frac{4}{n} \mathrm{Var}_{\rho}(\rho) + \mathcal{O}\left(\frac{1}{n^2}\right)\,.
\end{align*}

\subsection{Application III - Squared Hilbert-Schmidt Distance}

\begin{definition}\label{def:hil_smith}
    Given two density operators $\rho, \sigma\in \mathcal{D}(\mathcal{H})$, the Hilbert-Schmidt distance $D_{\mathrm{HS}}(\rho, \sigma)$ between the two density operators $\rho$ and $\sigma$ is the Schatten 2-norm (also called the Hilbert-Schmidt norm) of their difference, that is \[D_{\mathrm{HS}}(\rho, \sigma) := \Vert{}\rho - \sigma\Vert{}_{\mathrm{HS}} = \sqrt{\mathrm{Tr}\big[(\rho - \sigma)^2\big]}\,.\]
\end{definition}

Again, assuming the existence of a $n$-copy permutation-invariant observable $O^{\mathrm{sym}}_{n,\mathrm{HS}^2}$ estimating the squared Hilbert-Schmidt distance $D^2_{\mathrm{HS}}(\rho, \sigma)$ between an unknown density operator $\rho$, and a fully-known reference density operator $\sigma$, as a direct consequence of Equation  \eqref{eq:hoeffding_decomp_var}, the variance of the estimator scales as 

\begin{equation}\label{eq:bures_variance}
    \mathrm{Var}_{\rho^{\otimes n}}\left(O^{\mathrm{sym}}_{n,\mathrm{HS}^2}\right) = \frac{1}{n} \mathrm{Var}_{\rho}\big(\nabla D^2_{\mathrm{HS}}(\rho, \sigma)\big) + \mathcal{O}\left(\frac{1}{n^2}\right)\\=\frac{4}{n} \mathrm{Var}_{\rho}(\rho-\sigma) + \mathcal{O}\left(\frac{1}{n^2}\right)\,.
\end{equation}

\begin{table}[h]
    \centering
    \resizebox{1.0\textwidth}{!}{%
    \renewcommand{\arraystretch}{3.0}
    \begin{tabular}{@{}|l|l|l|l|@{}}
        \hline
        \hline
        \thead{\textbf{Quantum Measure / Divergence} \\ \textbf{(Assuming Full-Rank $\sigma$)}} & \thead{\textbf{Functional Gradient} \\ $\nabla f(\rho)$} & \thead{\textbf{Local Permutation-Invariant Kernel} \\ $O^{\mathrm{sym}}_k$} & \thead{\textbf{First-Order Marginal Kernel} \\ $O^{\mathrm{sym}}_{k,1}$} \\
        \hline
        \hline
        \makecell[l]{Hilbert-Schmidt Inner Product\\ $\big(f = \mathrm{Tr}[\rho\sigma]\big)$} & $\sigma$ & $\sigma$ & $\sigma$ \\
        \hline
        Squared Hilbert-Schmidt Distance & $2(\rho - \sigma)$ & $\mathbb{S} - \sigma \otimes \mathbb{I} - \mathbb{I} \otimes \sigma + \mathrm{Tr}[\sigma^2]\mathbb{I}^{\otimes 2}$ & $\rho - \sigma + \big(\mathrm{Tr}[\sigma^2] - \mathrm{Tr}[\rho\sigma]\big)\mathbb{I}$ \\
        \hline
        Purity & $2\rho$ & $\mathbb{S}$ & $\rho$ \\
        \hline
        \makecell[l]{$k^{\text{th}}$ Order Purity / (Exponential) Integer \\ Rényi-$k$ Entropy $\big(f = g^{1-k}=\mathrm{Tr}[\rho^k]\big)$} & $k\rho^{k-1}$ & $\frac{1}{k!}\sum\limits_{\pi \in S_k}P_\pi P_{\gamma_k} P_\pi^\dagger$ & $\rho^{k-1}$ \\
        \hline
        Bures $\chi^2$-Divergence & $2\,\Omega_\sigma(\rho)$ & $\mathbb{S}_\sigma - \mathbb{I}^{\otimes 2}$ & $\Omega_\sigma(\rho) - \mathbb{I}$ \\
        \hline
        \makecell[l]{Maximal $\chi^2$-Divergence\\ $\big(f=\mathrm{Tr}[\rho^2\sigma^{-1}]-1\big)$} & $\rho\sigma^{-1} + \sigma^{-1}\rho$ & $\frac{1}{2}\big(\mathbb{S}(\mathbb{I} \otimes \sigma^{-1}) + (\mathbb{I} \otimes \sigma^{-1})\mathbb{S}\big) - \mathbb{I}^{\otimes 2}$ & $\frac{1}{2}\big(\rho\sigma^{-1} + \sigma^{-1}\rho\big) - \mathbb{I}$ \\
        \hline
        \makecell[l]{Sandwiched $\chi^2$-Divergence \\ $\big(f=\mathrm{Tr}\big[ \rho \sigma^{-1/2} \rho \sigma^{-1/2} \big] - 1\big)$} & $2\sigma^{-1/2}\rho\sigma^{-1/2}$ & $\mathbb{S}(\sigma^{-1/2} \otimes \sigma^{-1/2}) - \mathbb{I}^{\otimes 2}$ & $\sigma^{-1/2}\rho\sigma^{-1/2} - \mathbb{I}$\\
        \hline
        \hline
    \end{tabular}%
    }
    \caption{Gradients, local permutation-invariant kernels, and their corresponding first-order marginal kernels for some fundamental quantum polynomial functionals.}
    \label{tab:quantum_functional_kernels_hoeffding}
\end{table}

Table \ref{tab:quantum_functional_kernels_hoeffding} catalogues the functional gradients, local permutation-invariant kernels, and their corresponding first-order marginal kernels for a comprehensive family of fundamental quantum state polynomial functionals. In the table, the notation $\mathbb{S}$ denotes the 2-copy $\mathrm{SWAP}$ operator $\mathbb{S} = \sum\limits_{i,j} \vert{}j,i\rangle\langle i,j\vert{}$, and $P_{\gamma_m}$ is the $m$-copy backward cyclic shift operator defined as \[P_{\gamma_m} := \sum\limits_{i_1, i_2, \dots, i_m} \vert{}i_2, i_3 \dots, i_{m-1}, i_m, i_1\rangle \langle i_1, i_2, i_3, \dots, i_m\vert{}\,,\] and assuming the reference operator $\sigma$ to be full rank, the $\sigma$-weighted $\mathrm{SWAP}$ operator $\mathbb{S}_\sigma$ is defined as a 2-copy observable acting on $\mathcal{H}^{\otimes 2}$ that generalizes the 2-copy $\mathrm{SWAP}$ operator by incorporating the spectral geometry of $\sigma$ as \[\mathbb{S}_\sigma := \sum_{i,j} \frac{2}{\lambda_i + \lambda_j} \vert{}j, i\rangle\langle i, j\vert{}\,,\] where $\sigma = \sum\limits_i \lambda_i \vert{}i\rangle\langle i\vert{}$ denotes the spectral decomposition of the operator $\sigma$.

\section{Variance-Sensitive Exponential Probability Concentration}
\label{sec:concentration}

While the asymptotic Cram\'er--Rao efficiency established in Section \ref{sec:variance_optimality} characterizes the ultimate performance of quantum U-statistics in the large-sample limit ($n \to \infty$), practical quantum parameter estimation often operates under strict finite-sample constraints. In such regimes, merely characterizing the expected variance is insufficient; rigorous, high-probability tail bounds are required to guarantee reliability. Standard concentration inequalities often rely solely on the spectral norm of the observable, leading to loose bounds that fail to capture the true statistical fluctuations of the state. To rectify this, we next establish an exponential two-sided probability concentration bound that retains the exact finite-$n$ variance of the quantum U-statistic in its quadratic term. The ensuing argument follows the general moment-expansion philosophy utilized for local quantum observables, but crucially exploits the specific combinatorial structure of permutation-invariant U-statistic kernels. Through this approach, the exact second-order variance contribution is cleanly isolated, while the intractable higher-order contributions are systematically controlled.

To formalize this setting, let us fix a true underlying state $\rho\in\mathcal{D}(\mathcal{H})$ and a permutation-invariant $k$-copy kernel $O_{k}^{\text{sym}}$ corresponding to the real-valued polynomial functional $f$. Since, by definition, the kernel is an unbiased estimator, we have 
\begin{equation}\label{eq:centred_local_operator}
    \mathrm{Tr}\left[\rho^{\otimes n}\left(O^{\text{sym}}_k-f(\rho)\mathbb{I}^{\otimes k}\right)_{\Lambda}\right]=0\,, \quad \forall\Lambda\in\binom{[n]}{k}\,,
\end{equation} 
where $\left(O^{\text{sym}}_k-f(\rho)\mathbb{I}^{\otimes k}\right)_\Lambda$ denotes an $n$-copy operator that acts as the local $k$-copy centred operator\\ $\left(O^{\text{sym}}_k-f(\rho)\mathbb{I}^{\otimes k}\right)$ on all the copies $\rho_{(i)}$ such that $i\in \Lambda$, and as the identity operator on all of the remaining copies $\rho_{(j)}$ such that $j\in [n]\backslash \Lambda$. Let us define the centred quantum U-statistic $\widetilde{U}_{n,k}$ as 
\begin{equation}\label{eq:centred_u_stat_definition}
    \widetilde{U}_{n,k}:=\frac{1}{\binom{n}{k}}\sum\limits_{\Lambda\in \binom{[n]}{k}}\left(O^{\text{sym}}_k-f(\rho)\mathbb{I}^{\otimes k}\right)_{\Lambda}\,.
\end{equation}

We observe that the quantum variance of the centred quantum U-statistic $\widetilde{U}_{n,k}$ is the same as that of the original quantum U-statistic $U_{n,k}$. In other words, analysing the variance of the original quantum U-statistic is equivalent to analysing the second-order moment of the centred quantum U-statistic, i.e., 
\begin{equation}\label{eq:var_equivalence}
    \mathrm{Var}_{\rho^{\otimes }}\left(U_{n,k}\right)=\mathrm{Tr}\left[\rho^{\otimes n}\widetilde{U}_{n,k}^2\right]=\mathrm{Var}_{\rho^{\otimes }}\left(\widetilde{U}_{n,k}\right)\,.
\end{equation}

Analyzing the second-order moment of $\widetilde{U}_{n,k}$ requires evaluating the covariances between the local $k$-copy centred operators $\left(O^{\text{sym}}_k-f(\rho)\mathbb{I}^{\otimes k}\right)$ acting on different subsets. If two subsets share no common tensor factors, their operators act independently on the product state, yielding zero covariance. When they do intersect, their covariance contributes to the total variance. The following lemma guarantees that these pairwise overlap contributions are strictly non-negative, a vital property for preserving the exact variance in our concentration bounds.

\begin{lemma}[Positivity of pair covariances]\label{lemma:pair_covariance_positivity}
    For any subset of indices $\Lambda,\Gamma\in\binom{[n]}{k}$, the $n$-copy operators $\left(O^{\text{sym}}_k-f(\rho)\mathbb{I}^{\otimes k}\right)_{\Lambda}$ and $\left(O^{\text{sym}}_k-f(\rho)\mathbb{I}^{\otimes k}\right)_{\Gamma}$ exhibit non-negative covariance, i.e., 
    \[\mathrm{Tr}\left[\rho^{\otimes n} \left(O^{\text{sym}}_k-f(\rho)\mathbb{I}^{\otimes k}\right)_{\Lambda} \left(O^{\text{sym}}_k-f(\rho)\mathbb{I}^{\otimes k}\right)_{\Gamma}\right]\ge 0\,.\] 
    In particular, if the subsets are strictly disjoint (i.e., $\Lambda\cap\Gamma=\emptyset$), then their covariance perfectly vanishes, i.e., 
    \[\mathrm{Tr}\left[\rho^{\otimes n} \left(O^{\text{sym}}_k-f(\rho)\mathbb{I}^{\otimes k}\right)_{\Lambda} \left(O^{\text{sym}}_k-f(\rho)\mathbb{I}^{\otimes k}\right)_{\Gamma}\right]=0\,.\]
\end{lemma}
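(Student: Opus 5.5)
The plan is to reduce the $n$-copy covariance to a second moment of a lower-order marginal kernel on the overlap $S:=\Lambda\cap\Gamma$, with $c:=|S|$. Write $h:=O^{\text{sym}}_k-f(\rho)\mathbb{I}^{\otimes k}$, which is Hermitian and permutation-invariant, and set $g_c:=h_{k,c}$ for its $c^{\text{th}}$-order marginal in the sense of Definition \ref{def:marginal_kernel}. Since $h$ is permutation-invariant and every copy carries the same state $\rho$, $g_c$ does not depend on which $k-c$ slots are traced out. The target identity is
\begin{equation*}
\mathrm{Tr}\left[\rho^{\otimes n} h_{\Lambda}h_{\Gamma}\right]=\mathrm{Tr}\left[\rho^{\otimes c}\, g_c^{2}\right],
\end{equation*}
with $g_c$ Hermitian. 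Non-negativity then follows at once, because $g_c^2\ge 0$ and $\rho^{\otimes c}\ge 0$.

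To establish this identity, first trace out all copies outside $\Lambda\cup\Gamma$. These see only identities and contribute $\mathrm{Tr}[\rho]=1$. Next, trace out the copies in $\Lambda\setminus S$. On these factors $h_\Gamma$ acts as the identity, so $h_\Gamma$ commutes with $\rho^{\otimes(\Lambda\setminus S)}$. Partial-trace cyclicity is valid for operators supported only on the traced factors, so it lets us write
\begin{equation*}
\mathrm{Tr}_{\Lambda\setminus S}\left[\rho^{\otimes(\Lambda\setminus S)}h_\Lambda h_\Gamma\right]=\mathrm{Tr}_{\Lambda\setminus S}\left[h_\Lambda\,\rho^{\otimes(\Lambda\setminus S)}\right]h_\Gamma=(g_c)_S\, h_\Gamma .
\end{equation*}
The same argument on $\Gamma\setminus S$, where now $(g_c)_S$ acts trivially, replaces $h_\Gamma$ by $(g_c)_S$. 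What remains is $\mathrm{Tr}[\rho^{\otimes S}(g_c)_S(g_c)_S]$. I expect this bookkeeping to be the main obstacle: one must check carefully that each $\rho$-factor can be moved next to the operator that actually acts on its slot, since $h_\Lambda$ and $h_\Gamma$ do not commute on $S$. The ordering on $S$ is never touched, so the two $g_c$ factors keep their order, and this is harmless because both are the same operator.

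Hermiticity of $g_c$ is shown by rewriting
\begin{equation*}
g_c=\mathrm{Tr}_{(c+1)\ldots k}\left[(\mathbb{I}^{\otimes c}\otimes\sqrt{\rho}^{\otimes(k-c)})\,h\,(\mathbb{I}^{\otimes c}\otimes\sqrt{\rho}^{\otimes(k-c)})\right],
\end{equation*}
which uses the same partial cyclicity. This is a partial trace of a Hermitian operator, hence Hermitian. Therefore $g_c^2\ge 0$ and $\mathrm{Tr}[\rho^{\otimes c}g_c^2]\ge 0$.

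For the disjoint case $c=0$, the same computation gives the product $\mathrm{Tr}[\rho^{\otimes k}h]^2$. By unbiasedness, Equation \eqref{eq:centred_local_operator}, we have $\mathrm{Tr}[\rho^{\otimes k}h]=f(\rho)-f(\rho)=0$, so the covariance vanishes. Equivalently, the scalar marginal $g_0$ is zero.
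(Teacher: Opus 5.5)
Your proposal is correct and follows the paper's proof: you use permutation invariance to identify the overlap marginal, reduce the pair covariance to $\mathrm{Tr}[\rho^{\otimes c} g_c^2]$ with $g_c$ the Hermitian $c^{\text{th}}$-order centred marginal, and dispose of the disjoint case by unbiasedness. The differences are cosmetic. You spell out the factor-by-factor partial-trace bookkeeping that the paper compresses into one line, and you get Hermiticity of $g_c$ from the $\sqrt{\rho}$ sandwich, whereas the paper takes the adjoint directly and applies partial-trace cyclicity.
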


\begin{proof}
    The disjoint case follows immediately from the zero-trace property of Equation \eqref{eq:centred_local_operator}, and the independent product structure of $\rho^{\otimes n}$. Suppose now that the intersection between the subsets has a non-zero cardinality $j\in\{1,...,k\}$. By invoking the permutation invariance of the local $k$-copy centred kernel $\left(O^{\text{sym}}_k-f(\rho)\mathbb{I}^{\otimes k}\right)$, we may relabel the tensor factors without loss of generality so that $\Lambda=\{1,...,k\}$ and $\Gamma=\{1,...,j,k+1,...,2k-j\}$. Then, the $j^{\text{th}}$-order centred marginal kernel can be defined by tracing out the non-overlapping factors of the local $k$-copy kernel as 
    \[\left(O^{\text{sym}}_k-f(\rho)\mathbb{I}^{\otimes k}\right)_j:=\mathrm{Tr}_{j+1,...,k}\left[\left(O^{\text{sym}}_k-f(\rho)\mathbb{I}^{\otimes k}\right)(\mathbb{I}^{\otimes j}\otimes\rho^{\otimes(k-j)})\right]\,.\] 
   Now, using the fact that both $\left(O^{\text{sym}}_k-f(\rho)\mathbb{I}^{\otimes k}\right)$ and $\rho$ are Hermitian matrices, alongside the cyclicity of the partial trace with respect to operators acting solely on the traced tensor factors, we verify that 
    \begin{align*}
        \left(O^{\text{sym}}_k-f(\rho)\mathbb{I}^{\otimes k}\right)_j^{\dagger}&= \mathrm{Tr}_{j+1,...,k}\left[(\mathbb{I}^{\otimes j}\otimes\rho^{\otimes(k-j)})\left(O^{\text{sym}}_k-f(\rho)\mathbb{I}^{\otimes k}\right)\right]\\
        &=\mathrm{Tr}_{j+1,...,k}\left[\left(O^{\text{sym}}_k-f(\rho)\mathbb{I}^{\otimes k}\right)(\mathbb{I}^{\otimes j}\otimes\rho^{\otimes(k-j)})\right]\\
        &=\left(O^{\text{sym}}_k-f(\rho)\mathbb{I}^{\otimes k}\right)_j\,,
    \end{align*}
    that is, the $j^{\text{th}}$-order centred marginal kernel is Hermitian.

    Since the tensor factors residing outside the common $j$ positions are mutually disjoint, evaluating the expectation over the product structure of $\rho^{\otimes n}$ directly gives 
    \[\mathrm{Tr}\left[\rho^{\otimes n}\left(O^{\text{sym}}_k-f(\rho)\mathbb{I}^{\otimes k}\right)_{\Lambda} \left(O^{\text{sym}}_k-f(\rho)\mathbb{I}^{\otimes k}\right)_{\Gamma}\right]=\mathrm{Tr}\left[\rho^{\otimes j}\left(O^{\text{sym}}_k-f(\rho)\mathbb{I}^{\otimes k}\right)_j^2\right]\ge 0\,.\] 
\end{proof}

To systematically track overlapping subsets across higher-order moments, it is highly advantageous to map these intersections onto a graph structure (refer to Figure \ref{fig:intersection_graph_example} for an example). For an integer $r\ge 1$, and an ordered $r$-tuple $\mathbf{\Lambda}=(\Lambda_{1},...,\Lambda_{r})\in\binom{[n]}{k}^r$, let $G(\mathbf{\Lambda})$ denote the \textit{intersection graph} associated with the vertex set $[r]$. In this graph, an edge connects vertex $i$ and $j$ precisely when their respective subsets overlap, i.e., $\Lambda_{i}\cap\Lambda_{j}\ne\emptyset$. We classify the tuple as \textit{connected} whenever its intersection graph $G(\mathbf{\Lambda})$ forms a connected component. 

\begin{figure}
    \centering
    \resizebox{0.8\textwidth}{!}{
        \begin{tikzpicture}[
            subset/.style={circle, draw=blue!80, fill=blue!15, fill opacity=0.3, thick, minimum size=2.6cm},
            element/.style={circle, fill=black!80, inner sep=1.2pt},
            vertex/.style={circle, draw=black, fill=red!20, thick, minimum size=7.5mm, font=\bfseries},
            edge/.style={thick, draw=black!80, line width=1.2pt},
            maparrow/.style={-{Stealth[scale=1.2]}, line width=1.5pt, blue!70!black}
        ]
        \begin{scope}[local bounding box=leftpanel]
            \draw[dashed, draw=gray!80, rounded corners=15pt, very thick, fill=gray!5] (-1.8, -1.8) rectangle (6.2, 4.5);
            \node[font=\large\bfseries, text=gray!80!black, anchor=south east] at (6.0, 3.5) {Base Set $[n]$};
        
            \node[subset] (L1) at (1, 2.2) {};
            \node[subset] (L2) at (2.2, 1) {};
            \node[subset] (L3) at (1, -0.2) {};
            \node[subset] (L4) at (-0.2, 1) {};
            \node[subset] (L5) at (4.5, 1) {};
            
            \node[blue!80!black, font=\large\bfseries] at (1, 3.1) {$\Lambda_1$};
            \node[blue!80!black, font=\large\bfseries] at (2.8, 1.9) {$\Lambda_2$};
            \node[blue!80!black, font=\large\bfseries] at (1, -1.1) {$\Lambda_3$};
            \node[blue!80!black, font=\large\bfseries] at (-0.6, 1.9) {$\Lambda_4$};
            \node[blue!80!black, font=\large\bfseries] at (4.8, 1.9) {$\Lambda_5$};
        
            \node[element] (x14) at (0, 2) {};
            \node[element] (x12) at (2, 2) {};
            \node[element] (x23) at (2, 0) {};
            \node[element] (x34) at (0, 0) {};
            \node[element] (x25) at (3.35, 1) {};
            
            \node[element] (u1a) at (1, 2.6) {};   \node[element] (u1b) at (1.3, 2.3) {};
            \node[element] (u2a) at (2.4, 1) {};   \node[element] (u3a) at (1, -0.5) {};  \node[element] (u3b) at (1.3, -0.3) {};
            \node[element] (u4a) at (-0.4, 1) {};  \node[element] (u4b) at (-0.6, 0.7) {};
            \node[element] (u5a) at (4.5, 1.4) {}; \node[element] (u5b) at (4.8, 0.9) {}; \node[element] (u5c) at (5.0, 1.2) {};
            
            \draw[Stealth-, draw=gray!90, thick] (x12) -- ++(0.5, 0.5) 
                node[above right, font=\footnotesize, align=left, inner sep=1pt, text=black, xshift=-4pt] {Shared element $x \in \Lambda_1 \cap \Lambda_2$\\ induces edge $(1,2)$};
        \end{scope}
        
        \node[font=\Large\bfseries, align=center, text=black] at (2.2, -2.8) {Ordered Tuple of Subsets\\ $\mathbf{\Lambda} = (\Lambda_{1}, \Lambda_{2}, \Lambda_{3}, \Lambda_{4}, \Lambda_{5})$};
        
        \draw[maparrow] (6.4, 1) -- (9.0, 1) 
            node[midway, above, font=\large\bfseries, text=black] {Intersection}
            node[midway, below, font=\large\bfseries, text=black] {Mapping};
        
        \begin{scope}[shift={(11.2, 0)}, local bounding box=rightpanel]
            \draw[dashed, draw=gray!80, rounded corners=15pt, very thick, fill=gray!5] (-1.8, -1.8) rectangle (6.2, 3.8);
            \node[font=\large\bfseries, text=gray!80!black, anchor=north east] at (5.6, 3.4) {$G(\Lambda)$};
        
            \node[vertex] (v1) at (1, 2.2) {$1$};
            \node[vertex] (v2) at (2.2, 1) {$2$};
            \node[vertex] (v3) at (1, -0.2) {$3$};
            \node[vertex] (v4) at (-0.2, 1) {$4$};
            \node[vertex] (v5) at (4.5, 1) {$5$};
        
            \draw[edge] (v1) -- (v2) node[midway, above right, font=\small, red!60!black, xshift=-3pt, yshift=-3pt] {$\Lambda_1 \cap \Lambda_2 \neq \emptyset$};
            \draw[edge] (v2) -- (v3);
            \draw[edge] (v3) -- (v4);
            \draw[edge] (v4) -- (v1);
            \draw[edge] (v2) -- (v5);
            
        \end{scope}
        
        \node[font=\Large\bfseries, align=center, text=black] at (13.4, -2.8) {(Connected) Intersection\\ Graph $G(\mathbf{\Lambda})$};
        
        \end{tikzpicture}
    }
    \caption{An Example of an Intersection Graph on a Tuple of 5 Subsets}
    \label{fig:intersection_graph_example}
\end{figure}

Leveraging this graph-theoretic perspective, we define the \textit{sum of the absolute moments of the centred quantum U-statistic $\widetilde{U}_{n,k}$ over all connected configurations} as 
\begin{equation}\label{eq:connected_absolute_moment}
    A_{r}:=\frac{1}{\binom{n}{k}^r}\sum_{\substack{\Lambda_{1},...,\Lambda_{r}\in\binom{[n]}{k}:\\ G(\mathbf{\Lambda})\text{ is connected}}} \left\lvert{}\mathrm{Tr}\left[\rho^{\otimes n}\prod_{i=1}^r  \left(O^{\text{sym}}_k-f(\rho)\mathbb{I}^{\otimes k}\right)_{\Lambda_{i}}\right]\right\rvert{}\,.
\end{equation} 
Since individual operators $\left(O^{\text{sym}}_k-f(\rho)\mathbb{I}^{\otimes k}\right)_{\Lambda_{i}}$ inside the product have zero expectation, isolated vertices yield zero; hence $A_{1}=0$. For the second-order term, the pair-positivity established in Lemma \ref{lemma:pair_covariance_positivity} allows us to identify this connected moment exactly as the full finite-sample variance of the U-statistic.

\begin{lemma}[Exact second-order contribution]\label{lemma:second_moment_variance}
    With the connected sum $A_{2}$ as given in Definition \eqref{eq:connected_absolute_moment}, we precisely recover the variance of a quantum U-statistic, i.e., 
    $$A_{2}=\mathrm{Var}_{\rho^{\otimes n}}\left(U_{n,k}\right).$$
\end{lemma}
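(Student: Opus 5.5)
We prove the identity by expanding the second moment of the centred U-statistic and identifying its non-vanishing terms with the connected pairs counted in $A_2$. Write $T_\Lambda:=\left(O^{\text{sym}}_k-f(\rho)\mathbb{I}^{\otimes k}\right)_{\Lambda}$ for brevity. By Equation \eqref{eq:var_equivalence}, $\mathrm{Var}_{\rho^{\otimes n}}(U_{n,k})=\mathrm{Tr}[\rho^{\otimes n}\widetilde{U}_{n,k}^2]$. Substituting Definition \eqref{eq:centred_u_stat_definition} and using linearity of the trace gives
\begin{equation*}
\mathrm{Tr}\left[\rho^{\otimes n}\widetilde{U}_{n,k}^2\right]=\frac{1}{\binom{n}{k}^2}\sum_{\Lambda_1,\Lambda_2\in\binom{[n]}{k}}\mathrm{Tr}\left[\rho^{\otimes n}T_{\Lambda_1}T_{\Lambda_2}\right].
\end{equation*}

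Next, we split this double sum according to whether the intersection graph $G(\Lambda_1,\Lambda_2)$ on two vertices is connected. This graph is connected exactly when $\Lambda_1\cap\Lambda_2\neq\emptyset$, which includes the diagonal case $\Lambda_1=\Lambda_2$. For disjoint pairs, the second part of Lemma \ref{lemma:pair_covariance_positivity} shows that the summand vanishes. This uses the product structure of $\rho^{\otimes n}$ and the zero-mean property in Equation \eqref{eq:centred_local_operator}. Hence only connected pairs contribute to the second moment.

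For the connected pairs, the first part of Lemma \ref{lemma:pair_covariance_positivity} shows that each summand is non-negative: it equals $\mathrm{Tr}[\rho^{\otimes j}(\cdot)_j^2]\ge 0$, where $j=|\Lambda_1\cap\Lambda_2|$. A non-negative real number equals its own absolute value, so the absolute values in Definition \eqref{eq:connected_absolute_moment} can be dropped term by term. The resulting connected sum is exactly the second moment above, which gives $A_2=\mathrm{Var}_{\rho^{\otimes n}}(U_{n,k})$.

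The only delicate point is that the pair traces are real and non-negative, and not merely of non-negative real part. The summands are traces of products of two possibly non-commuting Hermitian operators, so a priori they could be complex. The proof of Lemma \ref{lemma:pair_covariance_positivity} resolves this by reducing each summand to the expectation of the square of the Hermitian marginal $(\cdot)_j$. I would also check that this reduction applies equally to the ordered product $T_{\Gamma}T_{\Lambda}$, for example by taking complex conjugates and using the Hermiticity of all operators involved. With that verified, the identity follows.
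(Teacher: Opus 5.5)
Your proof is correct and is essentially the paper's argument run in the opposite direction. The paper starts from $A_2$; you start from $\mathrm{Tr}[\rho^{\otimes n}\widetilde{U}_{n,k}^2]$. In both, Lemma~\ref{lemma:pair_covariance_positivity} lets you drop the absolute values on overlapping pairs and shows the disjoint pairs contribute zero, and Equation~\eqref{eq:var_equivalence} then gives the variance. Your concern about the order $T_{\Gamma}T_{\Lambda}$ is already handled, because that lemma is stated for an arbitrary ordered pair $(\Lambda,\Gamma)$; in any case $\overline{\mathrm{Tr}[\rho^{\otimes n}T_{\Lambda}T_{\Gamma}]}=\mathrm{Tr}[\rho^{\otimes n}T_{\Gamma}T_{\Lambda}]$.
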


\begin{proof}
    For the case where $r=2$, the graph connectedness condition trivially reduces to the requirement that the sets overlap, i.e., $\Lambda_{1}\cap\Lambda_{2}\ne\emptyset$. By Lemma \ref{lemma:pair_covariance_positivity}, every corresponding overlapping covariance is non-negative, whereas the covariance vanishes for all disjoint pairs. This allows us to drop the modulo from the definition of $A_2$ and reintroduce the disjoint pairs into the summation without altering its value, yielding 
    \begin{align*}
        A_{2}&=\frac{1}{\binom{n}{k}^2}\sum_{\Lambda,\Gamma\in\binom{[n]}{k}}\mathrm{Tr}\left[\rho^{\otimes n} \left(O^{\text{sym}}_k-f(\rho)\mathbb{I}^{\otimes k}\right)_{\Lambda} \left(O^{\text{sym}}_k-f(\rho)\mathbb{I}^{\otimes k}\right)_{\Gamma}\right]\\
        &=\mathrm{Tr}\left[\rho^{\otimes n}\left(\frac{1}{\binom{n}{k}}\sum_{\Lambda\in \binom{[n]}{k}} \left(O^{\text{sym}}_k-f(\rho)\mathbb{I}^{\otimes k}\right)_{\Lambda}\right)^{2}\right]\\
        &\overset{(a)}{=} \mathrm{Var}_{\rho^{\otimes n}}\left(\widetilde{U}_{n,k}\right)\\
        &\overset{(b)}{=} \mathrm{Var}_{\rho^{\otimes n}}\left(U_{n,k}\right)\,,
    \end{align*}
    where equality (a) above follows from the definition of a centred quantum U-statistic, and equality (b) follows from equality \eqref{eq:var_equivalence}.
\end{proof}

While the second-order moment of $\widetilde{U}_{n,k}$ is exact, moments of order $r \ge 3$ become highly complex due to the combinatorial explosion of intersecting subsets. To bound these higher-order connected moments, we employ a structural argument: since every connected graph contains at least one spanning tree, we can bound the complex connected absolute moments by summing exclusively over the simpler tree topologies.

\begin{lemma}[Spanning-tree bound for connected absolute moments]\label{lemma:spanning_tree_connected_moment}
    For every integer $r\ge 3$, the connected absolute moments of $\widetilde{U}_{n,k}$ are bounded above by 
    $$A_{r}\le \dfrac{2^rr^{r-2}k^{2r-2}\left\lVert O^{\text{sym}}_k\right\rVert_{\infty}^r}{n^{r-1}}\,.$$
\end{lemma}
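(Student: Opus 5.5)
Write $\widetilde O:=O^{\text{sym}}_k-f(\rho)\mathbb{I}^{\otimes k}$. The plan is to bound each summand in \eqref{eq:connected_absolute_moment} crudely in operator norm, and then count connected $r$-tuples through spanning trees.

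\emph{Step 1: a uniform bound on each summand.} Since $|f(\rho)|=|\mathrm{Tr}[O^{\text{sym}}_k\rho^{\otimes k}]|\le\|O^{\text{sym}}_k\|_\infty$, we have $\|\widetilde O\|_\infty\le 2\|O^{\text{sym}}_k\|_\infty$. Each $\widetilde O_{\Lambda_i}$ is $\widetilde O$ tensored with identities, so it has the same norm. Hölder's inequality with $\|\rho^{\otimes n}\|_1=1$ then gives
\[
\Big|\mathrm{Tr}\Big[\rho^{\otimes n}\prod_{i=1}^r\widetilde O_{\Lambda_i}\Big]\Big|\le 2^r\|O^{\text{sym}}_k\|_\infty^r .
\]
Hence $A_r\le 2^r\|O^{\text{sym}}_k\|_\infty^r\, N_r/\binom{n}{k}^r$, where $N_r$ is the number of ordered tuples $(\Lambda_1,\dots,\Lambda_r)$ whose intersection graph is connected.

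\emph{Step 2: counting connected tuples by spanning trees.} Every connected $G(\mathbf\Lambda)$ contains a spanning tree $T$ on $[r]$. Therefore $N_r\le\sum_T N(T)$, where $N(T)$ counts tuples with $\Lambda_i\cap\Lambda_j\neq\emptyset$ for every edge $\{i,j\}$ of $T$. By Cayley's formula there are $r^{r-2}$ labelled trees. For a fixed $T$, root it at vertex $1$ and choose the sets in breadth-first order. The root set has $\binom nk$ choices. Each subsequent set $\Lambda_j$ must meet its already chosen parent $\Lambda_{p(j)}$. By a union bound over the $k$ elements of the parent, the number of $k$-subsets meeting a fixed $k$-set is at most $k\binom{n-1}{k-1}=\frac{k^2}{n}\binom nk$. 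This gives
\[
N(T)\le\binom nk\Big(\frac{k^2}{n}\binom nk\Big)^{r-1}=\binom nk^{r}\frac{k^{2r-2}}{n^{r-1}} .
\]

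\emph{Step 3: combining.} Summing over the $r^{r-2}$ trees and dividing by $\binom nk^r$ yields
\[
A_r\le \frac{2^r r^{r-2}k^{2r-2}\|O^{\text{sym}}_k\|_\infty^r}{n^{r-1}},
\]
which is the claimed bound. The restriction $r\ge3$ plays no role in the argument; it only separates these terms from the exact identity $A_2=\mathrm{Var}_{\rho^{\otimes n}}(U_{n,k})$ of Lemma \ref{lemma:second_moment_variance}.

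\emph{Expected obstacle.} The only delicate point is the counting in Step 2. Bounding $N_r$ by a sum over trees overcounts, since a connected tuple with several spanning trees is counted once per tree. This is harmless because we only need an upper bound. What must be checked is that the sequential choice along the tree is legitimate: each new set is constrained only by its parent, which has already been fixed. The parent-intersection count $k\binom{n-1}{k-1}=\frac{k^2}{n}\binom nk$ follows from $\binom{n-1}{k-1}=\frac kn\binom nk$.
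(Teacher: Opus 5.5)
Your proposal is correct and takes essentially the same route as the paper. Both arguments bound each summand uniformly by H\"older with $\|\widetilde O\|_\infty\le 2\|O^{\text{sym}}_k\|_\infty$, dominate the connected tuples by a sum over the $r^{r-2}$ labelled spanning trees (Cayley), and allow $\binom nk$ choices for the root and $k\binom{n-1}{k-1}$ for each non-root set meeting its parent; the paper describes this as pruning leaves inward rather than choosing sets outward from the root, but it is the same iterated bound. Your explicit identity $k\binom{n-1}{k-1}=\frac{k^2}{n}\binom nk$ is the simplification the paper leaves implicit to reach the stated $n^{-(r-1)}$ factor.
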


\begin{proof}
    Let $\mathcal{T}_{r}$ denote the comprehensive set of labelled trees constructed on the vertex set $[r]$. Recognizing that if $G(\mathbf{\Lambda})$ is a connected graph, it must contain at least one spanning tree, we can bound the connected absolute moments by summing over all such trees. Since the individual weights are non-negative, we establish 
    \begin{align}
        A_{r}&=\frac{1}{\binom{n}{k}^r}\sum_{\substack{\Lambda_{1},...,\Lambda_{r}\in\binom{[n]}{k}:\\ G(\mathbf{\Lambda})\text{ is connected}}}\left\lvert \mathrm{Tr}\left[\rho^{\otimes n}\prod_{i=1}^r  \left(O^{\text{sym}}_k-f(\rho)\mathbb{I}^{\otimes k}\right)_{\Lambda_{i}}\right]\right\rvert{}\notag \\
        &\leq\frac{1}{\binom{n}{k}^r}\sum_{T\in \mathcal{T}_{r}}\sum_{\substack{\Lambda_{1},...,\Lambda_{r}\in\binom{[n]}{k}:\\\Lambda_{i}\cap\Lambda_{j}\ne\emptyset\,,\,\forall(i,j)\in E(T)}} \left\lvert \mathrm{Tr}\left[\rho^{\otimes n}\prod_{i=1}^r  \left(O^{\text{sym}}_k-f(\rho)\mathbb{I}^{\otimes k}\right)_{\Lambda_{i}}\right]\right\rvert \notag\\
        &\overset{(a)}{\leq}\frac{1}{\binom{n}{k}^r}\sum_{T\in \mathcal{T}_{r}}\sum_{\substack{\Lambda_{1},...,\Lambda_{r}\in\binom{[n]}{k}:\\\Lambda_{i}\cap\Lambda_{j}\ne\emptyset\,,\,\forall(i,j)\in E(T)}} \prod_{i=1}^r {\left\lVert{}\left(O^{\text{sym}}_k-f(\rho)\mathbb{I}^{\otimes k}\right)_{\Lambda_{i}}\right\rVert{}}_{\infty} \notag\\
        &\overset{(b)}{\leq}\left(\frac{2\left\lVert{}O^{\text{sym}}_k\right\rVert{}_{\infty}}{\binom{n}{k}}\right)^r\sum_{T\in \mathcal{T}_{r}}\sum_{\substack{\Lambda_{1},...,\Lambda_{r}\in\binom{[n]}{k}:\\\Lambda_{i}\cap\Lambda_{j}\ne\emptyset\,,\,\forall(i,j)\in E(T)}} 1\qquad,\label{eq:double_sum_inequality}
    \end{align}
    where inequality (a) follows from an application of H\"older's inequality to the trace expression, followed by the submultiplicativity of matrix infinity norms, while inequality (b) follows from the fact that 
    \begin{align*}
        \left\lVert O^{\text{sym}}_k - f(\rho)\mathbb{I}^{\otimes k} \right\rVert_\infty &\leq \left\lVert O^{\text{sym}}_k \right\rVert_\infty + \left\lVert f(\rho)\mathbb{I}^{\otimes k} \right\rVert_\infty\\
        &=\left\lVert O^{\text{sym}}_k \right\rVert_\infty + \left\lvert f(\rho)\right\rvert\\
        &=\left\lVert O^{\text{sym}}_k \right\rVert_\infty + \left\lvert \mathrm{Tr}\left[O^{\text{sym}}_k \rho^{\otimes k}\right]\right\rvert\\
        &\overset{(c)}{\leq} 2\left\lVert O^{\text{sym}}_k \right\rVert_\infty\,,
    \end{align*}
    where the last step (c) again follows from an application of H\"older's inequality to the expression $\left\lvert \mathrm{Tr}\left[O^{\text{sym}}_k \rho^{\otimes k}\right]\right\rvert\,$.

    We now present Algorithm \ref{alg:spanning_tree_leaf_pruning} to upper-bound the complicated-looking double-sum presented in inequality \eqref{eq:double_sum_inequality}. Figure \ref{fig:spanning_tree_leaf_pruning} presents a visual working of the algorithm.

    \begin{algorithm}
    \caption{Spanning Tree Leaf-Pruning Method for Upper-Bounding $r^{\text{th}}$ Absolute Connected Moments}
    \label{alg:spanning_tree_leaf_pruning}
    \begin{algorithmic}[1]
        \Require Base set $[n]$, subset size $1<k\leq n$, subset tuple cardinality $r \ge 3$
        \Ensure Upper bound $R$ on the $r^{\text{th}}$ absolute connected moment
        \State $R \gets 0$ \Comment{Initialize the $r^{\text{th}}$ absolute connected moment}
        \State $\mathcal{T}_r \gets $ All $r^{r-2}$ labelled spanning trees on vertex set $V =[n]$
        \For{\textbf{each} tree $T = (V_T, E_T) \in \mathcal{T}_r$}
            \State $M_T \gets 1$ \Comment{Initialize the absolute connected moment for the tree $T$}
            \While{$|V_T| > 1$}
                \State Select a peripheral leaf node $i \in V_T$ (i.e., a node with degree 1)
                \State Let $p \in V_T$ be the unique parent node of $i$ in $T$
                \State $M_T \gets M_T \times k \binom{n-1}{k-1}$ \Comment{Bound the free choices for $\Lambda_i$ given a fixed parent subset $\Lambda_p$}
                \State $V_T \gets V_T \setminus \{i\}$ \Comment{Prune leaf $i$ from the tree $T$}
                \State $E_T \gets E_T \setminus \{(i, p)\}$ \Comment{Remove the incident edge on leaf $i$}
            \EndWhile
            \State $M_T \gets M_T \times \binom{n}{k}$ \Comment{Only the root remains; subset choices unconstrained for the root}
            \State $R \gets R + M_T$ \Comment{Accumulate the absolute connected moment for the tree $T$}
        \EndFor
        \State \Return $R$
    \end{algorithmic}
    \end{algorithm}
    
    In other words, to upper-bound the double sum presented in inequality \eqref{eq:double_sum_inequality}, the algorithm picks a tree $T$ from $\cT_r$ in every iteration of the \textbf{for} loop of lines 3-14. Without loss of generality, let a peripheral leaf node labelled $i$ be connected to its parent node labelled $p$ in the tree $T$. For any fixed configuration of the parent subset $\Lambda_{p}$, line 8 bounds the sum over the subsets corresponding to the leaf node $i$ as 
    \[\sum_{\Lambda_{i}\in\binom{[n]}{k} : \Lambda_{i}\cap\Lambda_{p}\ne\emptyset} 1 \le\sum_{v\in\Lambda_{p}}\sum_{\Lambda_{i}\in\binom{[n]}{k} : v\in\Lambda_{i}} 1 \le k \binom{n-1}{k-1}\,,\]
    while lines 9-10 prune the leaf node $i$ off the tree $T$.
        
    By iteratively applying this bound from the leaves inward, the algorithm eliminates the non-root vertices one at a time. After successfully eliminating all $(r-1)$ non-root nodes, line 12 accounts for the remaining isolated sum over the root vertex of at most $\sum_{\Lambda_{1}\in\binom{[n]}{k}} 1=\binom{n}{k}$. Thus, for every specifically fixed labeled tree $T$, line 13 upper-bounds the associated inner sum by an overall factor of at most $\binom{n}{k}\left(k \binom{n-1}{k-1}\right)^{r-1}$. 
    
    By Cayley's formula, the total number of distinct labeled trees is exactly $|\mathcal{T}_{r}|=r^{r-2}$. Multiplying this combinatorial factor, and the multiplicative factor of $\left(\frac{2\left\lVert{}O^{\text{sym}}_k\right\rVert{}_{\infty}}{\binom{n}{k}}\right)^r$ with the inner-sum upper-bound completes the proof.
\end{proof}

\begin{figure}
    \centering
    \resizebox{\textwidth}{!}{
        \begin{tikzpicture}
    \tikzset{
        treenode/.style={circle, draw=black!80, fill=blue!10, thick, minimum size=8mm, font=\bfseries},
        rootnode/.style={circle, draw=black!80, fill=green!20, thick, minimum size=8mm, font=\bfseries},
        leafnode/.style={circle, draw=black!80, fill=red!15, thick, minimum size=8mm, font=\bfseries},
        treeedge/.style={thick, draw=black!70, line width=1.2pt},
        prunearrow/.style={-Stealth, line width=1.2pt, red!70!black, dashed},
        setcircle/.style={circle, draw=#1, fill=#1!10, fill opacity=0.4, thick, minimum size=3.2cm},
        dot/.style={circle, fill=black!80, inner sep=1.5pt},
        highlightdot/.style={circle, fill=red!80, inner sep=2pt, draw=black, thick},
        panelbox/.style={rounded corners=10pt, fill=gray!5, draw=gray!40, thick, inner sep=15pt},
        paneltitle/.style={font=\large\bfseries, text=black},
        panelsubtitle/.style={font=\small, text=gray!80!black, align=center}
    }

    \node[paneltitle] (titleA) {Macro View: Iterative Leaf Pruning};
    \node[panelsubtitle, below=0.1cm of titleA] (subA) {Fix tree $T \in \mathcal{T}_r$. Eliminate leaves inward.};

    \node[rootnode, below=1.2cm of subA] (L1) {$\Lambda_1$};
    \node[font=\footnotesize, text=green!50!black, align=center, above=0.1cm of L1] (L1_label) {Root (Last)\\ $\le \binom{n}{k}$ choices};

    \node[treenode, below left=0.8cm and 0.5cm of L1] (L2) {$\Lambda_2$};
    \node[leafnode, below right=0.8cm and 0.5cm of L1] (L3) {$\Lambda_3$};

    \node[leafnode, below left=0.8cm and 0.2cm of L2] (L4) {$\Lambda_4$};
    \node[leafnode, below right=0.8cm and 0.2cm of L2] (L5) {$\Lambda_5$};

    \draw[treeedge] (L1) -- (L2);
    \draw[treeedge] (L1) -- (L3);
    \draw[treeedge] (L2) -- (L4);
    \draw[treeedge] (L2) -- (L5);

    \draw[prunearrow] (L4) to[bend left=20] node[midway, left=0.05cm, font=\scriptsize, text=red!60!black, align=right] {Prune\\(Step 1)} (L2);
    \draw[prunearrow] (L5) to[bend right=20] node[midway, right=0.05cm, font=\scriptsize, text=red!60!black, align=left] {Prune\\(Step 2)} (L2);
    \draw[prunearrow] (L3) to[bend right=20] node[midway, right=0.05cm, font=\scriptsize, text=red!60!black, align=left] {Prune\\(Step 3)} (L1);
    \draw[prunearrow] (L2) to[bend left=30] node[midway, left=0.05cm, font=\scriptsize, text=red!60!black, align=right] {Prune\\(Step 4)} (L1);

    \node[paneltitle, above right=-1.5cm and 2.5cm of titleA] (titleB) {Micro View: Edge Overlap Bound};
    \node[panelsubtitle, below=0cm of titleB] (subB) {$\sum\limits_{\Lambda_{i}: \Lambda_{i}\cap\Lambda_{p}\ne\emptyset} 1 \le k \binom{n-1}{k-1}$};

    \node[below=1.9cm of subB] (B_center) {};

    \node[setcircle=blue, left=-0.75cm of B_center] (Lp) {};
    \node[setcircle=red, right=-0.75cm of B_center] (Li) {};

    \node[highlightdot, right=-0.75cm of Lp.east] (v) {};
    \node[font=\footnotesize, text=black, above=0.05cm of v] (v_lbl) {$v$};

    \node[font=\small\bfseries, text=blue!80!black, align=center, above=0.2cm of Lp.center] (Lp_label) {Parent $\Lambda_p\quad$\\ \normalfont\scriptsize(Fixed)\ \ };
    \node[font=\small\bfseries, text=red!80!black, align=center, above=0.2cm of Li.center] (Li_label) {Leaf $\Lambda_i$\\ \normalfont\scriptsize(Free)};

    \node[rounded corners=8pt, dashed, draw=gray!70, thick, fit=(Lp) (Li) (Lp_label) (Li_label), inner sep=15pt] (baseset) {};
    \node[font=\small\bfseries, text=gray!70!black, above left=-0.6cm and 0.1cm of baseset.north east] {Base Set $[n]$};

    \node[dot, left=0.6cm of Lp.center] (p1) {};
    \node[dot, above left=0.1cm and 0.3cm of Lp.center] (p2) {};
    \node[dot, below left=0.3cm and 0.2cm of Lp.center] (p3) {};

    \node[dot, right=0.6cm of Li.center] (i1) {};
    \node[dot, above right=0.3cm and 0.5cm of Li.center] (i2) {};
    \node[dot, below right=0.3cm and 0.2cm of Li.center] (i3) {};

    \node[font=\footnotesize, text=red!70!black, align=center, below left=2.4cm and -1cm of Lp.center] (v_text) {$\mathbf{1.}$ Pick shared element $v \in \Lambda_p$\\ (At most $k$ choices)};
    \draw[-Stealth, thick, red!70!black] (v_text) -- (v);

    \node[font=\footnotesize, text=red!70!black, align=center, below right=2.4cm and -1cm of Li.center] (i_text) {$\mathbf{2.}$ Pick remaining $k-1$ elements\\ (At most $\binom{n-1}{k-1}$ choices)};
    \draw[-Stealth, thick, red!70!black] (i_text) -- (i3);

    \begin{scope}[on background layer]
        \node[panelbox, fit=(titleA) (subA) (L1_label) (L4) (L5) (L3)] (boxA) {};
        \node[panelbox, fit=(titleB) (subB) (baseset) (v_text) (i_text)] (boxB) {};

        \node[right=0.75cm of titleA] (mid_title) {};
        \node[below=3.5cm of mid_title] (mid_arrow_center) {};
        \node[left=0.2cm of mid_arrow_center] (arr_start) {};
        \node[right=0.2cm of mid_arrow_center] (arr_end) {};
    \end{scope}

\end{tikzpicture}
    }
    \caption{Visualizing Algorithm \ref{alg:spanning_tree_leaf_pruning} to Upper Bound $r^{\text{th}}$ Absolute Connected Moments ($r\geq 3$)}
    \label{fig:spanning_tree_leaf_pruning}
\end{figure}

To transition from bounding individual connected absolute moments of $\widetilde{U}_{n,k}$ to establishing a global concentration inequality for $U_{n,k}$, we first need to evaluate the associated moment-generating function (MGF) of $\widetilde{U}_{n,k}$. The following lemma achieves this by leveraging a cluster expansion—a technique prominent in statistical mechanics—which allows us to factorize the macroscopic MGF into independent connected components.

\begin{lemma}[Component expansion]\label{lemma:MGF}
    For every formal parameter $t\in\mathbb{R}$ for which the following right-hand sum is finite, the moment generating function of the centred quantum U-statistic $\widetilde{U}_{n,k}$ is bounded above by its exponentiated connected absolute moments as
    $$\mathrm{Tr}[\rho^{\otimes n}e^{t\widetilde{U}_{n,k}}]\le \exp\left[\sum_{r=2}^{\infty}\frac{|t|^{r}}{r!}A_{r}\right]\,.$$
\end{lemma}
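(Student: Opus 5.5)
The plan is to expand the exponential as a power series, decompose each term of the $r^{\text{th}}$ moment according to the connected components of its intersection graph $G(\mathbf{\Lambda})$, and resum with the exponential formula for set partitions. Write $X_\Lambda := \left(O^{\text{sym}}_k-f(\rho)\mathbb{I}^{\otimes k}\right)_{\Lambda}$, so that $\widetilde{U}_{n,k}=\binom{n}{k}^{-1}\sum_{\Lambda}X_\Lambda$. Since everything is finite-dimensional, $e^{t\widetilde{U}_{n,k}}=\sum_{r\ge 0}\frac{t^r}{r!}\widetilde{U}_{n,k}^r$ converges in norm. Hence
\[
\mathrm{Tr}[\rho^{\otimes n}e^{t\widetilde{U}_{n,k}}]\le 1+\sum_{r\ge 1}\frac{|t|^r}{r!}|m_r|,\qquad m_r:=\frac{1}{\binom{n}{k}^r}\sum_{\Lambda_1,\dots,\Lambda_r}\mathrm{Tr}\Big[\rho^{\otimes n}\prod_{i=1}^r X_{\Lambda_i}\Big],
\]
where the product is taken in the given order. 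The statement then reduces to showing that $\sum_{r}\frac{|t|^r}{r!}|m_r|$ is dominated by the exponential on the right-hand side.

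\textbf{Factorization over components (the key step).} Fix a tuple $\mathbf{\Lambda}$ and let $\pi(\mathbf{\Lambda})=\{B_1,\dots,B_s\}$ be the partition of $[r]$ into the vertex sets of the connected components of $G(\mathbf{\Lambda})$. Operators indexed by different blocks act on disjoint sets of tensor factors, namely $\bigcup_{i\in B_a}\Lambda_i$. They therefore commute, so the ordered product $\prod_i X_{\Lambda_i}$ can be rearranged into $\prod_a\big(\prod_{i\in B_a}X_{\Lambda_i}\big)$, keeping the original order inside each block. Because $\rho^{\otimes n}$ is a product state over these disjoint supports, the trace factorizes:
\[
\mathrm{Tr}\Big[\rho^{\otimes n}\prod_{i}X_{\Lambda_i}\Big]=\prod_{a=1}^s\mathrm{Tr}\Big[\rho^{\otimes n}\prod_{i\in B_a}X_{\Lambda_i}\Big].
\]
This is the main point requiring care, because the $X_\Lambda$ do not commute in general. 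The argument relies only on commutation \emph{across} components, never within them. By Equation \eqref{eq:centred_local_operator}, a singleton block contributes zero, so only partitions with all blocks of size $\ge 2$ survive.

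\textbf{Bounding and resumming.} Group the tuples by their component partition, $m_r=\sum_{\pi}\sum_{\mathbf{\Lambda}:\pi(\mathbf{\Lambda})=\pi}(\cdots)$, and apply the triangle inequality together with the factorization above. Each resulting summand is a product of absolute block traces. The constraint $\pi(\mathbf{\Lambda})=\pi$ implies that each block is connected; we drop the remaining constraint that different blocks be mutually disjoint. Since all summands are now non-negative, this only enlarges the sum, and the sum then factorizes over blocks. The normalization $\binom{n}{k}^{-r}$ also splits as $\prod_a\binom{n}{k}^{-|B_a|}$. Comparing with Definition \eqref{eq:connected_absolute_moment}, this gives
\[
|m_r|\le\sum_{\pi\vdash[r],\ |B|\ge 2\ \forall B\in\pi}\ \prod_{B\in\pi}A_{|B|}.
\]
Finally, the exponential formula for set partitions, $\sum_{r\ge0}\frac{x^r}{r!}\sum_{\pi\vdash[r]}\prod_{B\in\pi}w_{|B|}=\exp\big(\sum_{s\ge1}\frac{x^s}{s!}w_s\big)$, applied with $w_1=0=A_1$ and $w_s=A_s$, yields $\sum_r\frac{|t|^r}{r!}|m_r|\le\exp\big[\sum_{r\ge2}\frac{|t|^r}{r!}A_r\big]$. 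The rearrangement of this non-negative series is justified by the assumed finiteness of its right-hand side.
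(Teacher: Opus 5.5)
Your proposal is correct and follows essentially the same route as the paper. You Taylor-expand the exponential, factorize each absolute moment over the connected components of $G(\mathbf{\Lambda})$ using commutation only across components, drop the mutual-disjointness constraint between blocks to get $\sum_{\pi}\prod_{C\in\pi}A_{|C|}$, and resum with the exponential formula for set partitions using $A_1=0$.
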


\begin{proof}
    We begin by taking the Taylor series expansion of the matrix exponential as 
    \begin{equation}\label{eq:MGF_centred_u_statistic}
        \mathrm{Tr}[\rho^{\otimes n}e^{t\widetilde{U}_{n,k}}] \le \sum_{r=0}^{\infty}\frac{|t|^{r}}{r!}B_{r}\,,
    \end{equation} 
    where $B_{r}$ is the total unconstrained absolute moment defined as 
    $$B_{r}:=\frac{1}{\binom{n}{k}^r}\sum_{\Lambda_{1},...,\Lambda_{r}\in \binom{[n]}{k}}\left\lvert{}\mathrm{Tr}\left[\rho^{\otimes n} \prod_{i=1}^r \left(O^{\text{sym}}_k-f(\rho)\mathbb{I}^{\otimes k}\right)_{\Lambda_i}\right]\right\rvert{}\,.$$ 
    For any fixed tuple $\mathbf{\Lambda}=(\Lambda_{1},...,\Lambda_{r})\in \binom{[n]}{k}$, let $\pi(\mathbf{\Lambda})$ denote the partition of the index set $[r]$ grouped according to the isolated connected components of the intersection graph $G(\mathbf{\Lambda})$. Operators that are assigned to entirely distinct disconnected components fundamentally act on disjoint, non-overlapping subsets of the tensor factors. As a direct consequence, these operators commute with one another, allowing the global product state expectation to factorize cleanly across the distinct components as 
    $$\left\lvert{}\mathrm{Tr}\left[\rho^{\otimes n} \prod_{i=1}^r \left(O^{\text{sym}}_k-f(\rho)\mathbb{I}^{\otimes k}\right)_{\Lambda_i}\right]\right\rvert{} = \prod_{C\in\pi(\mathbf{\Lambda})}\left\lvert{}\mathrm{Tr}\left[\rho^{\otimes n}\prod_{i\in C}\left(O^{\text{sym}}_k-f(\rho)\mathbb{I}^{\otimes k}\right)_{\Lambda_i}\right]\right\rvert{}\,.$$ 
    Now, if we first fix a specific set-partition and sum over all the tuples obeying the fixed set-partition, while also dropping the requirement that different blocks utilize non-overlapping tensor supports, then we only increase the sum. Doing so, we get 
    \begin{equation}\label{eq:B_r_upper_bound}
        B_{r}\le\sum_{\pi\in\mathfrak{P}([r])}\prod_{C\in\pi}A_{|C|}\,,
    \end{equation} 
    where $\mathfrak{P}([r])$ represents the set of all possible partitions of the index set $[r]$. 
    
    Let us now group the partitions of the $r$-element set by their specific block sizes. Let $k_m$ denote the number of blocks of exactly size $m$ in a given partition $\pi\in \mathfrak{P}([r])$, subject to the total cardinality constraint $\sum_{m=1}^r m k_m = r$. The total number of distinct ways to partition an $r$-element set into exactly $k_m$ blocks of size $m$ is given by $$\frac{r!}{\prod\limits_{m=1}^r k_m! (m!)^{k_m}}\,.$$ Since each block of size $m$ contributes a scalar weight of $A_m$, the total weight of all partitions corresponding to the configuration $\{k_m\}_{m=1}^r$ is $\prod\limits_{m=1}^r A_m^{k_m}$.
    
    Therefore, we can instead rewrite the sum in Equation \eqref{eq:B_r_upper_bound} by summing over all valid block-size configurations as
    {$$ \large\sum_{\pi\in\mathfrak{P}([r])}\prod_{C\in\pi}A_{|C|} = r!\Bigg(\sum_{\substack{\{k_1,\cdots,k_r\} :\\ \sum\limits_{m\in [r]} m k_m = r}} \prod_{m=1}^{r} \frac{1}{k_m!} \left( \frac{A_m}{m!} \right)^{k_m}\Bigg)\,. $$}
    
    Substituting this into the right-hand side of Inequality \eqref{eq:MGF_centred_u_statistic}, the $r!$ in the numerator perfectly cancels the $r!$ in the denominator of the series. Furthermore, summing over all $r$ from $0$ to $\infty$ completely removes the size constraint $\sum m k_m = r$, thereby allowing the number of blocks $k_m$ of each size to vary independently from $0$ to $\infty$. This simplifies the summation as
    \begin{align*}
        \sum_{r=0}^{\infty}\frac{|t|^{r}}{r!}B_{r} &\le \sum_{r=0}^{\infty} \sum_{\{k_m\} : \sum m k_m = r} \prod_{m=1}^{r} \frac{1}{k_m!} \left( \frac{A_m |t|^m}{m!} \right)^{k_m} \\
        &=\sum_{k_1=1}^\infty \sum_{k_2=1}^\infty\ldots \prod_{m=1}^{\infty} \frac{1}{k_m!} \left( \frac{A_m |t|^m}{m!} \right)^{k_m}\\
        &=\left(\sum_{k_1=1}^\infty \frac{1}{k_1!} \left( \frac{A_1 |t|^1}{1!} \right)^{k_1}\right)\left(\sum_{k_2=1}^\infty \frac{1}{k_2!} \left( \frac{A_2 |t|^2}{2!} \right)^{k_2}\right)\ldots \\
        &= \prod_{m=1}^{\infty} \left( \sum_{k_m=0}^{\infty} \frac{1}{k_m!} \left( \frac{A_m |t|^m}{m!} \right)^{k_m} \right)\,.
    \end{align*}
    
    Observe that the inner sum over $k_m$ is exactly the standard Taylor series expansion for the exponential function $\mathrm{exp}\left(\frac{A_m |t|^m}{m!}\right)$. Since the product of exponentials equates to the exponential of the sum, this directly yields:
    $$ \prod_{m=1}^{\infty} \exp\left( \frac{A_m |t|^m}{m!} \right) = \exp\left( \sum_{m=1}^{\infty} \frac{|t|^{m}}{m!} A_{m} \right)\,. $$
    
    Recalling that $A_1=0$, this completes the proof.
\end{proof}

Having isolated the exact variance in Lemma \ref{lemma:second_moment_variance}, bounded the higher-order moments via a spanning tree argument in Lemma \ref{lemma:spanning_tree_connected_moment}, and having integrated them into a cluster-expanded MGF in Lemma \ref{lemma:MGF}, we have the following variance-sensitive MGF expansion as a corollary.

\begin{corollary}[Variance-sensitive exponential MGF upper-bound]\label{col:var_sensitive_exponential_bound}
    Let $U_{n,k}$ denote the global quantum U-statistic corresponding to the $k$-copy local kernel $O_{k}^{\text{sym}}$, and let $\widetilde{U}_{n,k}$ be its centred deviation as defined in \eqref{eq:centred_u_stat_definition}. Assuming the kernel is non-trivial such that $0<\left\lVert{}O^{\text{sym}}_k-f(\rho)\mathbb{I}^{\otimes k}\right\rVert{}_\infty\leq 2\left\lVert{}O^{\text{sym}}_k\right\rVert{}_\infty\ $, for every arbitrary scaling parameter $t\in\mathbb{R}$ that resides within the radius of convergence 
    $$\dfrac{2ek^{2}|t|\left\lVert{}O^{\text{sym}}_k\right\rVert{}_\infty}{n}<1,$$ 
    the moment-generating function is bounded by an exact variance-sensitive converging exponent given by 
    $$\mathrm{Tr}[\rho^{\otimes n}e^{t\widetilde{U}_{n,k}}]\le \exp\left(\frac{t^{2}\mathrm{Var}_{\rho^{\otimes n}}\left(U_{n,k}\right)}{2}+\sum_{r=3}^{\infty}\dfrac{|t|^r2^rr^{r-2}k^{2r-2}\left\lVert O^{\text{sym}}_k\right\rVert_{\infty}^r}{r!n^{r-1}}\right)\,.$$
\end{corollary}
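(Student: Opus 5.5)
The plan is to assemble the three preceding lemmas directly. First I invoke Lemma \ref{lemma:MGF}, which gives
\[
\mathrm{Tr}[\rho^{\otimes n}e^{t\widetilde{U}_{n,k}}]\le \exp\Big(\tfrac{t^2}{2}A_2+\sum_{r\ge 3}\tfrac{|t|^r}{r!}A_r\Big),
\]
valid whenever the series on the right converges. Two substitutions then finish the argument:
\begin{itemize}
\item Lemma \ref{lemma:second_moment_variance} replaces $A_2$ by $\mathrm{Var}_{\rho^{\otimes n}}(U_{n,k})$. This is an exact identity, so the quadratic term comes out precisely as stated, with the variance appearing exactly rather than as a bound.
\item For every $r\ge 3$, Lemma \ref{lemma:spanning_tree_connected_moment} bounds $A_r$ by $2^r r^{r-2}k^{2r-2}\|O^{\text{sym}}_k\|_\infty^r/n^{r-1}$. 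Since $|t|^r/r!\ge 0$, the bound can be inserted term by term without reversing the inequality. Monotonicity of $\exp$ then gives the displayed bound.
\end{itemize}
The non-triviality hypothesis $0<\|O^{\text{sym}}_k-f(\rho)\mathbb{I}^{\otimes k}\|_\infty$ only guarantees that the expansion is not vacuous. The upper bound $2\|O^{\text{sym}}_k\|_\infty$ is the estimate already used in step (b) of Lemma \ref{lemma:spanning_tree_connected_moment}.

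The only genuine work is to check that the stated radius makes the majorizing series finite. This is needed both to legitimize Lemma \ref{lemma:MGF}, which requires the right-hand sum to be finite, and to make the final exponent meaningful. I would write the $r$-th term as
\[
\frac{n}{k^2}\cdot\frac{r^{r-2}}{r!}\,x^r,\qquad x:=\frac{2k^2|t|\,\|O^{\text{sym}}_k\|_\infty}{n}.
\]
By Stirling's bound $r!\ge (r/e)^r$, we have $r^{r-2}/r!\le e^r/r^2$. The series is therefore dominated by $\frac{n}{k^2}\sum_r (ex)^r/r^2$, which converges whenever $ex<1$. This is exactly the condition $2ek^2|t|\|O^{\text{sym}}_k\|_\infty/n<1$. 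Equivalently, the ratio test gives $\frac{(r+1)^{r-1}}{(r+1)\,r^{r-2}}\,x\to ex$, which confirms that $1/e$ is the radius.

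Since the series for the $A_r$ is dominated termwise by a convergent one, the $A_r$ series converges as well, and all steps are justified.

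The main obstacle, if any, is this convergence bookkeeping together with making sure Lemma \ref{lemma:MGF} is applied within its stated domain. The rest is direct substitution.
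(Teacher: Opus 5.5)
Your proposal is correct and follows the paper's proof essentially step for step: invoke Lemma~\ref{lemma:MGF}, substitute the exact identity $A_2=\mathrm{Var}_{\rho^{\otimes n}}(U_{n,k})$ from Lemma~\ref{lemma:second_moment_variance}, insert the spanning-tree bounds of Lemma~\ref{lemma:spanning_tree_connected_moment} term by term for $r\ge 3$, and then check that the majorizing series converges. The only cosmetic difference is in the convergence check: you use the Stirling estimate $r^{r-2}/r!\le e^r/r^2$ (the same estimate the paper later uses in Lemma~\ref{lem:comb_pen_bound}), whereas the paper bounds the ratio of consecutive terms by $x(1+1/r)^{r-2}<ex$, where $x$ is your $x$; both give the radius condition $ex<1$.
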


\begin{proof}
    By invoking the component expansion bound derived in Lemma \ref{lemma:MGF}, we have
    $$\mathrm{Tr}[\rho^{\otimes n}e^{t\widetilde{U}_{n,k}}]\le \exp\left(\frac{t^{2}}{2}A_{2}+\sum_{r=3}^{\infty}\frac{|t|^{r}}{r!}A_{r}\right)\,.$$ 
    We then substitute the structurally exact variance identity $A_{2}=\mathrm{Var}_{\rho^{\otimes n}}\left(U_{n,k}\right)$ authenticated by Lemma \ref{lemma:second_moment_variance}, and replace the generic higher-order terms $A_r$ (for all $r\ge 3$) with the explicit spanning-tree bounds derived in Lemma \ref{lemma:spanning_tree_connected_moment}, thereby producing the desired exponential bound. Finally, executing the standard ratio test on the resulting infinite series, we get 
    \begin{align*}
        \dfrac{a_{n+1}}{a_n}&=\dfrac{\frac{|t|^{r+1}2^{r+1}{r+1}^{r-1}k^{2r}\left\lVert O^{\text{sym}}_k\right\rVert_{\infty}^{r+1}}{(r+1)!n^r}}{\frac{|t|^r2^rr^{r-2}k^{2r-2}\left\lVert O^{\text{sym}}_k\right\rVert_{\infty}^r}{r!n^{r-1}}}\\
        &=\frac{2k^2\vert{}t\vert{} \left\lVert O^{\text{sym}}_k\right\rVert_{\infty}}{n} \left(1+\frac{1}{r}\right)^{r-2}\\
        &<\frac{2k^2\vert{}t\vert{} \left\lVert O^{\text{sym}}_k\right\rVert_{\infty}}{n} \left(1+\frac{1}{r}\right)^r\\
        &<\frac{2ek^2\vert{}t\vert{} \left\lVert O^{\text{sym}}_k\right\rVert_{\infty}}{n}\,,
    \end{align*}
    which guarantees that the exponent converges from the assumption that $\frac{2ek^2\vert{}t\vert{} \left\lVert O^{\text{sym}}_k\right\rVert_{\infty}}{n}<1$.
\end{proof}

\begin{remark}[Comparison of MGF Bounds with \cite{De_Palma_2025}]\label{rem:mgf_comparison}
We now compare our MGF upper bound with Theorem 3.1 from the work of De Palma and Pastorello \cite{De_Palma_2025}. Their work establishes an upper bound on the moment-generating function for local observables, which is highly useful for quantum spin systems. To handle the complicated overlaps between local operators, they use a star graph decomposition of a minimal subgraph. Motivated by this graph method, we handle these overlaps by using a spanning tree argument to bound the connected absolute moments, as shown in Lemma \ref{lemma:spanning_tree_connected_moment}. By adding up only the tree shapes and using Cayley's formula, we control the huge number of intersecting subsets directly without needing to build a minimal subgraph.

Additionally, their result uses a quantum local norm metric that measures the total interaction strength of the local terms. While this fits well with optimal mass transport methods in thermodynamics, it is less helpful for our statistical estimation framework because it does not isolate the exact statistical fluctuations of the state. Instead, our component expansion in Lemma \ref{lemma:MGF} cleanly separates the exact second-order contribution found in Lemma \ref{lemma:second_moment_variance}. This strictly isolates the exact finite-sample variance $\mathrm{Var}_{\rho^{\otimes n}}(U_{n,k})$ to make the leading term of the bound explicitly quadratic, while bounding the worst-case extreme errors using only the standard spectral norm $\left\lVert O^{\text{sym}}_k \right\rVert_{\infty}$.
\end{remark}

With the variance-sensitive MGF bound established, we can now derive a corresponding upper-tail probability bound. Because the MGF bound depends only on the absolute value $|t|$, we restrict our attention to non-negative parameters $t \ge 0$. Applying Markov's inequality to the moment-generating function yields a generic concentration inequality for any precision parameter $\varepsilon > 0$,
$$ \mathrm{Pr}\left(Z_{\widetilde{U}_{n,k}} \ge \varepsilon\right) \le e^{-t\varepsilon}\mathrm{Tr}\left[\rho^{\otimes n}e^{t\widetilde{U}_{n,k}}\right] \le \exp\left(-t\varepsilon + \frac{t^{2}\mathrm{Var}_{\rho^{\otimes n}}\left(U_{n,k}\right)}{2} + \sum_{r=3}^{\infty}\dfrac{t^r 2^r r^{r-2} k^{2r-2} \left\lVert O^{\text{sym}}_k\right\rVert_{\infty}^r}{r! n^{r-1}}\right)\,. $$

To guarantee that this upper bound decays exponentially with respect to the sample size $n$, the optimization parameter $t$ must scale linearly with $n$. Furthermore, for the overall exponent to remain negative and physically meaningful, the linear decay term $-t\varepsilon$ must strictly overpower the positive variance and higher-order penalty terms. We therefore reparameterize the bound by choosing $t = sn$, where $s \ge 0$ acts as a scale-free optimization parameter. 

Substituting $t = sn$ into the original convergence radius $\frac{2ek^{2} t \left\lVert O^{\text{sym}}_k\right\rVert_\infty}{n} < 1$ restricts our new parameter to the domain $0 \le s < \frac{1}{2ek^2 \left\lVert O^{\text{sym}}_k\right\rVert_{\infty}}$. Under this substitution, the sample size $n$ elegantly factors out of the infinite series, allowing us to define an $n$-independent auxiliary penalty function that isolates the higher-order combinatorial errors,
\begin{equation}\label{eq:Psi_definition}
    \Psi(s) := \sum_{r=3}^{\infty}\dfrac{s^r 2^r r^{r-2} k^{2r-2} \left\lVert O^{\text{sym}}_k\right\rVert_{\infty}^r}{r!}\,.
\end{equation}

By substituting $t = sn$ and the definition of $\Psi(s)$ back into our generic Markov bound, we factor out $n$ from the entire exponent. Noting that the deviation event $Z_{U_{n,k}} - f(\rho) \ge \varepsilon$ is identical to $Z_{\widetilde{U}_{n,k}} \ge \varepsilon$ for the centered operator, and taking the supremum over the valid range of $s$ to optimize the decay rate, we directly obtain the finite-sample upper-tail concentration inequality.

\begin{corollary}[Upper-tail concentration]\label{col:upper_tail_concentration}
    For every specified precision parameter $\varepsilon>0$, the right-sided failure probability is exponentially suppressed as,
    $$ \mathrm{Pr}\left(Z_{U_{n,k}} - f(\rho) \ge \varepsilon\right) \le \exp\left(-nI_{n}(\varepsilon)\right)\,, $$
    where $Z_{U_{n,k}}$ denotes the outcome of globally measuring $\rho^{\otimes n}$ in the eigenbasis of the quantum U-statistic $U_{n,k}$, and the optimal finite-sample rate function $I_{n}(\varepsilon)$ is defined as,
    $$ I_{n}(\varepsilon) := \sup_{0\le s<\frac{1}{2ek^2 \left\lVert O^{\text{sym}}_k\right\rVert_{\infty}}} \left\{ s\varepsilon - \frac{n\mathrm{Var}_{\rho^{\otimes n}}\left(U_{n,k}\right)}{2}s^{2} - \Psi(s) \right\}\,, $$
    where $\Psi(s)$ is given by Definition \eqref{eq:Psi_definition}.
\end{corollary}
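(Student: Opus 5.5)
The plan is a quantum Chernoff argument: apply Markov's inequality to a spectral projection of $\widetilde{U}_{n,k}$, then feed in the MGF bound of Corollary \ref{col:var_sensitive_exponential_bound}.

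First, I would fix the measurement semantics. Since $\widetilde{U}_{n,k} = U_{n,k} - f(\rho)\mathbb{I}^{\otimes n}$, the two operators share an eigenbasis. Measuring $\rho^{\otimes n}$ in that basis therefore gives a classical outcome with $Z_{\widetilde{U}_{n,k}} = Z_{U_{n,k}} - f(\rho)$. Let $\Pi_\varepsilon$ be the spectral projection of $\widetilde{U}_{n,k}$ onto eigenvalues $\ge \varepsilon$. Then
\[
\mathrm{Pr}\big(Z_{U_{n,k}} - f(\rho)\ge\varepsilon\big)=\mathrm{Tr}[\rho^{\otimes n}\Pi_\varepsilon].
\]
For any $t\ge 0$, functional calculus gives the operator inequality $\Pi_\varepsilon \le e^{-t\varepsilon}e^{t\widetilde{U}_{n,k}}$, because $\mathbb{1}[\lambda\ge\varepsilon]\le e^{t(\lambda-\varepsilon)}$ pointwise on the spectrum. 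Taking the expectation in the positive state $\rho^{\otimes n}$ yields the Chernoff-type bound
\[
\mathrm{Pr}\le e^{-t\varepsilon}\,\mathrm{Tr}[\rho^{\otimes n}e^{t\widetilde{U}_{n,k}}].
\]

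Second, I would invoke Corollary \ref{col:var_sensitive_exponential_bound} for $t$ inside the convergence radius $2ek^2 t\|O^{\text{sym}}_k\|_\infty/n<1$. This bounds the MGF by
\[
\exp\Big(\tfrac{t^2}{2}\mathrm{Var}_{\rho^{\otimes n}}(U_{n,k})+\sum_{r\ge3}\tfrac{t^r2^rr^{r-2}k^{2r-2}\|O^{\text{sym}}_k\|_\infty^r}{r!\,n^{r-1}}\Big).
\]
If the kernel is trivial, the deviation event has probability zero, so that case is immediate.

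Third, I would substitute $t=sn$ with $0\le s<1/(2ek^2\|O^{\text{sym}}_k\|_\infty)$. The main bookkeeping is checking that every term carries exactly one factor of $n$:
\begin{itemize}
\item the linear term becomes $t\varepsilon=n\,s\varepsilon$;
\item the variance term becomes $t^2\mathrm{Var}/2=n\cdot \tfrac{n\mathrm{Var}_{\rho^{\otimes n}}(U_{n,k})}{2}s^2$;
\item each higher-order term becomes $(sn)^r/n^{r-1}=n\,s^r$, so the whole series equals $n\Psi(s)$ with $\Psi$ as in Definition \eqref{eq:Psi_definition}.
\end{itemize}
The exponent is therefore $-n\{s\varepsilon-\tfrac{n\mathrm{Var}}{2}s^2-\Psi(s)\}$.

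Finally, the bound holds for every admissible $s$, so I would take the infimum of the right-hand side over $s$, which is the supremum of the bracket. This gives $\exp(-nI_n(\varepsilon))$. The case $s=0$ is admissible and shows $I_n(\varepsilon)\ge0$, so the bound is never vacuous.

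The only genuine obstacle is the first step: justifying the non-commutative Markov inequality and the identification of the outcome distribution. Everything after that is substitution into the earlier corollary.
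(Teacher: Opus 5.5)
Your proposal is correct and follows the paper's derivation step for step: Markov/Chernoff applied to the MGF bound of Corollary~\ref{col:var_sensitive_exponential_bound}, the substitution $t=sn$ that turns the series into $n\Psi(s)$, and optimization over the admissible $s$. Your spectral-projection form of the Markov step, $\Pi_\varepsilon\le e^{-t\varepsilon}e^{t\widetilde U_{n,k}}$, is a slightly more careful version of what the paper states in one line. One small quibble on your closing remark: taking $s=0$ only gives $I_n(\varepsilon)\ge 0$, i.e.\ the trivial bound $1$; the bound is non-vacuous because $I_n(\varepsilon)>0$, which follows since the objective vanishes at $s=0$ with slope $\varepsilon>0$ there.
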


In fact, the probability bound established above is symmetric with respect to sign inversions of the local $k$-copy operators; that is, replacing the operator $\left(O^{\text{sym}}_k-f(\rho)\mathbb{I}^{\otimes k}\right)$ with $-\left(O^{\text{sym}}_k-f(\rho)\mathbb{I}^{\otimes k}\right)$ perfectly preserves the spectral bound $\left\lVert{}O^{\text{sym}}_k-f(\rho)\mathbb{I}^{\otimes k}\right\rVert{}_\infty\leq 2\left\lVert{}O^{\text{sym}}_k\right\rVert{}_\infty$, and identically maintains all underlying pair covariances. This symmetry allows us to naturally extend the bound to both the tails simultaneously.

\begin{corollary}[Two-sided concentration]\label{col:two_sided_concentration}
    For every specified precision $\varepsilon>0$, the total absolute failure probability obeys,
    $$ \mathrm{Pr}\left(\left\lvert Z_{U_{n,k}} - f(\rho) \right\rvert \ge \varepsilon\right) \le 2 \exp\left(-nI_{n}(\varepsilon)\right)\,, $$
    where $Z_{U_{n,k}}$ denotes the outcome of globally measuring $\rho^{\otimes n}$ in the eigenbasis of the quantum U-statistic $U_{n,k}$.
\end{corollary}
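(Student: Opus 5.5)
The plan is to reduce the lower tail to the upper tail already controlled by Corollary \ref{col:upper_tail_concentration}, and then combine the two tails with a union bound. The key observation is that the outcome $Z_{U_{n,k}}$ is a random variable: measuring $\rho^{\otimes n}$ in the eigenbasis of $U_{n,k}$ makes its distribution the spectral distribution of $U_{n,k}$ in the state $\rho^{\otimes n}$. Hence
$$\mathrm{Pr}\left(\left\lvert Z_{U_{n,k}}-f(\rho)\right\rvert\ge\varepsilon\right)\le \mathrm{Pr}\left(Z_{U_{n,k}}-f(\rho)\ge\varepsilon\right)+\mathrm{Pr}\left(f(\rho)-Z_{U_{n,k}}\ge\varepsilon\right).$$
The first term is bounded by $\exp(-nI_n(\varepsilon))$ directly by Corollary \ref{col:upper_tail_concentration}.

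For the second term, I will apply the whole chain of Lemmas \ref{lemma:pair_covariance_positivity}--\ref{lemma:MGF} and Corollary \ref{col:var_sensitive_exponential_bound} to the negated centred kernel $-\left(O^{\text{sym}}_k-f(\rho)\mathbb{I}^{\otimes k}\right)$. Equivalently, I can work with $-\widetilde{U}_{n,k}$, whose eigenbasis coincides with that of $U_{n,k}$. Its measurement outcome is exactly $f(\rho)-Z_{U_{n,k}}$, so Markov's inequality gives
$$\mathrm{Pr}\left(f(\rho)-Z_{U_{n,k}}\ge\varepsilon\right)\le e^{-t\varepsilon}\,\mathrm{Tr}\left[\rho^{\otimes n}e^{-t\widetilde{U}_{n,k}}\right].$$
This MGF at parameter $t$ is the MGF of $\widetilde{U}_{n,k}$ at parameter $-t$. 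Lemma \ref{lemma:MGF} and Corollary \ref{col:var_sensitive_exponential_bound} hold for every real $t$ in the convergence radius, and their bounds depend only on $|t|$, so the same exponent applies.

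I then need to check that each ingredient of that exponent is invariant under the sign flip.
\begin{itemize}
\item $A_2$ is unchanged, because pair covariances are bilinear and so each picks up a factor $(-1)^2=1$. Consequently Lemma \ref{lemma:second_moment_variance} still gives $A_2=\mathrm{Var}_{\rho^{\otimes n}}(U_{n,k})$.
\item Each $A_r$ is unchanged, because it is defined through absolute values of products, which pick up a factor $(-1)^r$.
\item The norm bound $\left\lVert O^{\text{sym}}_k-f(\rho)\mathbb{I}^{\otimes k}\right\rVert_\infty\le 2\lVert O^{\text{sym}}_k\rVert_\infty$ is unchanged by negation.
\end{itemize}
Hence the same substitution $t=sn$ and the same optimization over $s$ give $\exp(-nI_n(\varepsilon))$ for the lower tail, with the identical rate function $I_n$. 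Summing the two tail bounds yields the factor $2$.

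The only delicate step is making the probabilistic interpretation precise: I must confirm that measuring in the eigenbasis of $U_{n,k}$ produces a single classical random variable, so that both tail events live on the same sample space and the union bound is legitimate. I must also confirm that Markov's inequality applied to $e^{-tZ}$ matches the trace expression $\mathrm{Tr}[\rho^{\otimes n}e^{-t\widetilde{U}_{n,k}}]$. Both hold because $e^{\pm t\widetilde{U}_{n,k}}$ is a function of $U_{n,k}$ and is therefore diagonal in the same eigenbasis. Everything else follows from the sign-invariance checks above.
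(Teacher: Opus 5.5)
Your proposal is correct and follows essentially the same route as the paper. The paper also uses a union bound over the two tails, handling the lower tail by applying the upper-tail bound of Corollary~\ref{col:upper_tail_concentration} to the negated centred kernel, whose spectral-norm bound and pair covariances are unchanged, so the same rate function $I_n(\varepsilon)$ results. Your observation that the bound of Lemma~\ref{lemma:MGF} depends only on $|t|$ already suffices by itself, so the explicit sign-invariance checks are a redundant but harmless second justification.
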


\begin{proof}
    Applying the upper-tail bound established in Corollary \ref{col:upper_tail_concentration} independently to both the positive deviation $U_{n,k} - f(\rho)\mathbb{I}^{\otimes n}$ and the negative deviation $-(U_{n,k} - f(\rho)\mathbb{I}^{\otimes n})$, and subsequently combining the two resulting inequalities using the standard union bound directly proves the corollary.
\end{proof}

\begin{remark}[Comparison of Concentration Inequalities with \cite{De_Palma_2025}]
As discussed in Remark \ref{rem:mgf_comparison}, choosing our spanning tree method over the star graph and local norm techniques used by De Palma and Pastorello \cite{De_Palma_2025}, directly shapes our final probability bounds. Theorem 3.2 in their work gives a concentration bound based on the total interaction strength of the local terms. In contrast, turning our MGF bound into a tail probability gives a clear Bernstein-type inequality. This formula perfectly separates the normal Gaussian errors, which depend on the exact finite-sample variance, from the worst-case extreme errors, which depend on the standard spectral norm. Because of this, we do not need to calculate any extra local norm metrics.
\end{remark}

Finally, to better contextualize these finite-sample probabilistic bounds within the framework of quantum estimation theory, we link the optimal rate function back to the underlying Fisher geometry. We combine the concentration bound with the variance expansion proved earlier in Section \ref{sec:variance_optimality}. Recall that, evaluated at a fixed physical state $\rho$, the total variance factors into an intrinsic leading term and a suppressed multi-copy interaction error as
$$\mathrm{Var}_{\rho^{\otimes n}}\left(U_{n,k}\right)=\frac{\mathrm{Var}_{\rho}\left(\nabla f(\rho)\right)}{n}+\mathcal{O}\left(\frac{1}{n^2}\right)\,.$$ 
Multiplying both sides by the system size $n$ isolates the fundamental variance limit as
\begin{equation}\label{eq:variance_limit}
    n\mathrm{Var}_{\rho^{\otimes n}}\left(U_{n,k}\right)=\mathrm{Var}_{\rho}\left(\nabla f(\rho)\right)+\mathcal{O}\left(\frac{1}{n}\right)\,.
\end{equation}

\begin{theorem}[Asymptotic variance-sensitive concentration]\label{thm:large_deviation_lower_bound}
    Assuming that the estimation geometry is non-degenerate (i.e., $\mathrm{Var}_{\rho}\left(\nabla f(\rho)\right)>0$), evaluating the quantum U-statistic $U_{n,k}$ performance for every $\varepsilon>0$ yields the large-limit deviation lower bound as 
    $$\liminf_{n\to\infty}-\frac{1}{n}\log \mathrm{Pr}\left(\left|Z_{U_{n,k}}-f(\rho)\right|\ge\varepsilon\right)\ge I(\varepsilon)\,,$$ 
    where $I(\varepsilon)$ denotes an optimal large-limit two-sided finite-sample rate function defined as 
    $$I(\varepsilon):=\sup_{0\le s<\frac{1}{2ek^2 \left\lVert O^{\text{sym}}_k\right\rVert_{\infty}}} \left\{s\varepsilon-\frac{\mathrm{Var}_{\rho}\left(\nabla f(\rho)\right)}{2}s^{2}-\Psi(s)\right\}\,,$$ 
    and where $Z_{U_{n,k}}$ denotes the outcome of measuring the product $\rho^{\otimes n}$ globally in the eigenbasis of $U_{n,k}$. In particular, the above lower bound guarantees a strictly positive exponential decay rate, since $I(\varepsilon)>0$ for every non-zero $\varepsilon>0$.
\end{theorem}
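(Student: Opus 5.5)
The plan is to pass to the limit in the two-sided finite-sample bound of Corollary \ref{col:two_sided_concentration}, then check separately that the limiting rate is strictly positive.

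\emph{Step 1: reduce to the finite-$n$ rate.} Corollary \ref{col:two_sided_concentration} gives
\[
-\frac{1}{n}\log \mathrm{Pr}\left(\left|Z_{U_{n,k}}-f(\rho)\right|\ge\varepsilon\right)\ \ge\ I_n(\varepsilon)-\frac{\log 2}{n}.
\]
Since $\log 2/n\to0$, it suffices to show $\liminf_{n\to\infty} I_n(\varepsilon)\ge I(\varepsilon)$.

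\emph{Step 2: pass to the limit pointwise in $s$.} Write $v_n:=n\mathrm{Var}_{\rho^{\otimes n}}(U_{n,k})$ and $v:=\mathrm{Var}_\rho(\nabla f(\rho))$. By \eqref{eq:variance_limit}, $v_n=v+\mathcal{O}(1/n)$, so $v_n\to v$. Fix any $s$ in the admissible interval $[0,s_*)$ with $s_*:=\bigl(2ek^2\lVert O^{\text{sym}}_k\rVert_\infty\bigr)^{-1}$. The function $\Psi(s)$ in \eqref{eq:Psi_definition} is finite there by the ratio test of Corollary \ref{col:var_sensitive_exponential_bound}, and it does not depend on $n$. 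Then
\[
I_n(\varepsilon)\ \ge\ s\varepsilon-\tfrac{v_n}{2}s^2-\Psi(s)\ \longrightarrow\ s\varepsilon-\tfrac{v}{2}s^2-\Psi(s).
\]
Hence $\liminf_n I_n(\varepsilon)\ge s\varepsilon-\tfrac v2 s^2-\Psi(s)$ for every admissible $s$. Taking the supremum over $s$ gives $\liminf_n I_n(\varepsilon)\ge I(\varepsilon)$. No uniformity in $s$ is needed, because we take the liminf first and the supremum afterwards. This proves the displayed bound.

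\emph{Step 3: strict positivity of $I(\varepsilon)$.} This is the only place requiring care. I would show that the objective $g(s):=s\varepsilon-\tfrac v2 s^2-\Psi(s)$ is positive for small $s>0$. Since $\Psi$ is a power series starting at order $s^3$ with nonnegative coefficients and radius of convergence at least $s_*$, we have $\Psi(s)\le C s^3$ on $[0,s_*/2]$ for some constant $C$. Then
\[
g(s)\ \ge\ s\bigl(\varepsilon-\tfrac v2 s-Cs^2\bigr),
\]
which is strictly positive for all sufficiently small $s>0$. Therefore $I(\varepsilon)\ge g(s)>0$.

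The hypothesis $v>0$ does not actually enter the positivity argument; it serves to keep the rate expressed in terms of the nondegenerate Fisher-geometric variance, consistent with the Cram\'er--Rao discussion. The main subtlety is confirming that the admissible $s$-domain is the same for every $n$, so that the pointwise limit in Step 2 is legitimate; here it is, since $s_*$ depends only on $k$ and $\lVert O^{\text{sym}}_k\rVert_\infty$.
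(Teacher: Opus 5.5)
Your proposal is correct and follows essentially the same route as the paper. You fix an admissible $s$, bound $-\frac{1}{n}\log\Pr$ from below using the two-sided finite-sample bound, and let $n\to\infty$ using $n\mathrm{Var}_{\rho^{\otimes n}}(U_{n,k})\to\mathrm{Var}_\rho(\nabla f(\rho))$ and $\frac{\log 2}{n}\to 0$, taking the supremum over $s$ only afterwards; for positivity, your bound $\Psi(s)\le Cs^3$ near $0$ is a quantitative form of the paper's remark that the objective vanishes at $s=0$ with slope $\varepsilon>0$, and you are right that $\mathrm{Var}_\rho(\nabla f(\rho))>0$ plays no role in that step.
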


\begin{proof}
    Fix any parameter $s$ within the half-open interval $\left[0,\frac{1}{2ek^2 \left\lVert O^{\text{sym}}_k\right\rVert_{\infty}}\right)$. By substituting the variance expansion identity \eqref{eq:variance_limit} directly into the two-sided exponent derived in Corollary \ref{col:two_sided_concentration}, we get 
    $$-\frac{1}{n}\log \mathrm{Pr}\left(\left|Z_{U_{n,k}}-f(\rho)\right| \ge\varepsilon\right)\ge s\varepsilon-\frac{n\mathrm{Var}_{\rho^{\otimes n}}\left(U_{n,k}\right)}{2}s^{2}-\Psi(s)-\frac{\log 2}{n}\,.$$ 
    Evaluating the above as $n\to\infty$ annihilates the residual $\frac{\log 2}{n}$ factor, establishing the asymptotic lower bound 
    $$I(s,\varepsilon):=s\varepsilon-\frac{\mathrm{Var}_{\rho}\left(\nabla f(\rho)\right)}{2}s^{2}-\Psi(s)\,.$$ 
    Since this limiting inequality holds for every admissible parameter $s$ in the interval $\left[0,\frac{1}{2ek^2 \left\lVert O^{\text{sym}}_k\right\rVert_{\infty}}\right)$, taking the supremum over all such $s$ in the interval recovers the optimal large-limit two-sided finite-sample rate function
    \begin{equation}\label{eq:supremum_problem}
        I(\varepsilon)=\sup_{0\le s<\frac{1}{2ek^2 \left\lVert O^{\text{sym}}_k\right\rVert_{\infty}}} I(s,\varepsilon)\,.
    \end{equation} 
    Finally, analysing the function $I(s,\varepsilon)$, we see that it evaluates precisely to $0$ at the origin $s=0$, with $\frac{d}{ds}I(s,\varepsilon)\Big|_{s=0} = \varepsilon>0$. Since the slope is strictly positive at the origin, the function $I(s,\varepsilon)$ must attain a positive value for sufficiently small yet strictly positive values of $s$, which proves that the supremum $I(\varepsilon)>0$.
\end{proof}

\subsection{Closed-Form Bernstein-Type Bound}
Although the exact two-sided finite-sample rate function $I_{n}(\varepsilon)$ accurately describes the theoretical behaviour of the quantum U-statistic, computing its supremum directly is challenging due to the complex infinite-series structure of the auxiliary penalty function $\Psi(s)$ (see Equation~\eqref{eq:Psi_definition}). In practical situations, such as establishing rigorous sample complexity bounds in quantum learning theory, a direct and explicit formula is essential. We can derive such a bound by algebraically lower-bounding the supremum. Rather than solving the exact optimization problem, we evaluate the function at a carefully chosen test parameter. This approach yields a transparent Bernstein-type inequality that cleanly separates typical variance-dependent Gaussian errors from worst-case extreme errors.

To lower-bound $I_{n}(\varepsilon)$, we first need an upper bound on $\Psi(s)$. In the following lemma, we derive a rational upper bound for $\Psi(s)$ to establish an explicit, closed-form concentration inequality.

\begin{lemma}\label{lem:comb_pen_bound}
Let $\Psi(s)$ denote the auxiliary penalty function (see Equation~\eqref{eq:Psi_definition}) for an unbiased quantum U-statistic of degree $k$ with symmetric kernel $O_{k}^{\text{sym}}$. For any scaling parameter $s$ within the convergence radius $0 \le s < \frac{1}{2ek^2 \left\lVert O^{\text{sym}}_k\right\rVert_{\infty}}$, the penalty function is strictly upper-bounded by
\begin{equation}
   \Psi(s) < \frac{8 e^3 k^4 \left\lVert O^{\text{sym}}_k\right\rVert_{\infty}^3 s^3}{1 - 2 e k^2 \left\lVert O^{\text{sym}}_k\right\rVert_{\infty} s}\,. \nonumber
\end{equation}
\end{lemma}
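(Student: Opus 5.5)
Write $a := 2k^2\lVert O^{\text{sym}}_k\rVert_\infty$, so that each summand of $\Psi(s)$ in \eqref{eq:Psi_definition} becomes $\frac{r^{r-2}}{r!}\,k^{-2}\,(as)^r$. The plan is to control the combinatorial factor $r^{r-2}/r!$ by a pure exponential and then sum a geometric series whose ratio is exactly $eas$. The convergence condition $0 \le s < \frac{1}{2ek^2\lVert O^{\text{sym}}_k\rVert_\infty}$ is then precisely $eas<1$.

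\textbf{Step 1 (combinatorial factor).} Use the elementary Stirling-type bound $r! \ge (r/e)^r$, which follows from $e^r=\sum_j r^j/j! \ge r^r/r!$. This gives
\[
\frac{r^{r-2}}{r!} \le \frac{e^r}{r^2} \le \frac{e^r}{9}\qquad (r\ge 3).
\]
The inequality is strict since $e^r > r^r/r!$, and this strictness is what yields the strict sign in the lemma. Each term is therefore strictly bounded by $\frac{1}{9k^2}(eas)^r$.

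\textbf{Step 2 (geometric sum).} Sum over $r\ge 3$:
\[
\Psi(s) < \frac{1}{9k^2}\sum_{r\ge3}(eas)^r = \frac{(eas)^3}{9k^2\,(1-eas)}.
\]
Substituting $a$ gives a numerator of $\frac{8e^3k^6\lVert O^{\text{sym}}_k\rVert_\infty^3 s^3}{9k^2} = \frac{8}{9}e^3k^4\lVert O^{\text{sym}}_k\rVert_\infty^3 s^3$ and a denominator of $1-2ek^2\lVert O^{\text{sym}}_k\rVert_\infty s$. Dropping the factor $\frac{8}{9}$ in favour of $8$ only weakens the bound, so the stated inequality follows. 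At $s=0$ both sides vanish; for $s>0$ the inequality is strict.

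\textbf{Main point to check.} The only step needing care is the uniform inequality $r^{r-2}/r!\le e^r/r^2$, which Step 1 settles directly. One should also track the powers of $k$: the $k^{2r-2}$ in \eqref{eq:Psi_definition} produces exactly $k^{6-2}=k^4$ at the leading cubic order. No earlier result of the paper is needed beyond the definition of $\Psi$ and its convergence radius from Corollary \ref{col:var_sensitive_exponential_bound}.
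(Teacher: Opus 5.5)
Your proposal is correct and follows essentially the same route as the paper: the bound $r!>(r/e)^r$ turns $r^{r-2}/r!$ into $e^r/r^2$, and the resulting series is summed as a geometric series with ratio $2ek^2\lVert O^{\text{sym}}_k\rVert_\infty s$. The differences are minor---you keep the factor $1/r^2\le 1/9$, which gives the sharper constant $\tfrac{8}{9}$ where the paper uses $1/r^2<1$, and you correctly note that the inequality is strict only for $s>0$, since both sides vanish at $s=0$, a point the paper's statement overlooks.
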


\begin{proof}
To simplify the exposition, we define the auxiliary constants $M := 2\left\lVert O^{\text{sym}}_k\right\rVert_{\infty}$, $B := e k^2 M$, and $A := e^3 k^4 M^3$. Recall that the auxiliary penalty function is defined by the infinite series
$$ \Psi(s) = \sum_{r=3}^{\infty}\dfrac{s^r2^rr^{r-2}k^{2r-2}\left\lVert O^{\text{sym}}_k\right\rVert_{\infty}^r}{r!}\,. $$
By substituting $M = 2\left\lVert O^{\text{sym}}_k\right\rVert_{\infty}$, we can rewrite this directly as
$$ \Psi(s) = Ms \sum_{r=3}^{\infty} \frac{r^{r-2}}{r!} (k^2 Ms)^{r-1}\,. $$
Evaluating this expression directly is difficult due to the presence of the spanning tree count $r^{r-2}$ in the numerator and the factorial $r!$ in the denominator. However, we can rigorously bound this ratio to obtain a simple geometric series. 

Using the standard Stirling approximation-based lower bound for the factorial function, $r! > (r/e)^r$, we have $\frac{1}{r!} < \frac{e^r}{r^r}$. Substituting this bound into the weight of the series, we find
$$ \frac{r^{r-2}}{r!} < \frac{r^{r-2} e^r}{r^r} = \frac{e^r}{r^2}\,. $$

Substituting this simplified weight back into the definition of $\Psi(s)$ yields
\begin{align}
    \Psi(s) &< Ms \sum_{r=3}^{\infty} \frac{e^r}{r^2} (k^2 Ms)^{r-1} \nonumber\\
    &= eMs \sum_{r=3}^{\infty} \frac{1}{r^2} \big(e k^2 Ms\big)^{r-1}\,\nonumber\\
    &\overset{(a)}= eMs \sum_{r=3}^{\infty} \frac{1}{r^2} (Bs)^{r-1}\,\nonumber\\
    &\overset{(b)}{<} eMs \sum_{r=3}^{\infty} (Bs)^{r-1}\,,\label{eq:psi_s_ub}
\end{align}
where ($a$) follows from the definition of $B = e k^2 M$, and ($b$) holds because the summation begins at $r=3$, meaning the polynomial attenuation factor $1/r^2$ is strictly upper-bounded by $1$. Bounding this factor by $1$ provides an upper bound in the form of an infinite geometric series. Since $Bs < 1$ within our specified domain, this series converges to
$$ \sum_{r=3}^{\infty} (Bs)^{r-1} = (Bs)^2\sum_{j=0}^{\infty} (Bs)^{j} = \frac{(Bs)^2}{1 - Bs}\,. $$

Substituting this equality back into Equation~\eqref{eq:psi_s_ub}, we obtain
\begin{equation}
    \Psi(s) < eMs \left( \frac{B^2 s^2}{1 - Bs} \right) = \frac{e M B^2 s^3}{1 - Bs} = \frac{A s^3}{1 - B s}\,,
\end{equation}
where the final equality follows because the leading coefficient simplifies to $ e M B^2 = e M (e k^2 M)^2 = e^3 k^4 M^3 = A$. Restoring the explicit parameter definitions in place of $A$ and $B$ provides the desired upper bound, completing the proof.
\end{proof}

Using this simplified cubic upper bound for $\Psi(s)$, we now derive a closed-form concentration inequality. Instead of solving the exact optimization problem for the supremum, which leads to complicated nested radicals, we evaluate the objective function at a carefully chosen test parameter. 

\begin{corollary}[Closed-Form Bernstein-Type Bound]\label{cor:closed_form_conc}
Let $U_{n,k}$ be the $n$-copy unbiased quantum U-statistic corresponding to a polynomial functional $f(\rho)$. For any precision parameter $\varepsilon > 0$, the two-sided finite-sample deviation is bounded by
$$ \mathrm{Pr}\left(\left\lvert{}Z_{{U}_{n,k}}- f(\rho)\right\rvert{} \ge\varepsilon\right) \le 2 \exp\left( - \frac{n \varepsilon^2}{2 n \mathrm{Var}_{\rho^{\otimes n}}\left(U_{n,k}\right) + 4\sqrt{8 e^3 k^4 \left\lVert O^{\text{sym}}_k\right\rVert_{\infty}^3\varepsilon} + 8 e k^2 \left\lVert O^{\text{sym}}_k\right\rVert_{\infty}\varepsilon} \right)\,, $$
where $Z_{{U}_{n,k}}$ denotes the outcome of globally measuring $\rho^{\otimes n}$ in the eigenbasis of ${U}_{n,k}$.
\end{corollary}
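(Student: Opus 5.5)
Starting from Corollary~\ref{col:two_sided_concentration}, it suffices to show that
\[
nI_n(\varepsilon)\ge \frac{n\varepsilon^2}{2V+4\sqrt{A\varepsilon}+4B\varepsilon}.
\]
Here $V:=n\mathrm{Var}_{\rho^{\otimes n}}(U_{n,k})$, $A:=8e^3k^4\lVert O^{\text{sym}}_k\rVert_\infty^3$ and $B:=2ek^2\lVert O^{\text{sym}}_k\rVert_\infty$, so Lemma~\ref{lem:comb_pen_bound} reads $\Psi(s)<As^3/(1-Bs)$ for $0\le s<1/B$. Since $I_n(\varepsilon)$ is a supremum, we may substitute this upper bound for $\Psi$ and evaluate at a single admissible test point:
\[
I_n(\varepsilon)\ge s\varepsilon-\frac{V}{2}s^2-\frac{As^3}{1-Bs}\qquad \text{for any } s\in[0,1/B).
\]
This reduces the corollary to showing that one well-chosen $s$ makes the right-hand side at least $\varepsilon^2/D$, where $D:=2V+4\sqrt{A\varepsilon}+4B\varepsilon$.

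I would take the Bernstein-type test point $s^\star=\varepsilon/D$. Admissibility is immediate because $Bs^\star=B\varepsilon/D\le 1/4$, since $D\ge 4B\varepsilon$. Hence $1-Bs^\star\ge 3/4$, and the penalty is at most $\tfrac43 A (s^\star)^3$.

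Next I bound each piece of the objective against $\varepsilon^2/D$:
\begin{itemize}
\item The variance term satisfies $\frac V2 (s^\star)^2=\frac{V\varepsilon^2}{2D^2}\le \frac{\varepsilon^2}{4D}$, using $2V\le D$.
\item The cubic term satisfies $\frac43 A\varepsilon^3/D^3\le \frac{\varepsilon^2}{12 D}$. This follows from $D^2\ge 16A\varepsilon$, which holds because $D\ge 4\sqrt{A\varepsilon}$.
\end{itemize}
The objective at $s^\star$ is therefore at least $\frac{\varepsilon^2}{D}\bigl(1-\tfrac14-\tfrac1{12}\bigr)=\tfrac23\,\frac{\varepsilon^2}{D}$. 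This would give the stated bound only with an extra factor $2/3$, so the constants do not yet match.

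The main obstacle is fixing this constant so that the exponent is exactly $n\varepsilon^2/D$ with the stated $D$. My first attempt would be the test point $s^\star=2\varepsilon/D$, the classical Bernstein choice, which maximizes $s\varepsilon-\tfrac{D}{4}s^2$. With this choice the cross terms must be rebalanced more carefully:
\begin{itemize}
\item Use $Bs^\star\le 1/2$.
\item Write the penalty as $\frac{As^3}{1-Bs}\le As^3+\frac{ABs^4}{1-Bs}$.
\item Show that the combined variance and penalty terms are at most $\frac D4 s^2$, by exploiting that the summands $4\sqrt{A\varepsilon}$ and $4B\varepsilon$ were placed in $D$ precisely to absorb $As^3$ and the geometric tail.
\end{itemize}
The result would be $I_n\ge s^\star\varepsilon-\frac D4 (s^\star)^2=\varepsilon^2/D$.

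If this bookkeeping still falls short by a constant, I would instead choose $s^\star$ as the root of the balance equation $\frac{A s^2}{1-Bs}=\frac{\sqrt{A\varepsilon}}{1}s$ together with the variance balance. I would then verify the inequality separately in the three regimes where $2V$, $4\sqrt{A\varepsilon}$ or $4B\varepsilon$ dominates $D$. Finally, combining with the factor $2$ from the union bound in Corollary~\ref{col:two_sided_concentration} gives the claim.
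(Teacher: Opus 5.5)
Your second test point $s^\star=2\varepsilon/D$ is exactly the paper's choice $s^*=\varepsilon/(nV_n+2\sqrt{A\varepsilon}+2B\varepsilon)$, used with the same restriction $Bs\le 1/2$, so this is essentially the paper's proof. The step you left only as a plan does close in one line. Since $1-Bs^\star\ge \tfrac12$ and $D\ge 4\sqrt{A\varepsilon}$, we get $\frac{A(s^\star)^3}{1-Bs^\star}\le 2As^\star\,(s^\star)^2=\frac{4A\varepsilon}{D}(s^\star)^2\le\sqrt{A\varepsilon}\,(s^\star)^2$. Hence the variance and penalty terms together are at most $\bigl(\tfrac V2+\sqrt{A\varepsilon}\bigr)(s^\star)^2\le \tfrac D4 (s^\star)^2$, which gives $I_n(\varepsilon)\ge \tfrac{2\varepsilon^2}{D}-\tfrac{\varepsilon^2}{D}=\tfrac{\varepsilon^2}{D}$.

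The first point $\varepsilon/D$ could never reach the target, since its objective equals $\varepsilon^2/D$ minus nonnegative terms. The three-regime fallback is therefore unnecessary.
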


\begin{proof}
For clarity of exposition, we define the true finite-sample variance $V_n := \mathrm{Var}_{\rho^{\otimes n}}\left(U_{n,k}\right)$, alongside the auxiliary combinatorial constants $M := 2\left\lVert O^{\text{sym}}_k\right\rVert_{\infty}$, $B := e k^2 M$, and $A := e^3 k^4 M^3$. As established in Corollary \ref{col:two_sided_concentration}, the exact two-sided finite-sample exponent is governed by the supremum
$$ I_{n}(\varepsilon) = \sup_{0 \le s < 1/B} \left\{ s\varepsilon - \frac{nV_n}{2}s^2 - \Psi(s) \right\}\,. $$
Applying Lemma \ref{lem:comb_pen_bound}, we upper-bound $\Psi(s)$. To simplify the denominator $1 - Bs$, we restrict the parameter to the tighter sub-domain $s \le \frac{1}{2B}$. This guarantees $1 - Bs \ge \frac{1}{2}$, which gives the purely cubic bound $\Psi(s) \le 2A s^3$. Therefore, the objective function is lower-bounded by
\begin{equation}
    I_{n}(\varepsilon) \ge \sup_{0 \le s \le 1/(2B)} \left\{ s\varepsilon - \frac{nV_n}{2}s^2 - 2As^3 \right\}\,.\label{eq:raional_lb}
\end{equation}

To understand the behaviour of this supremum, we determine the critical points of the lower bound in Equation~\eqref{eq:raional_lb}. Setting its first derivative with respect to $s$ to zero yields the quadratic equation $6As^2 + nV_n s - \varepsilon = 0$. Applying the standard quadratic formula to find the strictly positive root yields
$$ s_{\text{opt}} = \frac{-nV_n + \sqrt{(nV_n)^2 + 24A\varepsilon}}{12A}\,. $$

While analytically correct, this standard representation poses problems for practical evaluation in the small-deviation regime. Specifically, as the deviation $\varepsilon$ becomes small, the combinatorial penalty term $24A\varepsilon$ vanishes, and the radical term approaches the variance $nV_n$. Computing the numerator then requires subtracting two nearly identical values. In finite-precision numerical environments, this causes catastrophic (or critical) cancellation, leading to a severe loss of significant digits. This artificially propagates extreme rounding errors into the final concentration bound, rendering it trivial.

To resolve this numerical instability, we rationalize the numerator by multiplying both the top and the bottom of the fraction by the algebraic conjugate $\sqrt{(nV_n)^2 + 24A\varepsilon} + nV_n$. This perfectly cancels the $(nV_n)^2$ subtraction, yielding the numerically stable optimal parameter
$$ s_{\text{opt}} = \frac{2\varepsilon}{nV_n + \sqrt{(nV_n)^2 + 24A\varepsilon}}\,. $$

This rationalized form is computationally robust. By shifting the mathematical operation in the denominator to the strictly well-conditioned addition of two positive quantities, we guarantee that no significant digits are lost to cancellation. This perfectly preserves the structural integrity of the parameter for arbitrarily small $\varepsilon$.

By examining this nested radical structure, we can evaluate how the optimal parameter scales across two distinct deviation regimes:
\begin{enumerate}
    \item \textbf{Small Deviation Regime (Gaussian Limit):} When the precision parameter $\varepsilon$ is small, the variance term $(nV_n)^2$ dominates the combinatorial term $24A\varepsilon$ inside the square root. The denominator approaches $2nV_n$, meaning the parameter scales as $\mathcal{O}\left(\varepsilon/nV_n\right)$. This captures typical Central Limit Theorem fluctuations.
    \item \textbf{Large Deviation Regime (Spectral Tail Limit):} When $\varepsilon$ is large, the combinatorial term $24A\varepsilon$ dominates the variance. The denominator is governed by the square root, meaning the parameter scales as $\mathcal{O}\left(\sqrt{\varepsilon/{A}}\right)$. This captures the heavy sub-exponential tails.
\end{enumerate}

While solving the exact quadratic yields a solution with nested radicals, we can design a clean structural ansatz that successfully incorporates both scaling behaviours while strictly respecting our domain constraint $s \le \frac{1}{2B}$. By incorporating explicit algebraic safety valves ($2\sqrt{A\varepsilon}$ and $2B\varepsilon$) into the denominator to handle these regimes and domain limits, we define our test parameter as
$$ s^* := \frac{\varepsilon}{nV_n + 2\sqrt{A\varepsilon} + 2B\varepsilon}\,. $$

We first verify that this choice is in the valid domain. Since $nV_n \ge 0$ and $2\sqrt{A\varepsilon} \ge 0$, we have $s^* \le \frac{\varepsilon}{2B\varepsilon} = \frac{1}{2B}$. Thus, $s^*$ safely satisfies the domain condition for all $\varepsilon > 0$.

Evaluating the cubic approximation at $s^*$ yields a lower bound for $I_{n}(\varepsilon)$,
$$ I_{n}(\varepsilon) \ge s^*\varepsilon - \frac{nV_n}{2}(s^*)^2 - 2A(s^*)^3 = (s^*)^2 \left( \frac{\varepsilon}{s^*} - \frac{nV_n}{2} - 2As^* \right)\,. $$
Substituting the inverse of our test parameter, $\frac{\varepsilon}{s^*} = nV_n + 2\sqrt{A\varepsilon} + 2B\varepsilon$, we obtain
$$ I_{n}(\varepsilon) \ge (s^*)^2 \left( \frac{nV_n}{2} + 2\sqrt{A\varepsilon} + 2B\varepsilon - 2As^* \right)\,. $$

We next bound the strictly negative term $-2As^*$. From the definition of $s^*$, we have
$$ 2As^* = \frac{2A\varepsilon}{nV_n + 2\sqrt{A\varepsilon} + 2B\varepsilon} \le \frac{2A\varepsilon}{2\sqrt{A\varepsilon}} = \sqrt{A\varepsilon}\,. $$
Substituting this upper bound back into our inequality slightly shrinks the bracketed quantity, yielding
$$ I_{n}(\varepsilon) \ge (s^*)^2 \left( \frac{nV_n}{2} + 2\sqrt{A\varepsilon} + 2B\varepsilon - \sqrt{A\varepsilon} \right) = (s^*)^2 \left( \frac{nV_n}{2} + \sqrt{A\varepsilon} + 2B\varepsilon \right)\,. $$

We recognize that $\frac{nV_n}{2} + \sqrt{A\varepsilon} + B\varepsilon = \frac{\varepsilon}{2s^*}$. By safely dropping the extra positive term $B\varepsilon$ from the inequality, we obtain
$$ I_{n}(\varepsilon) \ge (s^*)^2 \left( \frac{nV_n}{2} + \sqrt{A\varepsilon} + B\varepsilon \right) = (s^*)^2 \left( \frac{\varepsilon}{2s^*} \right) = \frac{s^*\varepsilon}{2}\,. $$

Substituting the explicit definition of $s^*$ back into this expression and distributing the denominator factor provides the closed-form exponent
$$ I_{n}(\varepsilon) \ge \frac{\varepsilon^2}{2nV_n + 4\sqrt{A\varepsilon} + 4B\varepsilon}\,. $$

Applying this exponent to the foundational two-sided probability inequality $\mathrm{Pr}\left(\left\lvert{}Z_{\widetilde{U}_{n,k}}\right\rvert{} \ge\varepsilon\right)\le 2 \exp\left(-nI_{n}(\varepsilon)\right)$ directly yields the generalized bound
$$ \mathrm{Pr}\left(\left\lvert{}Z_{\widetilde{U}_{n,k}}\right\rvert{} \ge\varepsilon\right) \le 2 \exp\left( - \frac{n \varepsilon^2}{2 n V_n + 4\sqrt{A\varepsilon} + 4B\varepsilon} \right)\,. $$

Finally, substituting the explicit definition of the variance $V_n = \mathrm{Var}_{\rho^{\otimes n}}\left(U_{n,k}\right)$, alongside the combinatorial constants $A = 8 e^3 k^4 \left\lVert O^{\text{sym}}_k\right\rVert_{\infty}^3$ and $B = 2 e k^2 \left\lVert O^{\text{sym}}_k\right\rVert_{\infty}$, provides the exact concentration inequality stated in the corollary:
$$ \mathrm{Pr}\left(\left\lvert{}Z_{\widetilde{U}_{n,k}}\right\rvert{} \ge\varepsilon\right) \le 2 \exp\left( - \frac{n \varepsilon^2}{2 n \mathrm{Var}_{\rho^{\otimes n}}\left(U_{n,k}\right) + 4\sqrt{8 e^3 k^4 \left\lVert O^{\text{sym}}_k\right\rVert_{\infty}^3\varepsilon} + 8 e k^2 \left\lVert O^{\text{sym}}_k\right\rVert_{\infty}\varepsilon} \right)\,. $$
\end{proof}

This closed-form probability bound formally resembles the Bernstein inequality. It successfully isolates the Gaussian behaviour governed by the sample variance $V_n$ for small deviations, while accurately accounting for the sub-exponential tails dictated by the combinatorial parameters for large deviations.

\subsection{Moderate Deviation Principle for Quantum U-Statistics}
While Theorem \ref{thm:large_deviation_lower_bound} establishes the Large Deviation Principle governing macroscopic errors, there remains a critical intermediate statistical regime. The standard $1/\sqrt{n}$ scaling is fundamentally significant because it characterizes the typical Gaussian fluctuations dictated by the Central Limit Theorem. In non-parametric quantum statistics, it is often necessary to understand the tail behaviour of errors that shrink to zero but do so slower than this $1/\sqrt{n}$ parametric scaling. This intermediate domain is governed by the Moderate Deviation Principle.

The remarkable mathematical property of the Moderate Deviation Principle is that unlike the Large Deviation Principle whose rate function intricately depends on all higher-order moments of the distribution, moderate deviation tails are almost universally Gaussian. We now rigorously establish this for quantum U-statistics by evaluating our exponential bounds along a dynamically scaling deviation sequence $\varepsilon_n$.

\begin{theorem}[Moderate Deviation Bound]\label{theo:mod_dev_conc}
Let $f(\rho)$ be a polynomial functional and $U_{n,k}$ its corresponding unbiased quantum U-statistic. Assume the estimation geometry is non-degenerate such that $\mathrm{Var}_{\rho}(\nabla f(\rho)) > 0$. Let $\{\varepsilon_n\}_{n=1}^{\infty}$ be a sequence of positive deviations such that $\varepsilon_n \to 0$ and $n\varepsilon_n^2 \to \infty$ as $n \to \infty$. Then for sufficiently large $n$, the two-sided moderate deviation concentration satisfies
$$ \mathrm{Pr}\left(\left|Z_{U_{n,k}} - f(\rho)\right| \ge \varepsilon_n\right) \le 2 \exp\left[ - \frac{n \varepsilon_n^2}{2\mathrm{Var}_{\rho}(\nabla f(\rho))} \big(1 - \mathcal{O}(\varepsilon_n) - \mathcal{O}(n^{-1})\big) \right]\,. $$
Furthermore, if $\varepsilon_n$ decays sufficiently fast such that $n\varepsilon_n^3 \to 0$ (e.g., $\varepsilon_n = n^{-\alpha}$ for $1/3 < \alpha < 1/2$), the higher-order corrections strictly vanish, yielding a purely Gaussian exponential decay rate parameterized solely by the inverse quantum Fisher information (QFI) $\mathrm{Var}_{\rho}(\nabla f(\rho))$.
\end{theorem}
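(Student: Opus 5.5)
The plan is to plug the deviation sequence $\varepsilon_n$ directly into the closed-form Bernstein-type bound of Corollary~\ref{cor:closed_form_conc}, and then expand the resulting exponent using the variance expansion \eqref{eq:variance_limit}. The Gaussian part of the denominator is $2nV_n$. By \eqref{eq:variance_limit} it equals $2\mathrm{Var}_{\rho}(\nabla f(\rho)) + \mathcal{O}(1/n)$. Writing $\sigma_*^2 := \mathrm{Var}_{\rho}(\nabla f(\rho))>0$, $A = 8e^3k^4\|O^{\text{sym}}_k\|_\infty^3$ and $B = 2ek^2\|O^{\text{sym}}_k\|_\infty$, the corollary gives
\begin{equation*}
\mathrm{Pr}\left(\left|Z_{U_{n,k}}-f(\rho)\right|\ge\varepsilon_n\right)\le 2\exp\left(-\frac{n\varepsilon_n^2}{2\sigma_*^2\left(1+\mathcal{O}(n^{-1}) + \frac{2\sqrt{A\varepsilon_n}+2B\varepsilon_n}{\sigma_*^2}\right)}\right).
\end{equation*}
The next step is to factor out $2\sigma_*^2$. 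Since $\sigma_*^2>0$ is a fixed constant, the correction terms become $\mathcal{O}(n^{-1})$, $\mathcal{O}(\sqrt{\varepsilon_n})$ and $\mathcal{O}(\varepsilon_n)$. For $n$ large enough the total correction $\delta_n$ is below $1/2$, so we may use $1/(1+\delta_n)\ge 1-\delta_n$. This gives the stated form $1-\mathcal{O}(\cdot)-\mathcal{O}(n^{-1})$.

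\textbf{Main obstacle.} The Bernstein corollary contains the term $4\sqrt{A\varepsilon}$, which only yields an $\mathcal{O}(\sqrt{\varepsilon_n})$ correction, whereas the theorem claims $\mathcal{O}(\varepsilon_n)$. To recover $\mathcal{O}(\varepsilon_n)$, I would not use the closed form. I would return to the rate function $I_n(\varepsilon)$ of Corollary~\ref{col:two_sided_concentration} together with the cubic bound $\Psi(s)\le 2As^3$ for $s\le 1/(2B)$, which comes from Lemma~\ref{lem:comb_pen_bound}. I would then evaluate the supremum at the Gaussian test point $s_n = \varepsilon_n/(nV_n)$. This point tends to $0$, so it lies in the admissible domain for large $n$. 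The exponent becomes
\begin{equation*}
s_n\varepsilon_n - \tfrac{nV_n}{2}s_n^2 - 2As_n^3 = \frac{\varepsilon_n^2}{2nV_n}\left(1 - \frac{4A\varepsilon_n}{(nV_n)^2}\right),
\end{equation*}
so the cubic penalty only costs a relative $\mathcal{O}(\varepsilon_n)$. Replacing $nV_n$ by $\sigma_*^2(1+\mathcal{O}(n^{-1}))$ and multiplying by $n$ then gives exactly the claimed bound. The condition $n\varepsilon_n^2\to\infty$ is used only to ensure that the bound is nontrivial, i.e.\ that the exponent diverges.

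For the second claim, assume $n\varepsilon_n^3\to 0$. The absolute error in the exponent is
\begin{equation*}
\frac{n\varepsilon_n^2}{2\sigma_*^2}\left(\mathcal{O}(\varepsilon_n)+\mathcal{O}(n^{-1})\right) = \mathcal{O}(n\varepsilon_n^3) + \mathcal{O}(\varepsilon_n^2).
\end{equation*}
Both terms tend to $0$. Hence the bound becomes $2\exp\!\left(-n\varepsilon_n^2/(2\sigma_*^2)+o(1)\right)$, a pure Gaussian rate governed by $\sigma_*^2$, which the earlier lemma identifies with the inverse QFI. For the example $\varepsilon_n=n^{-\alpha}$ with $1/3<\alpha<1/2$, one checks $n\varepsilon_n^2 = n^{1-2\alpha}\to\infty$ and $n\varepsilon_n^3 = n^{1-3\alpha}\to 0$.
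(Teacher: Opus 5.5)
Your proposal is correct and follows essentially the same route as the paper. You rightly note that the closed-form Bernstein bound only gives an $\mathcal{O}(\sqrt{\varepsilon_n})$ correction, go back to $I_n(\varepsilon_n)$, and evaluate it at the Gaussian test point $s_n=\varepsilon_n/(nV_n)$ using the cubic bound $\Psi(s)\le 2As^3$, which is exactly the paper's $C_0 s^3$ on the sub-domain $s\le 1/(4ek^2\lVert O^{\text{sym}}_k\rVert_\infty)$; you then substitute $nV_n=\mathrm{Var}_\rho(\nabla f(\rho))+\mathcal{O}(n^{-1})$ and, for the second claim, show that the absolute exponent error $\mathcal{O}(n\varepsilon_n^3)+\mathcal{O}(\varepsilon_n^2)$ vanishes, just as the paper does.
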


\begin{proof}
We evaluate the finite-sample exponent $I_n(\varepsilon)$ derived in Corollary \ref{col:two_sided_concentration} along the dynamic sequence $\varepsilon = \varepsilon_n$. By definition, the optimal two-sided exponent is governed by the supremum
$$ I_n(\varepsilon_n) = \sup_{0 \le s < \frac{1}{2ek^2 \left\lVert O^{\text{sym}}_k\right\rVert_{\infty}}} \left\{ s\varepsilon_n - \frac{n\mathrm{Var}_{\rho^{\otimes n}}(U_{n,k})}{2}s^2 - \Psi(s) \right\}\,. $$
Since the supremum over the interval is strictly greater than or equal to the function evaluated at any specific valid point within that interval, we can obtain a rigorous mathematical lower bound on the rate function by selecting a sub-optimal but asymptotically tight test parameter. 

To properly construct this parameter, we must first understand why it is forced to shrink to zero. To find the exact peak of the objective function, we set its derivative with respect to $s$ to zero. This gives the balance condition
$$ \varepsilon_n = n\mathrm{Var}_{\rho^{\otimes n}}(U_{n,k})s + \Psi'(s)\,. $$
Since the scaled variance converges to a strictly positive constant and the penalty derivative $\Psi'(s)$ is strictly positive for all $s > 0$, the right-hand side is a strictly increasing positive function. The only mathematical way for this right-hand side to match the vanishing deviation sequence $\varepsilon_n \to 0$ on the left is if the parameter $s$ itself also shrinks to zero. 

Since $s$ becomes infinitesimally small in this moderate deviation regime, the higher-order combinatorial penalty $\Psi(s) = \mathcal{O}(s^3)$ decays much faster than the quadratic variance term and becomes asymptotically negligible. The optimization is therefore dominated by the term $s\varepsilon_n - \frac{n\mathrm{Var}_{\rho^{\otimes n}}(U_{n,k})}{2}s^2$. Maximizing this purely quadratic expression by setting its derivative to zero directly yields our ideal asymptotic test parameter,
$$ s_n := \frac{\varepsilon_n}{n\mathrm{Var}_{\rho^{\otimes n}}(U_{n,k})}\,. $$
Since the scaled variance converges to a constant $n\mathrm{Var}_{\rho^{\otimes n}}(U_{n,k}) = \mathrm{Var}_{\rho}(\nabla f(\rho)) + \mathcal{O}(n^{-1}) \to \mathrm{Var}_{\rho}(\nabla f(\rho)) > 0$ and the sequence $\varepsilon_n \to 0$, it explicitly confirms that the test parameter scales as $s_n = \mathcal{O}(\varepsilon_n) \to 0$. Therefore, for all $n$ sufficiently large, $s_n$ strictly falls within the full convergence radius $s < \frac{1}{2ek^2 \left\lVert O^{\text{sym}}_k\right\rVert_{\infty}}$, and more specifically, enters the tighter sub-domain $s \le \frac{1}{4ek^2 \left\lVert O^{\text{sym}}_k\right\rVert_{\infty}}$ required by Corollary \ref{cor:closed_form_conc} to guarantee the cubic penalty bound.

Substituting $s_n$ into the quadratic portion of the objective function yields an exact simplification
$$ s_n\varepsilon_n - \frac{n\mathrm{Var}_{\rho^{\otimes n}}(U_{n,k})}{2}s_n^2 = \frac{\varepsilon_n^2}{n\mathrm{Var}_{\rho^{\otimes n}}(U_{n,k})} - \frac{n\mathrm{Var}_{\rho^{\otimes n}}(U_{n,k})}{2}\left(\frac{\varepsilon_n}{n\mathrm{Var}_{\rho^{\otimes n}}(U_{n,k})}\right)^2 = \frac{\varepsilon_n^2}{2n\mathrm{Var}_{\rho^{\otimes n}}(U_{n,k})}\,. $$

Next, we rigorously bound the residual component error $\Psi(s_n)$. From the structural analysis in Corollary \ref{col:two_sided_concentration}, we know that whenever the parameter is constrained to the tighter sub-domain, the rational upper bound for the penalty function simplifies to a purely cubic bound. Further, the combination of this tighter sub-domain condition $s \le \frac{1}{4ek^2 \left\lVert O^{\text{sym}}_k\right\rVert_{\infty}}$ and Lemma \ref{lem:comb_pen_bound} yields the following, 
$$ \Psi(s) \le C_0 s^3\,, $$
where the absolute constant is exactly $C_0 := 16 e^3 k^4 \left\lVert O^{\text{sym}}_k\right\rVert_{\infty}^3$. Evaluating this cubic bound at our chosen test point $s_n$ gives $\Psi(s_n) \le C_0 \frac{\varepsilon_n^3}{(n\mathrm{Var}_{\rho^{\otimes n}}(U_{n,k}))^3}$.

Combining these evaluations establishes a strict lower bound on the finite-sample rate function,
$$ I_n(\varepsilon_n) \ge \frac{\varepsilon_n^2}{2n\mathrm{Var}_{\rho^{\otimes n}}(U_{n,k})} - C_0 \frac{\varepsilon_n^3}{(n\mathrm{Var}_{\rho^{\otimes n}}(U_{n,k}))^3}\,. $$
To translate this into the required scale for the exponential bound, we multiply the rate function by the physical sample size $n$, yielding
$$ n I_n(\varepsilon_n) \ge \frac{n\varepsilon_n^2}{2(n\mathrm{Var}_{\rho^{\otimes n}}(U_{n,k}))} - \frac{n C_0 \varepsilon_n^3}{(n\mathrm{Var}_{\rho^{\otimes n}}(U_{n,k}))^3} = \frac{n\varepsilon_n^2}{2(n\mathrm{Var}_{\rho^{\otimes n}}(U_{n,k}))} - \mathcal{O}(n\varepsilon_n^3)\,, $$
where we have absorbed the constant factor $C_0(n\mathrm{Var}_{\rho^{\otimes n}}(U_{n,k}))^{-3} \to C_0\mathrm{Var}_{\rho}(\nabla f(\rho))^{-3}$ into the asymptotic $\mathcal{O}$-notation.

Finally, we utilize the variance expansion established in Section \ref{sec:variance_optimality}, $n\mathrm{Var}_{\rho^{\otimes n}}(U_{n,k}) = \mathrm{Var}_{\rho}(\nabla f(\rho)) + \mathcal{O}(n^{-1})$. By Taylor expanding the reciprocal, we find $$(n\mathrm{Var}_{\rho^{\otimes n}}(U_{n,k}))^{-1} = \mathrm{Var}_{\rho}(\nabla f(\rho))^{-1} \big(1 - \mathcal{O}(n^{-1})\big).$$ Substituting this expansion into our primary quadratic term yields
$$ \frac{n\varepsilon_n^2}{2(n\mathrm{Var}_{\rho^{\otimes n}}(U_{n,k}))} = \frac{n\varepsilon_n^2}{2\mathrm{Var}_{\rho}(\nabla f(\rho))} \big(1 - \mathcal{O}(n^{-1})\big) = \frac{n\varepsilon_n^2}{2\mathrm{Var}_{\rho}(\nabla f(\rho))} - \mathcal{O}(\varepsilon_n^2)\,. $$

Inserting this asymptotic relation back into the bound for $n I_n(\varepsilon_n)$ gives
$$ n I_n(\varepsilon_n) \ge \frac{n\varepsilon_n^2}{2\mathrm{Var}_{\rho}(\nabla f(\rho))} - \mathcal{O}(n\varepsilon_n^3) - \mathcal{O}(\varepsilon_n^2) = \frac{n\varepsilon_n^2}{2\mathrm{Var}_{\rho}(\nabla f(\rho))} \left[ 1 - \mathcal{O}(\varepsilon_n) - \mathcal{O}(n^{-1}) \right]\,. $$

Applying this rigorously bounded exponent to the two-sided concentration inequality from Corollary \ref{col:two_sided_concentration} completes the first part of the proof.

To establish the second part of the theorem, we consider the regime where the deviation sequence decays sufficiently fast such that $n\varepsilon_n^3 \to 0$. Under this stricter condition, the absolute error terms in the exponent, specifically $\mathcal{O}(n\varepsilon_n^3)$ and $\mathcal{O}(\varepsilon_n^2)$, both converge strictly to zero as $n \to \infty$. Consequently, the auxillary penalty completely vanishes, and the overall probability exponent reduces exactly to its leading-order Gaussian term. Taking the limit of the scaled logarithmic probability directly recovers this pure Gaussian rate,
$$ \limsup_{n\to\infty} \frac{1}{n\varepsilon_n^2} \log \mathrm{Pr}\left(\left|Z_{U_{n,k}} - f(\rho)\right| \ge \varepsilon_n\right) \le - \frac{1}{2\mathrm{Var}_{\rho}(\nabla f(\rho))}\,, $$
which confirms that the higher-order corrections strictly vanish and completes the proof.
\end{proof}

\begin{remark}
This Moderate Deviation Principle demonstrates a statistical phenomenon fundamental to quantum non-parametric estimation. It reveals that the empirical distribution of the quantum U-statistic retains a purely Gaussian tail behaviour decaying as $\exp(-n\varepsilon_n^2 / 2\mathrm{Var}_{\rho}(\nabla f(\rho)))$ far beyond the microscopic scale of the Central Limit Theorem. The non-Gaussian higher-order interaction moments, which are encoded in the combinatorial function $\Psi(s)\,$, only begin to macroscopically deform the tail probabilities when the deviation $\varepsilon_n$ shrinks slower than $n^{-1/3}$. At that critical threshold, the $\mathcal{O}(n\varepsilon_n^3)$ third-moment skewness correction must be explicitly retained, formally bridging the gap between Gaussian fluctuations and true Large Deviations.
\end{remark}
\section{Conclusion and Acknowledgements}
In this paper, we establish a first-order marginal--gradient equivalence for permutation-invariant finite-copy kernels of polynomial functionals and prove that the quantum U-statistic is the unique unbiased permutation-invariant extension to an arbitrary number of copies. Combining these results with the variance analysis of quantum U-statistics and the multiparameter quantum Cram\'er--Rao framework, we show that their leading-order variance attains the quantum Cram\'er--Rao limit, establishing asymptotic efficiency without preliminary tomography or adaptive measurements. Furthermore, we derive a variance-sensitive Bernstein-type concentration inequality to bound finite-sample estimation errors, and establish the Moderate Deviation Principle (MDP) to characterize purely Gaussian tail behaviour in the intermediate scaling regime.

We apply these results to several quantum information-theoretic quantities, including state purity, squared Hilbert--Schmidt distance, and the Bures $\chi^2$-divergence. For the latter, we derive an exact continuous-time integral representation and show that the spectral lower-bound condition imposed in \cite{BOW19} is sufficient but not necessary for bounded-variance estimation. In particular, bounded variance can persist even as the minimum eigenvalue of the reference state approaches zero.

AD acknowledges the support provided by the TCS Research Scholar Program (Cycle 19) from Tata Consultancy Services. NAW acknowledges the funding support from the National Quantum Mission, an initiative of the Department of Science and Technology, Govt. of India and the support provided by the Foundation for QC Innovation (FQCI), DST-NQM T-Hub at IISc Bengaluru, in facilitating this project. NAW also acknowledges the support provided by the Grant  ANRF/ARG/2025/012066/MS from the Department of Science \& Technology, Govt of India. 

\bibliography{master}
\bibliographystyle{ieeetr}

\clearpage

\appendices

\section*{Organization of the Appendices}

The appendices provide supplementary mathematical proofs and technical derivations supporting the main results of this paper. The material is organized as follows:

\begin{itemize}
    \item \textbf{Appendix \ref{app:A}} details the structural correspondence between quantum marginal kernels and classical conditional expectations.
    
    \item \textbf{Appendix \ref{app:operator_span}} provides the algebraic proof of Lemma \ref{lemma:operator_span}, establishing the spanning property of permutation-invariant operators.
    
    \item \textbf{Appendix \ref{app:proof_grad_marg}} presents the explicit variance analysis of quantum U-statistics via the Hoeffding decomposition (justifying Equation \eqref{eq:hoeffding_decomp_var}) and rigorously extends this analysis to higher-order intrinsic variance limits for degenerate polynomial functionals.
    
    \item \textbf{Appendix \ref{app:C}} contains the formal proofs regarding the continuous-time integral representation of the Bures $\chi^2$-divergence and the derivation of the inverse Symmetric Logarithmic Derivative (SLD) super-operator.
\end{itemize}

\renewcommand{\thesubsection}{\thesection-\Roman{subsection}}

\section{Correspondence Between Quantum Marginal Kernels and Classical Conditional Expectation}\label{app:A}

To understand the correspondence between the definition of marginal kernels and classical conditional expectation, consider a density operator $\rho$ having the spectral decomposition
\[
    \rho=\sum_{x\in S} p_x |x\rangle\langle x|,
\]
where $\{|x\rangle\}_{x\in S}$ denotes a basis of $\mathcal{H}^{\otimes m}$, and $\{p_x\}_{x\in S}$ denotes a probability distribution. We can identify
the basis labels $x$ with the outcomes of a classical random
variable $X$ satisfying $\Pr(X=x)=p_x$. Now, if the observable $m$-copy permutation-invariant kernel
$O_m^{\mathrm{sym}}$ commutes with the product state $\rho^{\otimes m}$, then we know that they are simultaneously diagonalizable in the same basis. Let $O_m^{\mathrm{sym}}$ diagonalize in the product eigenbasis of $\rho^{\otimes m}$ as
\[
    O_m^{\mathrm{sym}} = \sum_{x_1,\ldots,x_m\in S} g(x_1,\ldots,x_m) |x_1,\ldots,x_m\rangle \langle x_1,\ldots,x_m|,
\]
where $g(x_1,\ldots,x_m)$ represents the diagonal entries of $O^{\text{sym}}_m$, i.e.,
\[g(x_1,\ldots,x_m):=\langle x_1,\ldots,x_m| O_m^{\mathrm{sym}} |x_1,\ldots,x_m\rangle\,,\forall x_i\in S, \forall i\in [m].\] Thus, in this basis, the quantum kernel is precisely the operator representation of some classical function $g:\mathbb{C}^m\to \mathbb{C}$. Moreover, the permutation-invariance of $O_m^{\mathrm{sym}}$ implies that $g$ is a symmetric
function, i.e., for every $\pi \in S_m$, we have,
\begin{align*}
    g(x_{\pi(1)},\ldots,x_{\pi(m)})
    &= \langle x_{\pi(1)},\ldots,x_{\pi(m)}| O_m^{\mathrm{sym}} |x_{\pi(1)},\ldots,x_{\pi(m)}\rangle\\
    &= \langle x_1,\ldots,x_m| P^{\dagger}_{\pi}O_m^{\mathrm{sym}} p_{\pi} |x_1,\ldots,x_m\rangle\\
    &= \langle x_1,\ldots,x_m| O_m^{\mathrm{sym}} |x_1,\ldots,x_m\rangle\\
    &=g(x_1,\ldots,x_m).
\end{align*}

Let us now consider the $r^{\text{th}}$-order marginal kernel $O_{m,r}^{\mathrm{sym}}$, which is defined by tracing out the last $m-r$ subsystems. By evaluating the partial trace over the product eigenbasis, we obtain
\begin{align}
    O_{m,r}^{\mathrm{sym}} &= \mathrm{Tr}_{r+1 \dots m}\Big[O_m^{\mathrm{sym}} \big(\mathbb{I}^{\otimes r} \otimes \rho^{\otimes(m-r)}\big)\Big] \nonumber \\
    &\overset{\tiny (a)}{=} \mathrm{Tr}_{r+1 \dots m}\left[ \left( \sum_{x_1,\ldots,x_m \in S} g(x_1,\ldots,x_m) |x_1,\ldots,x_m\rangle \langle x_1,\ldots,x_m| \right) \left( \sum_{y_1,\ldots,y_m \in S} \left(\prod_{j=r+1}^{m}p_{y_j}\right) |y_1,\ldots,y_m\rangle \langle y_1,\ldots,y_m| \right) \right] \nonumber \\
    &{=} \mathrm{Tr}_{r+1 \dots m}\left[ \sum_{x_1,\dots,x_m \in S} g(x_1,\ldots,x_m) \left(\prod_{j=r+1}^{m}p_{x_j}\right) |x_1,\ldots,x_m\rangle \langle x_1,\ldots,x_m| \right] \nonumber \\
    &{=} \sum_{x_1,\dots,x_m \in S} g(x_1,\ldots,x_m) \left(\prod_{j=r+1}^{m}p_{x_j}\right) \mathrm{Tr}_{r+1 \dots m}\Big[ |x_1,\ldots,x_m\rangle \langle x_1,\ldots,x_m| \Big] \nonumber \\
    &\overset{\tiny (b)}{=} \sum_{x_1,\ldots,x_r\in S} \left[ \sum_{x_{r+1},\ldots,x_m\in S} g(x_1,\ldots,x_m) \prod_{j=r+1}^{m}p_{x_j} \right] |x_1,\ldots,x_r\rangle \langle x_1,\ldots,x_r|\,, \label{eq:marginal_classical_expansion}
\end{align}
where equality ($a$) follows from substituting the spectral expansion of the kernel $O_m^{\mathrm{sym}}$ and the fully expanded identity-padded state $\mathbb{I}^{\otimes r} \otimes \rho^{\otimes(m-r)}$ over the $m$-partite Hilbert space, and ($b$) follows by evaluating the partial trace over the subsystems $(r+1)$ through $m$ (collapsing the basis kets via $\mathrm{Tr}[|x\rangle\langle x|]=1$) and strategically grouping the sum to isolate the terms for the untraced first $r$ subsystems.

For independent random variables $X_1,\ldots,X_m \sim p$, the coefficient in square brackets on the right-hand side of equality (b) of Equation~\eqref{eq:marginal_classical_expansion} is precisely the conditional expectation of $g$ given the first $r$ random variables, i.e.,
\begin{equation*}\label{eq:classical_conditional_expectation}
    \mathbb E\left[g(X_1,\ldots,X_m) \,\middle|\, X_1=x_1,\ldots,X_r=x_r\right] = \sum_{x_{r+1},\ldots,x_m\in S}
    g(x_1,\ldots,x_m) \prod_{j=r+1}^{m}p_{x_j}\,.
\end{equation*}
Now, defining the $r^{\text{th}}$-order marginal of the classical function $g$ as
\[g_r(x_1,\ldots,x_r):=\mathbb E\left[ g(X_1,\ldots,X_m) \,\middle|\, X_1=x_1,\ldots,X_r=x_r\right]\,,\]
Equation~\eqref{eq:marginal_classical_expansion} becomes exactly to
\[O_{m,r}^{\mathrm{sym}}=\sum_{x_1,\ldots,x_r} g_r(x_1,\ldots,x_r)|x_1,\ldots,x_r\rangle \langle x_1,\ldots,x_r|\,.\]
This shows that the quantum marginal kernel is exactly the operator representation of the classical conditional expectation. Therefore, when the kernel $O^{\text{sym}}_m$ is diagonal in the product eigenbasis of $\rho^{\otimes m}$, the quantum operation of multiplying by $\rho^{\otimes(m-r)}$ and tracing out the corresponding $(m-r)$ subsystems reduces exactly to computing the expectation of the classical function $g$ conditioned on the first $r$ random variables. In this sense, the definition of the quantum marginal kernel is a natural generalization of classical conditional expectation.

\section{Proof of Lemma \ref{lemma:operator_span}}\label{app:operator_span}

Observe that the direction \[\mathrm{span}_{\mathbb{C}}\{A^{\otimes n}\mid A\in M_d\}\subseteq \cK(M_d)\] is obviously true, since any linear combination $\sum\limits_{j=1}^m c_j X_j^{\otimes (n-1)}$ of matrices $X_j \in M_d$ (and constants $c_j \in \mathbb{C}$) is a permutation-invariant matrix lying in the general tensor space $M_d^{\otimes n}$. Therefore, the only non-trivial direction to prove is that \[\cK(M_d)\subseteq \mathrm{span}_{\mathbb{C}}\{A^{\otimes n}\mid A\in M_d\}\,.\]
Let $\cW_n = \mathrm{span}_{\mathbb{C}} \big\{ A^{\otimes n} \mid A \in M_d \big\}$. By definition, the subspace $\cK(M_d)$ of $M_d^{\otimes n}$ is linearly spanned by permutation-invariant elementary tensors of the form
\begin{equation*}
    S_n(A_1, \dots, A_n) = \sum_{\pi \in S_n} A_{\pi(1)} \otimes \dots \otimes A_{\pi(n)}
\end{equation*}
for arbitrary $A_1, \dots, A_n \in M_d$. We proceed by induction on $n$ to show that $S_n(A_1, \dots, A_n) \in \cW_n$.

\textbf{Base case ($n=2$):} For any $A, B \in M_d$, we observe the identity,
\begin{equation*}
    A \otimes B + B \otimes A = (A+B)^{\otimes 2} - A^{\otimes 2} - B^{\otimes 2}.
\end{equation*}
The right-hand side is clearly a linear combination of matrices of the form $X^{\otimes 2}$ for $X \in M_d$, establishing that $S_2(A,B) \in \cW_2$.

\textbf{Inductive step:} Assume the claim holds for $n-1$. Let $A_1, \dots, A_n \in M_d$. By the inductive hypothesis, the symmetrization over the first $n-1$ elements lies in $\cW_{n-1}$, meaning,
\begin{equation*}
    S_{n-1}(A_1, \dots, A_{n-1}) = \sum_{j=1}^m c_j X_j^{\otimes (n-1)},
\end{equation*}
for some constants $c_j \in \mathbb{C}$ and matrices $X_j \in M_d$. 

To construct $S_n(A_1, \dots, A_n)$, we can express the symmetric group $S_n$ through its left coset decomposition with respect to $S_{n-1}$ (the subgroup of permutations fixing the $n^{\text{th}}$ index). Any permutation $\pi \in S_n$ can be uniquely decomposed as a permutation $\sigma \in S_{n-1}$ acting on the first $n-1$ elements, followed by a transposition that places the $n^{\text{th}}$ element into the $k^{\text{th}}$ tensor position, where $k \in \{1, \dots, n\}$. This structural decomposition allows us to factor the sum over $S_n$ into an outer sum over the insertion position $k$, and an inner sum over the permutations $\sigma \in S_{n-1}$,
\begin{equation*}
    S_n(A_1, \dots, A_n) = \sum_{k=1}^n \Bigg( \sum_{\sigma \in S_{n-1}} A_{\sigma(1)} \otimes \dots \otimes A_{\sigma(k-1)} \otimes A_n \otimes A_{\sigma(k)} \otimes \dots \otimes A_{\sigma(n-1)} \Bigg).
\end{equation*}
The inner summation over $\sigma$ is exactly the symmetrization operation $S_{n-1}$ acting on the subset $\{A_1, \dots, A_{n-1}\}$, but split around the $k^{\text{th}}$ tensor factor. Substituting the inductive hypothesis $S_{n-1}(A_1, \dots, A_{n-1}) = \sum\limits_{j=1}^m c_j X_j^{\otimes (n-1)}$ directly into this structure linearly distributes $A_n$ into the $k^{\text{th}}$ position of each basis tensor $X_j$, yielding,
\begin{equation*}
    S_n(A_1, \dots, A_n) = \sum_{j=1}^m c_j \sum_{k=1}^n \left( X_j^{\otimes (k-1)} \otimes A_n \otimes X_j^{\otimes (n-k)} \right).
\end{equation*}
By linearity, it is sufficient to prove that for any $X, Y \in M_d$, the term 
\begin{equation*}
    T(X,Y) = \left(\sum_{k=1}^n X^{\otimes (k-1)} \otimes Y \otimes X^{\otimes (n-k)}\right) \in \cW_n\,.
\end{equation*} 
Consider the tensor polynomial in $t \in \mathbb{C}$ defined by,
\begin{equation*}
    (X + tY)^{\otimes n} = X^{\otimes n} + t T(X,Y) + \mathcal{O}(t^2).
\end{equation*}
Since $M_d$ is closed under linear combinations, $(X+tY) \in M_d$, which guarantees $(X+tY)^{\otimes n} \in \cW_n$ for all $t$. Since $\cW_n$ is a vector space, the coefficients of this polynomial (which can be extracted via interpolation at distinct values of $t$) must also reside in $\cW_n$. Specifically, the linear coefficient $T(X,Y) \in \cW_n$. 

Consequently, $S_n(A_1, \dots, A_n) \in \cW_n$, completing the induction. This completes the proof.\hfill\QED

\section{Variance Analysis of Quantum U-Statistics (Equation \eqref{eq:hoeffding_decomp_var}) and Higher-Order Extensions}\label{app:proof_grad_marg}

By Lemma \ref{lemma:u_stat_uniqueness}, any unbiased, permutation-invariant, global $n$-copy estimator $O_n^{\text{sym}}$ for a quantum polynomial functional $f(\rho)$ is uniquely determined to be a quantum U-statistic observable estimating $f(\rho)$. Therefore, applying quantum Hoeffding decomposition \cite{GB2010}, we can expand the variance of the observable $O_n^{\text{sym}}$ over the product state $\rho^{\otimes n}$ as a sum of the variances of its mutually orthogonal, centered Hoeffding projections,
\begin{equation*}
    \mathrm{Var}_{\rho^{\otimes n}}(O_n^{\text{sym}}) = \sum_{i=1}^k \binom{k}{i}^2\binom{n}{i}^{-1} \mathrm{Var}_{\rho}(H_{k,i}^{\text{sym}})\,,
\end{equation*}
where $H_{k,i}^{\text{sym}}$ denotes the $i^{\text{th}}$-order centered Hoeffding projection of a permutation-invariant, local $k$-copy kernel $O_k^{\text{sym}}$ of $f(\rho)$. For a specific subset of subsystems $S$ of size $i$, the centered projection $H_{k,i}^{\text{sym}}$ is explicitly defined in terms of the lower-order marginal kernels via the alternating sum
\begin{equation*}
    H_{k,i}^{\text{sym}} = \sum_{A \subseteq S} (-1)^{|S|-|A|} O_{k,|A|}^{\text{sym}}\,,
\end{equation*}
where $O_{k,|A|}^{\text{sym}}$ represents the $\abs{A}^{\text{th}}$-order marginal kernel evaluated on the subset $A$.
 
To analyze the asymptotic behavior of the variance in the large-$n$ limit, we isolate the first-order term ($i=1$) from the remaining higher-order terms ($i \ge 2$) of the summation. Noting that $\binom{n}{1}^{-1} = \frac{1}{n}$, and that the combinatorial weights force all higher-order variances ($i \ge 2$) to scale as $\mathcal{O}\left(\frac{1}{n^2}\right)$, we obtain
\begin{align*} 
   \mathrm{Var}_{\rho^{\otimes n}}(O_n^{\text{sym}}) &= \frac{k^2}{n} \mathrm{Var}_{\rho}(H_{k,1}^{\text{sym}}) + \sum_{i=2}^k \binom{k}{i}^2\binom{n}{i}^{-1} \mathrm{Var}_{\rho}(H_{k,i}^{\text{sym}}) \\ 
   &= \frac{1}{n} \mathrm{Var}_{\rho}(k H_{k,1}^{\text{sym}}) + \mathcal{O}\left(\frac{1}{n^2}\right)\,. 
\end{align*}
 
By the definition of the Hoeffding decomposition, the first-order centered projection is exactly the first-order marginal kernel shifted by the functional's expected value, $H_{k,1}^{\text{sym}} = O_{k,1}^{\text{sym}} - f(\rho)\mathbb{I}$. Since the quantum variance of any observable is invariant under translations by a scalar multiple of the identity, we have $\mathrm{Var}_{\rho}(H_{k,1}^{\text{sym}}) = \mathrm{Var}_{\rho}(O_{k,1}^{\text{sym}})$. 
 
We can further shift this marginal kernel by an additional state-dependent scalar $C(\rho) = r c_0 + \sum_{j=1}^r (r-j) P_j(\rho)$ without altering its variance to get
\begin{equation*}
    \mathrm{Var}_{\rho^{\otimes n}}(O_n^{\text{sym}}) = \frac{1}{n} \mathrm{Var}_{\rho}\big(k O_{k,1}^{\text{sym}} - C(\rho)\mathbb{I}\big) + \mathcal{O}\left(\frac{1}{n^2}\right)\,.
\end{equation*} 
 
By Lemma \ref{lemma:generic_poly_algebraic}, this shifted first-order marginal kernel exactly recovers the gradient of the functional such that $\nabla f(\rho) = k O_{k,1}^{\text{sym}} - C(\rho)\mathbb{I}$. Substituting this identity into our asymptotic expansion directly yields the desired result,
\begin{equation*}
    \mathrm{Var}_{\rho^{\otimes n}}(O_n^{\text{sym}}) = \frac{1}{n} \mathrm{Var}_{\rho}\big(\nabla f(\rho)\big) + \mathcal{O}\left(\frac{1}{n^2}\right)\,.
\end{equation*} 

It is important to note that when the functional gradient evaluates to a scalar multiple of the identity, the leading $\mathcal{O}(1/n)$ variance term vanishes. In such degenerate cases, the asymptotic variance limit shifts to a second-order decay governed by the functional's Hessian operator. For the sake of completeness, we formally state and prove this claim as a corollary.

\begin{corollary}[Second-Order Intrinsic Quantum Variance Limit for Degenerate Polynomial Functionals]\label{cor:degenerate_variance}
    Let $f(\rho)$ be a polynomial functional evaluated using an unbiased global permutation-invariant estimator $O_n^{\text{sym}}$. If the functional's gradient is a scalar multiple of identity at the true state, i.e., $\nabla f(\rho) = c_1\mathbb{I}$ (rendering the first-order estimation degenerate), then provided that the functional's Hessian operator $\nabla^2 f(\rho)$ is not a scalar multiple of identity (i.e., $\nabla^2 f(\rho)\neq c_2\mathbb{I}^{\otimes 2}$), the leading term of the intrinsic quantum variance scales as $\mathcal{O}(1/n^2)$ and is governed by $\nabla^2 f(\rho)$. That is,
    \begin{equation*}
        \mathrm{Var}_{\rho^{\otimes n}}(O_n^{\text{sym}}) = \frac{1}{2n^2} \mathrm{Var}_{\rho^{\otimes 2}}\big(\nabla^2 f(\rho)\big) + \mathcal{O}\left(\frac{1}{n^3}\right)\,.
    \end{equation*}
\end{corollary}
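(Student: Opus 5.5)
We will follow the same route as for Equation~\eqref{eq:hoeffding_decomp_var}. First invoke Corollary~\ref{lemma:u_stat_uniqueness} to identify $O_n^{\text{sym}}$ with $U_{n,k}$. Then write the Hoeffding decomposition of its variance,
\begin{equation*}
\mathrm{Var}_{\rho^{\otimes n}}(O_n^{\text{sym}})=\sum_{i=1}^k \binom{k}{i}^2\binom{n}{i}^{-1}\mathrm{Var}_{\rho^{\otimes i}}(H^{\text{sym}}_{k,i}).
\end{equation*}
The first step is to show that the $i=1$ term vanishes. By Lemma~\ref{lemma:generic_poly_algebraic}, $kO^{\text{sym}}_{k,1}=\nabla f(\rho)+C(\rho)\mathbb{I}$. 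Under the hypothesis $\nabla f(\rho)=c_1\mathbb{I}$ this gives $O^{\text{sym}}_{k,1}=\mathrm{const}\cdot\mathbb{I}$. Hence $H^{\text{sym}}_{k,1}=O^{\text{sym}}_{k,1}-f(\rho)\mathbb{I}$ is a scalar multiple of $\mathbb{I}$, and its $\rho$-variance is zero. Since $\binom{n}{i}^{-1}=\mathcal{O}(n^{-i})$, the leading term becomes the $i=2$ term, $\binom{k}{2}^2\frac{2}{n(n-1)}\mathrm{Var}_{\rho^{\otimes 2}}(H^{\text{sym}}_{k,2})$. The terms with $i\ge3$ are $\mathcal{O}(n^{-3})$, and replacing $\frac{2}{n(n-1)}$ by $\frac{2}{n^2}$ costs only $\mathcal{O}(n^{-3})$.

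The second step is a second-order analogue of Lemma~\ref{lemma:generic_poly_algebraic}, namely an identification of the second marginal with the Hessian. We take the second directional derivative $\frac{\partial^2}{\partial s\,\partial t}\mathrm{Tr}[O^{\text{sym}}_k(\rho+sX+tY)^{\otimes k}]$ at $s=t=0$. Permutation invariance, used exactly as in the proof of Lemma~\ref{lem:grad_marg}, shows that this equals $k(k-1)\mathrm{Tr}[O^{\text{sym}}_{k,2}(X\otimes Y)]$. Here $\nabla^2 f(\rho)$ is defined as the two-copy operator with $\mathrm{Tr}[\nabla^2 f(\rho)(X\otimes Y)]=\partial_s\partial_t f$, taken in its symmetrized form. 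On traceless $X,Y$ we therefore get $k(k-1)O^{\text{sym}}_{k,2}\simeq\nabla^2 f(\rho)$ modulo terms of the form $Z\otimes\mathbb{I}+\mathbb{I}\otimes Z$. These terms come from the identity padding, just as $C(\rho)\mathbb{I}$ did at first order.

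Next we write $H^{\text{sym}}_{k,2}=O^{\text{sym}}_{k,2}-O^{\text{sym}}_{k,1}\otimes\mathbb{I}-\mathbb{I}\otimes O^{\text{sym}}_{k,1}+f(\rho)\mathbb{I}^{\otimes2}$. Its job is to kill the local, one-body parts. Using the degeneracy $O^{\text{sym}}_{k,1}\propto\mathbb{I}$, we aim to show that $k(k-1)H^{\text{sym}}_{k,2}$ and $\nabla^2 f(\rho)$ differ only by identity multiples and one-body terms whose $\rho^{\otimes2}$-covariance with the doubly-centred part vanishes. This would give $\mathrm{Var}_{\rho^{\otimes 2}}(k(k-1)H^{\text{sym}}_{k,2})=\mathrm{Var}_{\rho^{\otimes 2}}(\nabla^2 f(\rho))$. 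Substituting yields $\binom{k}{2}^2\frac{2}{n^2}\cdot\frac{1}{k^2(k-1)^2}\mathrm{Var}(\nabla^2 f)=\frac{1}{2n^2}\mathrm{Var}_{\rho^{\otimes2}}(\nabla^2 f(\rho))$, which is the claimed constant.

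The main obstacle is this last identification. The padding shifts at second order are not just scalars; they are one-body operators such as $\nabla P_i(\rho)\otimes\mathbb{I}$ arising from the $\mathrm{Tr}[\rho]^{k-i}$ homogenization. The Hessian also has to be normalized, and its one-body components fixed, so that its variance matches that of the doubly-centred $H^{\text{sym}}_{k,2}$ rather than the raw marginal. We would first handle this by computing the Hessian of the homogenized functional $f_{\text{hom}}$ and showing that the extra terms are exactly the ones removed by the alternating sum defining $H^{\text{sym}}_{k,2}$. The key input here is that $\mathrm{Tr}_2[H^{\text{sym}}_{k,2}(\mathbb{I}\otimes\rho)]=0$. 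Failing that, we would interpret $\nabla^2 f(\rho)$ in the statement as its doubly-centred projection, in analogy with how only $\mathrm{Var}_\rho$, which is insensitive to identity shifts, entered at first order.
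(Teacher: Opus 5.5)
Your first step (Hoeffding expansion, the $i=1$ term vanishing under $O^{\text{sym}}_{k,1}\propto\mathbb{I}$, the $i=2$ constant) is the paper's argument. Your diagnosis of the obstacle is also correct. Equating second derivatives along traceless $X$ fixes a permutation-invariant two-copy operator only modulo symmetric one-body terms $Z\otimes\mathbb{I}+\mathbb{I}\otimes Z$, not modulo scalars. The paper's proof gets past this only by asserting $\nabla^2 f(\rho)=k(k-1)O^{\text{sym}}_{k,2}-C_2(\rho)\mathbb{I}^{\otimes 2}$ with a scalar $C_2$.

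The step you propose to close the gap does not work. Take $\nabla^2 f(\rho)$ to be the Hessian of $f$ itself, i.e. the permutation-invariant operator with $\mathrm{Tr}[\nabla^2 f(\rho)X^{\otimes 2}]=\frac{d^2}{dt^2}f(\rho+tX)$ at $t=0$ for all Hermitian $X$. Suppose $\nabla^2 f(\rho)=k(k-1)H^{\text{sym}}_{k,2}+W$ with $W$ one-body and $\rho^{\otimes 2}$-orthogonal to $H^{\text{sym}}_{k,2}$. Orthogonality makes the variances add, $\mathrm{Var}_{\rho^{\otimes 2}}(\nabla^2 f)=\mathrm{Var}_{\rho^{\otimes 2}}(k(k-1)H^{\text{sym}}_{k,2})+\mathrm{Var}_{\rho^{\otimes 2}}(W)$; it does not make them equal. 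Route (a) cannot remove $W$ either. Under degeneracy the alternating sum defining $H^{\text{sym}}_{k,2}$ subtracts only scalars. Meanwhile, differentiating $f_{\text{hom}}$ twice gives $k(k-1)O^{\text{sym}}_{k,2}=\nabla^2 f(\rho)+\sum_i (k-i)\big(\nabla P_i\otimes\mathbb{I}+\mathbb{I}\otimes\nabla P_i\big)+(\text{scalar})\,\mathbb{I}^{\otimes 2}$. The one-body part here has variance $2\mathrm{Var}_\rho\big(\sum_i (k-i)\nabla P_i\big)$, which need not vanish when $\sum_i\nabla P_i\propto\mathbb{I}$. Consequently the stated formula, read with the Hessian of $f$, overstates the leading coefficient by exactly $\mathrm{Var}_\rho\big(\sum_i (k-i)\nabla P_i\big)/n^2$.

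The paper's own stationary example shows this is a real discrepancy. Take $f=D^2_{\mathrm{HS}}(\rho,\sigma)$ at $\rho=\sigma$, so $\nabla f=0$. The kernel $O^{\text{sym}}_2=\mathbb{S}-\sigma\otimes\mathbb{I}-\mathbb{I}\otimes\sigma+\mathrm{Tr}[\sigma^2]\mathbb{I}^{\otimes 2}$ has $O^{\text{sym}}_{2,1}=0$. Hence exactly $\mathrm{Var}(U_{n,2})=\binom{n}{2}^{-1}\big(1+(\mathrm{Tr}\sigma^2)^2-2\mathrm{Tr}\sigma^3\big)$. On the other side, $\frac{d^2}{dt^2}f=2\mathrm{Tr}[X^2]$ gives $\nabla^2 f=2\mathbb{S}$, and $\frac{1}{2n^2}\mathrm{Var}_{\sigma^{\otimes 2}}(2\mathbb{S})=\frac{2}{n^2}\big(1-(\mathrm{Tr}\sigma^2)^2\big)$. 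The two leading coefficients differ by $4\mathrm{Var}_\sigma(\sigma)$, which is nonzero for, e.g., $\sigma=\mathrm{diag}(0.7,0.3)$.

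So no argument can prove the formula for the Hessian of a mixed-degree $f$, and your fallback (b) is necessary, not optional. Note that $k(k-1)H^{\text{sym}}_{k,2}$ is the unique permutation-invariant operator that both represents the second derivative on traceless directions and satisfies $\mathrm{Tr}_2[\,\cdot\,(\mathbb{I}\otimes\rho)]=0$. Under degeneracy it equals $k(k-1)O^{\text{sym}}_{k,2}$ up to a scalar, which is exactly the representative the paper's scalar $C_2$ tacitly selects. With $\nabla^2 f(\rho)$ read as this doubly-centred representative, your constant $\frac{1}{2n^2}$ is correct and the proof is complete. For homogeneous $f$ there is no padding, the Hessian of $f$ already equals $k(k-1)O^{\text{sym}}_{k,2}$, and the statement holds as written.
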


\begin{proof}
    By the quantum Hoeffding decomposition \cite{GB2010}, the total variance of the global estimator expands as a sum of mutually orthogonal, centered Hoeffding projections,
    \begin{equation*}
        \mathrm{Var}_{\rho^{\otimes n}}(O_n^{\text{sym}}) = \sum_{i=1}^k \binom{k}{i}^2\binom{n}{i}^{-1} \mathrm{Var}_{\rho^{\otimes i}}(H_{k,i}^{\text{sym}})\,.
    \end{equation*}
    
    By Lemma \ref{lemma:generic_poly_algebraic}, the first-order marginal kernel $O_{k,1}^{\text{sym}}$ is geometrically equivalent to the gradient. If $\nabla f(\rho) = c\mathbb{I}$, then $O_{k,1}^{\text{sym}}$ is also a scalar multiple of the identity matrix. By definition, the first-order centered Hoeffding projection is $H_{k,1}^{\text{sym}} = O_{k,1}^{\text{sym}} - f(\rho)\mathbb{I}$. Since shifting a scalar identity matrix by another scalar identity matrix yields a scalar identity matrix, the variance of $H_{k,1}^{\text{sym}}$ evaluates exactly to zero.
    
    Consequently, the $i=1$ term in the Hoeffding sum strictly vanishes. The dominant leading-order variance is therefore forced to the $i=2$ term. Expanding the combinatorial weights for $i=2$ yields,
    \begin{align*}
        \mathrm{Var}_{\rho^{\otimes n}}(O_n^{\text{sym}}) &= \binom{k}{2}^2\binom{n}{2}^{-1} \mathrm{Var}_{\rho^{\otimes 2}}(H_{k,2}^{\text{sym}}) + \mathcal{O}\left(\frac{1}{n^3}\right) \\
        &= \frac{k^2(k-1)^2}{4} \left( \frac{2}{n(n-1)} \right) \mathrm{Var}_{\rho^{\otimes 2}}(H_{k,2}^{\text{sym}}) + \mathcal{O}\left(\frac{1}{n^3}\right) \\
        &= \frac{k^2(k-1)^2}{2n(n-1)} \mathrm{Var}_{\rho^{\otimes 2}}(H_{k,2}^{\text{sym}}) + \mathcal{O}\left(\frac{1}{n^3}\right) \\
        &= \frac{k^2(k-1)^2}{2n^2} \mathrm{Var}_{\rho^{\otimes 2}}(H_{k,2}^{\text{sym}}) + \mathcal{O}\left(\frac{1}{n^3}\right)\,,
    \end{align*}
    where the final equality follows from the asymptotic expansion $\frac{1}{n(n-1)} = \frac{1}{n^2}(1 - 1/n)^{-1} = \frac{1}{n^2} + \mathcal{O}(1/n^3)$. To connect the remaining variance to the functional's geometry, we evaluate the second directional derivative of $f(\rho)$ along the path $\rho_t = \rho + tX$. By standard Taylor expansion, the second derivative defines the Hessian operator $\nabla^2 f(\rho) \in \mathcal{B}(\mathcal{H}^{\otimes 2})$ acting on two copies of the perturbation,
    \begin{equation*}
        \frac{d^2}{dt^2} f(\rho_t) \bigg|_{t=0} = \mathrm{Tr}\big[ \nabla^2 f(\rho) X^{\otimes 2} \big]\,.
    \end{equation*}
    
    We simultaneously evaluate this second derivative structurally using the $k$-copy permutation-invariant kernel expectation, $f(\rho_t) = \mathrm{Tr}[O_k^{\text{sym}} (\rho + tX)^{\otimes k}]$. Taking the second derivative with respect to $t$ isolates the terms in the tensor expansion that contain exactly two copies of $X$ and $k-2$ copies of $\rho$. There are $k(k-1)$ such ordered pairs in the expansion. Since $O_k^{\text{sym}}$ is invariant under any permutation of its subsystems, every pair yields an identical trace evaluation. Consolidating these terms yields,
    \begin{align*}
        \frac{d^2}{dt^2} \mathrm{Tr}\big[O_k^{\text{sym}} (\rho + tX)^{\otimes k}\big] \bigg|_{t=0} &= k(k-1) \mathrm{Tr}\big[ O_k^{\text{sym}} (X \otimes X \otimes \rho^{\otimes (k-2)}) \big] \\
        &= k(k-1) \mathrm{Tr}\big[ \big( \mathrm{Tr}_{3 \dots k}[O_k^{\text{sym}} (\mathbb{I}^{\otimes 2} \otimes \rho^{\otimes (k-2)})] \big) X^{\otimes 2} \big] \\
        &= \mathrm{Tr}\big[ k(k-1) O_{k,2}^{\text{sym}} X^{\otimes 2} \big]\,,
    \end{align*}
    where $O_{k,2}^{\text{sym}}$ is exactly the second-order marginal kernel. Equating the algebraic and structural evaluations for arbitrary traceless $X$ dictates that the Hessian operator is geometrically proportional to the second-order marginal kernel,
    \begin{equation*}
        \nabla^2 f(\rho) = k(k-1) O_{k,2}^{\text{sym}} - C_2(\rho)\mathbb{I}^{\otimes 2}\,,
    \end{equation*}
    where $C_2(\rho)$ is a scalar shift resulting from identity-padding.

    The second-order centered Hoeffding projection is explicitly defined as $H_{k,2}^{\text{sym}} = O_{k,2}^{\text{sym}} - O_{k,1}^{\text{sym}} \otimes \mathbb{I} - \mathbb{I} \otimes O_{k,1}^{\text{sym}} + f(\rho)\mathbb{I}^{\otimes 2}$. Since $O_{k,1}^{\text{sym}} = c\mathbb{I}$ by our degeneracy assumption, all lower-order subtraction terms are strictly scalar multiples of the identity matrix $\mathbb{I}^{\otimes 2}$. 
    
    Since the quantum variance is perfectly invariant under scalar identity shifts, the variance of the centered projection equals the variance of the uncentered marginal, $\mathrm{Var}_{\rho^{\otimes 2}}(H_{k,2}^{\text{sym}}) = \mathrm{Var}_{\rho^{\otimes 2}}(O_{k,2}^{\text{sym}})$. 
    
    Substituting the geometric Hessian relation $\nabla^2 f(\rho) = k(k-1) O_{k,2}^{\text{sym}} - C_2(\rho)\mathbb{I}^{\otimes 2}$ into the isolated $i=2$ variance term gives,
    \begin{align*}
        \mathrm{Var}_{\rho^{\otimes n}}(O_n^{\text{sym}}) &= \frac{k^2(k-1)^2}{2n^2} \mathrm{Var}_{\rho^{\otimes 2}}(O_{k,2}^{\text{sym}}) + \mathcal{O}\left(\frac{1}{n^3}\right) \\
        &= \frac{1}{2n^2} \mathrm{Var}_{\rho^{\otimes 2}}\big( k(k-1) O_{k,2}^{\text{sym}} \big) + \mathcal{O}\left(\frac{1}{n^3}\right) \\
        &= \frac{1}{2n^2} \mathrm{Var}_{\rho^{\otimes 2}}\big( \nabla^2 f(\rho) \big) + \mathcal{O}\left(\frac{1}{n^3}\right)\,.
    \end{align*}
    This completes the proof.
\end{proof}

More generally, this geometric variance hierarchy extends to any arbitrary order $r \le k$, where $k$ is the degree of the input polynomial to $f(\rho)$. If the estimation task is completely degenerate up to order $(r-1)$, meaning the first $(r-1)$ directional derivatives of the functional behave as scalar identity shifts at the true state, then the leading non-zero term in the Hoeffding decomposition shifts exactly to the $r^{\text{th}}$-order component. In such highly degenerate regimes, the asymptotic variance exhibits a suppressed polynomial decay scaling as $\mathcal{O}(1/n^r)$, and its exact leading coefficient is governed by the intrinsic variance of the $r^{\text{th}}$-order tensor gradient $\nabla^r f(\rho)$.

\section{Proofs Regarding the Bures $\chi^2$-Divergence and Inverse SLD}\label{app:C}

\subsection{Proof of Lemma \ref{lem:integral_measured_chi2}}\label{subsec:integral_measured_chi2}

To find the specific operator $Y^*$ that satisfies the symmetric logarithmic derivative equation $\rho = \frac{1}{2}(Y^*\sigma + \sigma Y^*)$, we map our quantum variables to the continuous-time Lyapunov form mentioned in Proposition \ref{prop:lyapunov} by setting,
\begin{align*}
    Q &= 2\rho, \\
    A &= -\sigma, \\
    A^\dagger &= -\sigma^\dagger = -\sigma, \\
    X &= Y^*.
\end{align*}
Substituting these into the standard Lyapunov equation yields
\begin{equation}
    -\sigma Y^* - Y^*\sigma + 2\rho = 0 \iff \rho = \frac{1}{2}(Y^*\sigma + \sigma Y^*).\label{eq:Y_star}
\end{equation}
Applying the integral solution to these specific matrices provides the explicit continuous-time form for $Y^*$,
\begin{equation}
    Y^* = \int_0^\infty e^{-\tau\sigma} (2\rho) e^{-\tau\sigma} d\tau = 2 \int_0^\infty e^{-\tau\sigma} \rho e^{-\tau\sigma} d\tau.\nonumber
\end{equation}
Using Definition \ref{def:bures}, the Bures $\chi^2$-divergence evaluates to,
\begin{equation}
  \chi^2_{\mathrm{B}}(\rho \| \sigma) = \mathrm{Tr}[\rho Y^*] - 1,\nonumber
\end{equation}
where $Y^\star$ satisfies the condition mentioned in Equation  \eqref{eq:Y_star}. Substituting the integral representation of $Y^*$ into the above trace formula gives,
\begin{align}
   \chi^2_{\mathrm{B}}(\rho \| \sigma) &= \mathrm{Tr}\left[\rho \left( 2 \int_0^\infty e^{-\tau\sigma} \rho e^{-\tau\sigma} d\tau \right)\right] - 1 \nonumber \\
    &= 2 \int_0^\infty \mathrm{Tr}[\rho e^{-\tau\sigma} \rho e^{-\tau\sigma}] d\tau - 1 \nonumber \\
    &= 2 \int_0^\infty \mathrm{Tr}[(\rho e^{-\tau\sigma})^2] d\tau - 1.\nonumber
\end{align}
This concludes the proof of the integral representation.\hfill\QED

\subsection{Proof of Corollary \ref{cor:spectral_omega}}\label{subsec:spectral_omega}

From the continuous-time derivation in Lemma \ref{lem:integral_measured_chi2}, the super-operator applied to $\rho$ is given by the integral 
\begin{equation*}
    \Omega_{\sigma}(\rho) = 2 \int_0^\infty e^{-\tau\sigma} \rho e^{-\tau\sigma} d\tau\,. 
\end{equation*}

Now, using the spectral decomposition of $\sigma$, the matrix exponential can be written as $e^{-\tau\sigma} = \sum_{i=1}^d e^{-\tau\lambda_i} |i\rangle \langle i|$. Substituting this into the integral yields,
\begin{align}
    \Omega_{\sigma}(\rho) &= 2 \int_0^\infty \left( \sum_{j=1}^d e^{-\tau\lambda_j} |j\rangle \langle j| \right) \rho \left( \sum_{k=1}^d e^{-\tau\lambda_k} |k\rangle \langle k| \right) d\tau \nonumber \\
    &= 2 \int_0^\infty \sum_{j,k=1}^d e^{-\tau(\lambda_j + \lambda_k)} |j\rangle \langle j| \rho |k\rangle \langle k| d\tau.
\end{align}
Exploiting the linearity of the integral, we interchange the summation and the integration to obtain,
\begin{equation}
    \Omega_{\sigma}(\rho) = \sum_{j,k=1}^d 2 \left( \int_0^\infty e^{-\tau(\lambda_j + \lambda_k)} d\tau \right) |j\rangle \langle j| \rho |k\rangle \langle k|.
\end{equation}
Since $\sigma$ is full-rank, its eigenvalues are strictly positive, ensuring $\lambda_j + \lambda_k > 0$. We evaluate the definite integral exactly as,
\begin{equation}
    \int_0^\infty e^{-\tau(\lambda_j + \lambda_k)} d\tau = \left[ \frac{e^{-\tau(\lambda_j + \lambda_k)}}{-(\lambda_j + \lambda_k)} \right]_0^\infty = \frac{1}{\lambda_j + \lambda_k}.
\end{equation}
Substituting this evaluated integral back into the summation yields the explicit spectral form,
\begin{equation}
    \Omega_{\sigma}(\rho) = \sum_{j,k=1}^d \frac{2}{\lambda_j + \lambda_k} |j\rangle \langle j| \rho |k\rangle \langle k|.
\end{equation}
This concludes the proof.\hfill\QED

\end{document}